\newcommand{\size}[1]{\left| #1 \right|}
\newcommand{\E}{\mathbb{E}}
\newcommand{\V}{\mathbb{V}}
\newcommand{\remove}[1]{}
\newcommand{\R}{\mathbb{R}}
\newcommand{\N}{\mathbb{N}}
\newcommand{\cT}{\mathcal{T}}
\newcommand{\cE}{\mathcal{E}}
\newcommand{\cB}{\mathcal{B}}
\newcommand{\cD}{\mathcal{D}}
\newcommand{\cP}{\mathcal{P}}
\newcommand{\cG}{\mathcal{G}}
\newcommand{\cO}{\mathcal{O}}
\newcommand{\Oh}{\mathcal{O}}
\newcommand{\tOh}{\widetilde{\mathcal{O}}}
\newcommand{\cF}{\mathcal{F}}
\newcommand{\cH}{\mathcal{H}}
\newcommand{\cN}{\mathcal{N}}
\newcommand{\cX}{\mathcal{X}}
\newcommand{\eps}{\epsilon}
\newcommand{\tis}{{\sc TIS}\xspace}
\newcommand{\bis}{{\sc BIS}\xspace}
\newcommand{\colored}{{\sc COLORED}\xspace}
\newcommand{\test}{{\sc Triangle-Estimation}\xspace}
\newcommand{\pr}{{\mathbb{P}}\xspace}
\newcommand{\verest}{{\sc Verify-Estimate}\xspace}
\newcommand{\cest}{{\sc Coarse-Estimate}\xspace}
\newcommand{\act}{{\sc Active}}
\theoremstyle{plain}
\newtheorem{theo}{Theorem}
\newtheorem{lem}[theo]{Lemma}
\newtheorem{pro}[theo]{Proposition}
\newtheorem{cl}[theo]{Claim}
\theoremstyle{definition}
\newtheorem{defi}[theo]{Definition}
\newtheorem{rem}{Remark}
\newtheorem{obs}[theo]{Observation}
\newcommand{\defproblem}[3]{
  \vspace{1mm}
\noindent\fbox{
  \begin{minipage}{0.96\textwidth}
  \begin{tabular*}{\textwidth}{@{\extracolsep{\fill}}lr} #1 \\ \end{tabular*}
  {\bf{Input:}} #2  \\
  {\bf{Output:}} #3
  \end{minipage}
  }
  \vspace{1mm}
}
\begin{document}
\title{On Triangle Estimation using Tripartite Independent Set
  Queries~\footnote{ A preliminary version of this paper has been
    accepted in ISAAC'19}
\remove{\footnote{The current work on triangle estimation using \tis was the
  first work to estimate a structure more complex than edge in graphs
  using a query oracle modeled on the lines of group queries. Since
  this paper, there have been two other
  works~\cite{BhattaBGM18,DellLM20} that deal with generalizations of
  this work. The significance of the current worklies in the fact that
  it initiated this line of research in estimating structures like
  triangles and beyond in graphs using query oracles modeled along the
  line of group queries, as Beame et al.~\cite{BeameHRRS18} had done
  on edge estimation using \bis.}
}
}

\author{
Anup Bhattacharya \footnote{Indian Statistical Institute, Kolkata, India}
\and 
Arijit Bishnu \footnotemark[2]
\and
Arijit Ghosh \footnotemark[2]
\and
Gopinath Mishra \footnotemark[2]
}
\date{}
\maketitle

\maketitle

\begin{abstract}
Estimating the number of triangles in a graph is one of the most fundamental problems in sublinear algorithms. In this work, we provide an algorithm that approximately counts the number of triangles in a graph using only polylogarithmic queries when \emph{the number of triangles on any edge in the graph is polylogarithmically bounded}. Our query oracle {\em Tripartite Independent Set} (TIS) takes three disjoint sets of vertices $A$, $B$ and $C$ as inputs, and answers whether there exists a triangle having one endpoint in each of these three sets. Our query model generally belongs to the class of \emph{group queries} (Ron and Tsur, ACM ToCT, 2016; Dell and Lapinskas, STOC 2018) and in particular is inspired by the {\em Bipartite Independent Set} (BIS) query oracle of Beame {\em et al.} (ITCS 2018).\remove{Their algorithm for edge estimation requires only polylogarithmic BIS queries, where a BIS query takes two disjoint sets $X$ and $Y$ as input and answers whether there is an edge with endpoints in $A$ and $B$.} We extend the algorithmic framework of Beame {\em et al.}, with \tis replacing \bis, for approximately counting triangles in graphs. 

\remove{counting using ideas from color coding due to Alon et al. (J. ACM, 1995) and a concentration inequality for sums of random variables with bounded dependency (Janson, Rand. Struct. Alg., 2004).}

\remove{Even though our algorithmic framework is similar to that of Beame {\em et al.}, our analysis for triangle counting is considerably more involved.}

\paragraph{Keywords.} Triangle estimation, query complexity, and sublinear algorithm.
\end{abstract}
\newpage


\section{Introduction} \label{sec:intro} 
\noindent
Counting the number of triangles in a graph is a fundamental algorithmic problem in the RAM model~\cite{AlonYZ97, BjorklundPWZ14, ItaiR78}, streaming~\cite{AhmedDNK14, AhnGM12, Bar-YossefKS02, BuriolFLMS06,  CormodeJ17,   JhaSP13, JowhariG05, KallaugherP17,   KaneMSS12,  PavanTTW13, TangwongsanPT13} and the query model~\cite{ EdenLRS15, GonenRS11}.
In this work, we provide the first approximate triangle counting algorithm using only polylogarithmic queries to a query oracle named \emph{Tripartite Independent Set} (\tis).

\subsection{Notations, the query model, the problem and the result}
\noindent
We denote the set $\{1,\ldots,n\}$ by $[n]$. Let $V(G)$, $E(G)$ and $T(G)$ denote the set of vertices, edges and triangles in the input graph $G$, respectively. When the graph $G$ is explicit, we may write only $V$, $E$ and $T$ for the set of vertices, edges and triangles.
Let $t(G)=\size{T(G)}$. The statement $A, \, B, \, C$ are disjoint, means $A, \, B, \, C$ are pairwise disjoint. For three non-empty disjoint sets $A, \, B, \, C \subseteq V(G)$, $G(A, \, B, \, C)$, termed as a \emph{tripartite subgraph} of $G$, denotes the induced subgraph of $A \cup B \cup C$ in $G$ minus the edges having both endpoints in $A$ or $B$ or $C$. $t(A,B,C)$ denotes the number of triangles in $G(A,B,C)$. We use the triplet $(a, \, b, \, c)$ to denote the triangle having $a, \, b, \, c$ as its vertices. Let $\Delta_u$ denote the number of triangles having $u$ as one of its vertices. Let $\Delta_{(u,v)}$ be the number of triangles having $(u,v)$ as one of its edges and $\Delta_E = \max_{(u,v) \in E(G)}\Delta_{(u,v)}$. For a set $\cal U$, ``$\cal U$ is \colored with [n]'', means that each member of $\cal U$ is assigned a color out of $[n]$ colors independently and uniformly at random. 
Let $\E[X]$ and $\V[X]$ denote the expectation and variance of a random variable $X$. For an event $\cE$, $\cE^c$ denotes the complement of $\cE$. The statement ``$a$ is an $(1 \pm \eps)$-multiplicative approximation of $b$'' means $\size{b-a} \leq \eps \cdot b$. 
Next, we describe the query oracle.
\begin{defi}[Tripartite independent set oracle (\tis)]
Given three non-empty disjoint subsets $V_1,V_2,V_3 \subseteq V(G)$ of a graph $G$, \tis{} query oracle answers `YES' if and only if $t(V_1, \, V_2, \, V_3) \neq 0$.
\end{defi}
Notice that the query oracle looks at only those triangles that have vertices in all of these sets $V_1, \, V_2, \, V_3$.
\noindent 
\remove{The problem definition and our main result are given as follows.}
The \test problem is to report an $(1 \pm \eps)$-multiplicative approximation of $t(G)$ where the input is $V(G)$, \tis oracle for graph $G$ and $\eps\in (0,1)$.
\remove{
\defproblem{\test}{Set of vertices $V(G)$, \tis oracle for graph $G$ and $\eps\in (0,1)$.}{$1 \pm \eps$ multiplicative approximation of $t(G)$.}
}
\begin{theo}[Main result]
\label{theo:main-restate}
{
Let $G$ be a graph with $\Delta_E \leq d$, $\size{V(G)}=n \geq 64$. For any
$\eps >0$, \test can be solved using $\cO\left( \frac{d^{2}\log ^ {18} n}{\eps^{4}} \right)$ many \tis queries with probability {at least} $1-\frac{\cO(1)}{n^{2}}$.
}
\end{theo}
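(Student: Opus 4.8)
The plan is to transplant the \bis-based edge-estimation framework of Beame et al.\ to the \tis\ oracle, using random coloring (color coding) to reduce triangle counting to counting in tripartite subgraphs, and a concentration inequality for sums of boundedly dependent indicators to absorb the correlations between triangles. The first step is the reduction: color $V(G)$ with three colors independently and uniformly at random, obtaining a random partition $V(G)=V_1\sqcup V_2\sqcup V_3$; a fixed triangle is split across the three parts with probability $6/27$, so $\E[t(V_1,V_2,V_3)]=\tfrac{2}{9}\,t(G)$. Averaging $t(V_1,V_2,V_3)$ over $\Theta(\mathrm{polylog}\,n)$ independent colorings then estimates $t(G)$, \emph{provided} (a)~we can estimate $t(A,B,C)$ for given disjoint $A,B,C$ with few \tis\ queries, and (b)~the average concentrates; step (b) is exactly where the hypothesis $\Delta_E\le d$ is used.

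For (a) I would build a \cest-type routine and a \verest-type routine. \cest\ subsamples each of $A,B,C$ at geometrically decreasing rates $p=2^{-i}$ for $i=0,1,\dots,O(\log n)$ and issues a few \tis\ queries at each rate; the largest $i$ at which a triangle still survives (after repetition and a median) locates $t(A,B,C)$ up to a $\mathrm{polylog}(n)$ factor. Given a coarse handle $\widehat t$ for $t(A,B,C)$, I would subsample $A,B,C$ at the rate that brings the expected number of surviving triangles into a ``sparse window'' of size $\Theta(\mathrm{polylog}\,n)$, and then count the surviving triangles nearly exactly by recursive bisection: repeatedly split $A$, $B$, $C$ in half, query each of the $8$ children with \tis, and recurse only into the children that answer \textsc{yes}. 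Since every surviving triangle lies in exactly one child, the number of active boxes per level is at most the sparse bound, so $O(\mathrm{polylog}\,n\cdot\log n)$ \tis\ queries suffice to isolate all surviving triangles in singleton boxes and count them; multiplying by the inverse sampling probability gives an estimate of $t(A,B,C)$.

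These pieces compose into a single estimator for $t(G)$: random tripartition, then \cest\ on $t(V_1,V_2,V_3)$, then subsampling into the sparse window, then the bisection count, then rescaling, then averaging over $\Theta(\mathrm{polylog}\,n)$ repetitions of the whole chain. Every sub-step is amplified to failure probability $n^{-\Omega(1)}$ by $\Theta(\log n)$-fold repetition and a median (Chernoff bound), and a union bound over the $\mathrm{poly}(n)$ sub-steps delivers the claimed overall success probability $1-\cO(1)/n^2$. The nested sources of $\log n$ — $O(\log n)$ sampling scales, $O(\log n)$ bisection levels, and $\Theta(\log n)$ amplification repetitions, each feeding into the outer $\mathrm{polylog}$ averaging — are what produce the large exponent on $\log n$, while the $d^2/\eps^4$ factor comes from the variance bounds of the color-coding and sparse-counting stages together with the $(1\pm\eps)$ accuracy requirement.

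The main obstacle is the concentration analysis, and in particular controlling the variance of the color-coded and subsampled counts. Such a count is a sum of indicator variables, one per triangle, and two of these indicators are dependent precisely when the triangles share a vertex or an edge. Triangles sharing a given edge number at most $\Delta_E\le d$, which is essentially the only structural handle available; I would expand the variance as a sum of covariances, partition the pairs of triangles according to whether they are vertex-disjoint, share exactly one vertex, or share an edge, bound the edge-sharing contribution using $\Delta_E\le d$ (this is the origin of a factor of $d$, with a second factor appearing when the same argument is applied inside the sparse-counting stage), and control the vertex-sharing and disjoint contributions via the independence of the random coloring across vertices, possibly with an additional layer of coloring to break up triangles through a common high-degree vertex. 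Having bounded these moments, a concentration inequality for sums of random variables with bounded dependency (in the spirit of Janson) finishes each stage; the delicate points are (i)~checking that \cest\ is accurate enough to calibrate the sampling rate so that the sparse window is genuinely reached, and (ii)~propagating the dependency and moment bounds coherently through the nested sampling, which is the technically heavy core of the argument.
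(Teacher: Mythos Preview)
Your high-level architecture matches the paper in several places (color coding to reduce to tripartite counting, a coarse estimator based on geometric subsampling, exact counting by recursive bisection), but the core step — ``subsample $A,B,C$ at the rate that brings the expected number of surviving triangles into a sparse window, then count'' — has a genuine gap. The variance expansion you sketch does not close: under independent vertex-subsampling at rate $p$, two tripartite triangles sharing a single vertex $v$ have covariance $\Theta(p^5)$, and the number of such pairs is $\sum_v \binom{\Delta_v}{2}$. The hypothesis $\Delta_E\le d$ says nothing about $\Delta_v$; a single vertex can lie in $\Theta(t)$ triangles even when every edge lies in at most one. In that regime the variance is $\Theta(t^2 p^5)$ against a squared mean of $t^2p^6$, so the relative standard deviation is $\Theta(p^{-1/2})$, and to reach a polylogarithmic window you need $p$ polynomially small in $n$. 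No polylogarithmic amount of repetition and median-of-means repairs this, and your ``additional layer of coloring to break up triangles through a common high-degree vertex'' is not an afterthought — it is exactly the hard part.

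The paper sidesteps this obstacle in two linked ways. First, it never subsamples in one shot: each sparsification step (Lemmas~\ref{theo:sparse} and~\ref{theo:sparse1}) only reduces the active triangle count by a constant factor, and this is iterated $O(\log n)$ times; importance sampling (Lemma~\ref{lem:importance}) keeps the number of live tripartite pieces polylogarithmic across the iterations. Second, and more importantly, the concentration for a single sparsification is \emph{not} proved by a second-moment expansion but by the method of averaged bounded differences: changing the color of one vertex $t$ alters the count by at most $c_t=O\!\left(d\sqrt{\Delta_t\log n}\right)$ with high probability, so $\sum_t c_t^2=O\!\left(d^2\log n\sum_t\Delta_t\right)=O\!\left(d^2\, t(G)\log n\right)$ regardless of $\max_v\Delta_v$. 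The crucial $\sqrt{\Delta_t}$ (rather than $\Delta_t$) is obtained by applying Janson's dependent Chernoff bound to the indicators ``the pair $(a,b)$ is properly colored'' over all pairs with $(a,b,t)$ a triangle; those indicators have dependency degree at most $2d$ precisely because $\Delta_E\le d$. This is where the assumption $\Delta_E\le d$ actually enters, and it is not captured by your covariance decomposition.
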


Note that the query complexity stated in Theorem~\ref{theo:main-restate} is $\mbox{poly}(\log n, \frac{1}{\eps})$, even if $d$ is $\Oh(\log^c n)$, where $c$ is a positive constant. We reiterate that the only bound we require is on the number of triangles on an edge; neither do we require any bound on the maximum degree of the graph, nor do we require any bound on the number of triangles incident on a vertex.

\subsection{Query models and \tis}
\noindent
Query models for graphs are essentially of two types: Local Queries, and  Group (and related) queries. 

\paragraph{\em Local Queries.} This query model was initiated by Feige~\cite{Feige06} and Goldreich and Ron~\cite{GoldreichR08} and used even recently by~\cite{EdenLRS15,EdenRS18}. The queries on the graphs are 
\begin{description}
  \item[(i)] {Degree query:} the oracle reports the degree of a vertex; 

  \item[(ii)] {Neighbor query:} the oracle reports the $i^{\mbox{\tiny{th}}}$ neighbor of $v$, if it exists; and 

  \item[(iii)] {Edge existence query:} the oracle reports whether there exists an edge between a given pair of vertices.
\end{description}

\paragraph{\em Group Queries or Subset queries and Subset samples.} These queries were implicitly initiated in the works of Stockmeyer~\cite{Stockmeyer83,Stockmeyer85} and formalized by Ron and Tsur~\cite{RonT16}. \emph{Group queries} can be viewed as a generalization of membership queries in sets. The essential idea of the group queries is to estimate the size of an unknown set $S \subseteq U$ by using a YES/NO answer from the oracle to the existence of an intersection between sets $S$ and $T \subseteq U$; and give a uniformly selected item of $S \cap T$, if $S \cap T \not= \emptyset$ in the \emph{subset sample} query. Subset sample queries are at least as powerful as group queries. The \emph{cut query} by Rubinstein et al.~\cite{RubinsteinSW18}, though motivated by submodular function minimization problem, can also be seen in the light of group queries --- we seek the number of edges that intersect both the vertex sets that form a cut. Choi and Kim~\cite{ChoiK08} used a variation of group queries for \emph{graph reconstruction}. Dell and Lapinskas~\cite{DellL18} essentially used this class of queries for estimating the number of edges 
in a bipartite graph. 
{\em Bipartite independent set} (\bis) queries for a graph, initiated by Beame et al.~\cite{BeameHRRS18}, can also be seen in the light of group queries. It provides a YES/NO answer to the existence of an edge in $E(G)$ that intersects with both $V_1, V_2 \subset V(G)$ of $G$, where $V_1$ and $V_2$ are disjoint. A subset sample version of \bis oracle was used in~\cite{BishnuGKM018}. 

In \tis, we seek a YES/NO answer about the existence of an intersection between the set of triangles, that we want to estimate, and three disjoint sets of vertices. Thus \tis belongs to the class of group queries, as does \bis. 
A bone of contention for any newly introduced query oracle is its worth\footnote{See \url{http://www.wisdom.weizmann.ac.il/~oded/MC/237.html} for a comment on \bis.}. Beame {\em et al.}~\cite{BeameHRRS18} had given a subjective justification in favor of \bis to establish it as a query oracle. It is easy to verify that \tis, being in the same class of group queries, have the interesting connections to group testing and computational geometry as \bis. We provide justifications in favor of considering $\Delta_E  \leq d$ in Appendix~\ref{sec:deltabound}. Intuitively, \tis is to triangle counting what \bis is to edge estimation.

\subsection{Prior works}
\noindent
Eden et al.~\cite{EdenLRS15} showed that query complexity of estimating the number of triangles in a graph $G$ using local queries is
$\widetilde{\Theta}\left( \frac{\size{V(G)}}{t(G)^{1/3}} + \min \left\lbrace \frac{\size{E(G)}^{3/2}}{t(G)}, \size{E(G)} \right\rbrace\right)$\footnote{$\tOh(\cdot)$ hides a polynomial factor of $\log n$ and $\frac{1}{\eps}$, where $\eps \in (0,1)$ is such that $(1-\eps) t \leq \hat{t} \leq (1+\eps)t$; $\hat{t}$ and $t$ denote the estimated and actual number of triangles in $G$, respectively.}.
Matching upper and lower bounds on $k$-clique counting in $G$ using local query model have also been reported~\cite{EdenRS18}. 
These results have almost closed the $k$-clique counting problem in graphs using local queries. 
A precursor to triangle estimation in graphs is edge estimation. 
The number of edges in a graph $G$ can be estimated by using $\tOh\left( \frac{\size{V(G)}}{\sqrt{\size{E(G)}}}\right)$ many degree and neighbor 
queries, and $\Omega\left(\frac{\size{V(G)}}{\sqrt{\size{E(G)}}}\right)$ queries are necessary to estimate the number of edges even if we allow all the three local queries~\cite{GoldreichR08}. 
This result would almost have closed the edge estimation problem but for having a relook at the problem with stronger query models and hoping for polylogarithmic number of queries. 
Beame {\em et al.}~\cite{BeameHRRS18} precisely did that by estimating the number of edges in a graph 
using $\Oh \left( \frac{\log ^{14} n}{\eps^{4}} \right)$ {\em bipartite independent set} (\bis) queries. 
Motivated by this result, we explore whether triangle estimation can be solved using only polylogarithmic \tis queries.

Note that the \test can also be thought of as {\sc Hyperedge Estimation} problem in a $3$-uniform hypergrah. As a follow up to this paper, Dell et al.~\cite{DellLM20} and Bhattacharya et al.~\cite{BhattaBGM18}, independently, generalized our result to $c$-uniform hypergraphs, where $c \in \N$ is a constant. Their result showed that the bound on $\Delta_E$ is not necessary to solve \test by using polylogarithmic many \tis queries.

\subsection*{Organization of the paper}
\noindent
We give a broad overview of the algorithm in Section~\ref{sec:algooverview}. Section~\ref{sec:sparse} gives the details of  sparsification. In Section~\ref{sec:coarse}, we give exact/approximate estimation algorithm with respect to a threshold. Section~\ref{sec:coarse_new}  discusses about teh algorithm for \emph{coarse} estimation of the number of triangles. The final algorithm is given in Section~\ref{sec:finaltrialgo-app}. {Section~\ref{sec:conclude} concludes the paper with some discussions about future improvements.} Appendix~\ref{sec:deltabound} provides justifications in favor of \tis. Appendix~\ref{sec:prelim} has the probabilistic results used in this paper.

\section{Overview of the algorithm}
\label{sec:algooverview}
\noindent 
Our algorithmic framework \remove{uses \emph{sparsification}, \emph{subgraph sampling} and \emph{coarse estimation} as in} is inspired by~\cite{BeameHRRS18} but the detailed analysis is markedly different, like the use of 
a relatively new concentration inequality, due to Janson~\cite{Janson04}, for handling sums of random variables with bounded dependency. 
Apart from Lemmas~\ref{lem:exact} and~\ref{lem:importance}, all other proofs require different ideas. 

\remove{
The analysis in~\cite{BeameHRRS18} does not go through in our case because of a subtle difference between counting the number of triangles and the number of edges in a graph. An edge is an  explicit structure, whereas, a triangle is an implicit structure --- triangles $(a,b,x)$, $(b,c,y)$ and $(a,c,z)$ in $G$ imply the existence of the triangle $(a,b,c)$ in $G$. (Gopi: Do we need to mention this?)
}

\begin{figure}[h!]
  \centering
  \includegraphics[width=1\linewidth]{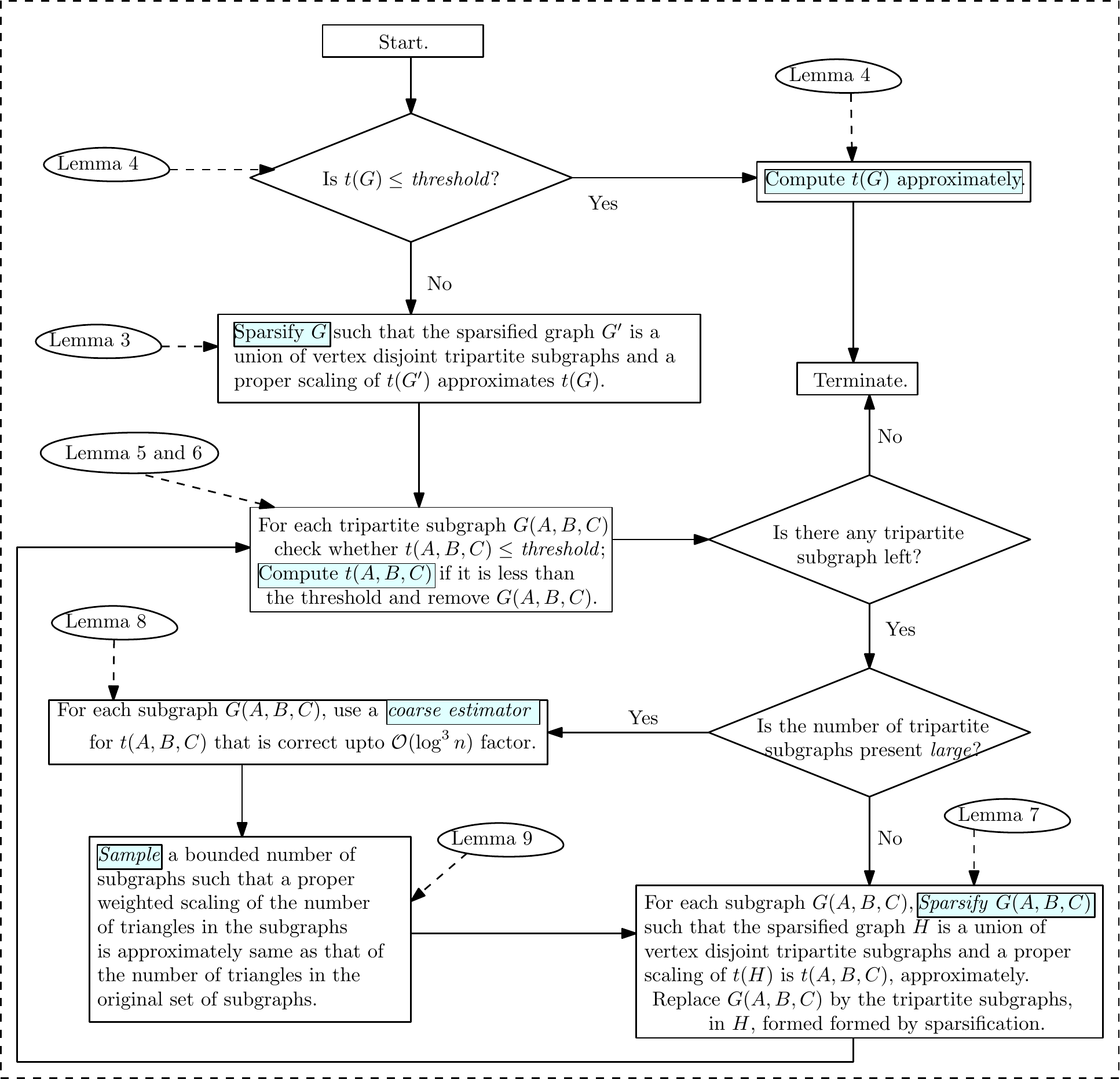}
  {\caption{Flow chart of the algorithm. The highlighted texts indicate the basic building blocks of the algorithm. We also indicate the corresponding lemmas that support the building blocks.}}
  \label{fig:flowchart}
\end{figure}
In Figure~\ref{fig:flowchart}, we give a \emph{flowchart} of the algorithm and show the corresponding 
lemmas that support the steps of the algorithm. 
The main idea of our algorithm is as follows. We can figure out for a given $G$, if the number of triangles $t(G)$ 
is greater than a threshold $\tau$ ((Lemma~\ref{lem:exact_decide})). If $t(G) \leq \tau$, i.e., $G$ is sparse in triangles, we compute an $(1\pm \eps)$-approximation of $t(G)$ (Lemma~\ref{lem:exact_decide}). Otherwise, we sparsify $G$ to get a disjoint union of tripartite subgraphs 
of $G$ that maintain $t(G)$ up to a scaling factor (Lemma~\ref{theo:sparse}). For each tripartite subgraph, if the 
subgraph is sparse (decided by Lemma~\ref{lem:decide_exact}), we count the number of triangles exactly 
(Lemma~\ref{lem:exact}). Otherwise, we again sparsify (Lemma~\ref{theo:sparse1}). This repeated process of 
sparsification may create a huge number of tripartite subgraphs. Counting the number of triangles in them is 
managed by doing a coarse estimation (Lemma~\ref{lem:coarse_main}) and taking a sample of the subgraph 
that maintains the number of triangles \emph{approximately}. Each time we sparsify, we ensure that the sum 
of the number of triangles in the subgraphs generated by sparsification is a constant fraction of the number 
of triangles in the graph before sparsification, making the number of iterations $O(\log n)$.

\remove{
The basic building blocks of the algorithm are: two kinds of \emph{sparsification} routines (one for general graph and another for tripartite graph), a \emph{coarse estimator}, a \emph{sampling} scheme of the subgraphs and two algorithms for exactly counting the number of triangles (one for general graph and another for tripartite graph) when the number of triangles is not too \emph{large}.}

 We sparsify $G$ by considering the partition obtained when $V(G)$ is COLORED with $[3k]$. This sparsification is done such that:  (i) the sparsified graph is a union of a set of vertex disjoint tripartite subgraphs and  (ii) a proper scaling of the number of triangles in the sparsified graph is a \emph{good} estimate of $t(G)$ with high probability\footnote{High probability means that the probability of success is at least $1-\frac{1}{n^c}$ for some constant $c$.}. The proof of the sparsification result stated next uses the \emph{method of averaged bounded differences} and \emph{Chernoff-Hoeffding type} inequality in bounded dependency setting by Janson~\cite{Janson04}. The detailed proof is in Section~\ref{sec:sparse}. Recall that $\Delta_E$ is the maximum number of triangles on a particular edge. 
\begin{lem}[General Sparsification]
\label{theo:sparse} Let $k,d \in \N$. There exists a constant $\kappa_1$ such that for any graph $G$ with $\Delta_E \leq d$, if $V_1,\ldots,V_{3k}$ is a random partition of $V(G)$ obtained by 
$V(G)$ being \colored with $[3k]$, then 
$$  
	\pr\left(\size{\frac{9k^2}{2}\sum\limits_{i=1}^kt(V_i,V_{k+i},V_{2k+i}) - t(G)} 
		>  \kappa_1 d k^2 \sqrt{t(G) } \log n \right)~\leq~\frac{2}{n^4}.
$$
\end{lem}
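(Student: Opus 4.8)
\paragraph{Unbiasedness.}
The plan is to first show the estimator is unbiased and then prove concentration by a Doob martingale. Write $X=\sum_{i=1}^{k}t(V_i,V_{k+i},V_{2k+i})$, and for $T=(a,b,c)\in T(G)$ let $Y_T$ be the indicator that the colours of $a,b,c$ form the set $\{i,k+i,2k+i\}$ for some $i\in[k]$, so that $X=\sum_{T\in T(G)}Y_T$. Since the colours are i.i.d.\ uniform on $[3k]$, $\pr(Y_T=1)=k\cdot 3!/(3k)^3=2/(9k^2)$ for every $T$; hence $\E[X]=\tfrac{2}{9k^2}t(G)$ and $\E[\tfrac{9k^2}{2}X]=t(G)$. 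On the scale of $X$ the deviation we must control is therefore $\Theta(d\sqrt{t(G)}\log n)$, with failure probability $\cO(n^{-4})$, and the final statement follows by rescaling.

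\paragraph{Martingale set-up and two symmetry facts.}
Fix any ordering $v_1,\dots,v_n$ of $V(G)$, reveal the colours one vertex at a time, let $\cF_j$ be generated by the colours of $v_1,\dots,v_j$, and set $X_j=\E[X\mid\cF_j]$ --- a martingale with $X_0=\E[X]$, $X_n=X$, and differences $D_j=X_j-X_{j-1}$. The first key fact, immediate from the colour symmetry above, is that conditioning on the colour of a \emph{single} vertex of $T$ leaves $\E[Y_T\mid\cdot]=2/(9k^2)$; hence $D_j$ is a signed sum only over those $T$ for which $v_j$ is the \emph{second} or \emph{third} of $T$'s vertices in the ordering (second-vertex terms have magnitude $\cO(1/k)$, third-vertex terms are at most $1$). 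The second fact, proved the same way by conditioning on the shared vertex's colour, is that two distinct triangles sharing \emph{exactly one} vertex are uncorrelated; so only edge-sharing pairs contribute to $\V[X]$, and since $\Delta_E\le d$ there are at most $3d\,t(G)$ such ordered pairs, giving $\V[X]=\cO\big(t(G)/k^2+d\,t(G)/k^3\big)$. This is the right order of magnitude, but it only yields a $1/\mathrm{polylog}$ tail through Chebyshev, so an exponential bound is needed.

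\paragraph{Bounding the differences: averaged bounded differences $+$ Janson.}
Plain Azuma--Hoeffding is insufficient, because a single vertex can be the second/third vertex of $\Theta(\Delta_{v_j})$ relevant triangles and $\Delta_{v_j}$ is not bounded by any function of $d$ alone, so the worst-case $|D_j|$ is unbounded. Instead I would use the method of averaged bounded differences, i.e.\ a Freedman/Bernstein-type martingale inequality needing (i) a high-probability bound $\max_j|D_j|\le M$ and (ii) a bound on the predictable quadratic variation $\sum_j\E[D_j^2\mid\cF_{j-1}]$. For (i): given the revealed colours, the triangles that are ``active'' at step $j$ --- those whose two already-revealed vertices lie in compatible colour classes --- are controlled, via $\Delta_E\le d$, by the number of edges inside $G[N(v_j)]$ whose endpoints agree in the $[k]$-coordinate of the colouring; this is a sum of $\{0,1\}$ variables with bounded-degree dependency, so Janson's Chernoff-type inequality for partly dependent sums~\cite{Janson04} bounds it simultaneously over all $j$ except on an event $\cB$ of probability $\le n^{-4}$, yielding $M=\cO(d\sqrt{t(G)})$ up to polylog factors. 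For (ii): $\E[D_j^2\mid\cF_{j-1}]$ is, up to constants, a sum over the active triangles at step $j$ plus over active edge-sharing \emph{pairs} of triangles (single-shared-vertex pairs drop out, by the second symmetry fact); summing over $j$ and again invoking Janson for the bounded-dependency sum counting such pairs, and using $\Delta_E\le d$ to keep every count of size $\cO(d\,t(G)\,\mathrm{polylog}\,n)$, gives $\sum_j\E[D_j^2\mid\cF_{j-1}]=\cO(d^2t(G)\log n)$ with high probability.

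\paragraph{Conclusion and main obstacle.}
Feeding $\sum_j\E[D_j^2\mid\cF_{j-1}]=\cO(d^2t(G)\log n)$ and $\max_j|D_j|=\cO(d\sqrt{t(G)})$ (valid off $\cB$) into the Freedman-type bound gives $\pr\big(|X-\E X|>\lambda d\sqrt{t(G)}\log n\big)\le 2\exp(-\Omega(\lambda^2\log n))+\pr(\cB)$; taking $\kappa_1$, hence $\lambda$, large enough makes the right-hand side at most $2/n^4$, which is the claim after multiplying by $9k^2/2$. The main obstacle is precisely the per-step control: the worst-case Lipschitz constant of the colouring is genuinely unbounded (one vertex may lie in $\Omega(n^2)$ triangles, with no bound coming from $d$), so everything hinges on showing that, with high probability over the colouring, both the jumps $|D_j|$ and their quadratic variation stay small --- which is exactly where the $[k]$-coordinate ``spreading out'' the vertices, the bound $\Delta_E\le d$, and Janson's inequality have to be combined.
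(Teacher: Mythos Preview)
Your approach is essentially the paper's: a Doob martingale exposing the colours one vertex at a time, per-step difference bounds obtained via Janson's bounded-dependency Chernoff inequality (the dependency degree coming from $\Delta_E\le d$), and a martingale concentration inequality applied off a bad event $\cB$. The paper splits the triangles through vertex $t$ into $S_\ell,S_r,S_{\ell r}$ according to how many of the other two vertices are already revealed, and bounds each piece (Claim~\ref{clm:inter}, via Lemma~\ref{lem:col1}); your ``first symmetry fact'' is in fact a cleaner way to dispose of $S_r$ --- since $\E[N_r^{a_t}]=\E[N_r^{a_t'}]$ exactly, the paper's Claim~\ref{clm:inter}(b) is not actually needed.

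There is, however, one genuine slip in the choice of concentration inequality. You identify the ``method of averaged bounded differences'' with a Freedman/Bernstein inequality, but these are different tools. Janson applied per vertex with failure $n^{-\Theta(1)}$ inevitably gives $|D_j|\le c_j=\cO(d\sqrt{\Delta_{v_j}\log n})$, so $M=\max_j c_j=\cO(d\sqrt{t(G)\log n})$; you write ``up to polylog'' once and then silently drop it in the conclusion. With that $\sqrt{\log n}$ in $M$, the $M\lambda$ term in Freedman's denominator dominates and the exponent is only $-\Omega(\lambda\sqrt{\log n})$, forcing $\kappa_1=\Theta(\sqrt{\log n})$ rather than a constant. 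The paper instead uses the Azuma-type averaged-bounded-difference bound (their Lemma~\ref{lem:dp}): once $|D_j|\le c_j$ off $\cB$, one gets $\exp\bigl(-\delta^2/\sum_j c_j^2\bigr)+\pr(\cB)$ directly, with $\sum_j c_j^2=\cO\bigl(d^2\log n\sum_j\Delta_{v_j}\bigr)=\cO(d^2t(G)\log n)$ and no $M$ term. Switching to that form fixes the argument and makes your separate quadratic-variation estimate unnecessary.
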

\remove{
The above tells us that a proper scaling of the number of triangles, in the sparsified graph, approximately estimates $t(G)$, when $t(G)$ is above a threshold~\footnote{The threshold is a fixed polynomial in $d, \log n$ and $\frac{1}{\eps}$.}.} We apply the sparsification corresponding to Lemma~\ref{theo:sparse} only when $t(G)$ is above a threshold\footnote{The threshold is a fixed polynomial in $d, \log n$ and $\frac{1}{\eps}$.} to ensure that the relative error is bounded. We can decide whether $t(G)$ is at most the threshold and if it is so, we \emph{estimate} the value of $t(G)$, using the following lemma, whose proof is given in Section~\ref{sec:coarse}.
{\begin{lem}[Estimation with respect to a threshold] \label{lem:exact_decide}
 {There exists an algorithm that for any graph $G$, a threshold parameter $\tau \in \N$ and an $\eps \in (0,1)$, determines whether $t(G) > \ \tau$. If $t(G) \leq \ \tau$, the algorithm gives a $(1 \pm \eps)$-approximation to $t(G)$ by using $\cO(\frac{\tau \log^2 n}{\eps^2})$ many \tis queries with probability at least $1- n^{-10}$.} 
\end{lem}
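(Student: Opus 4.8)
The plan is to build the algorithm around a single primitive: using TIS queries to exactly enumerate all triangles in $G$ when there are not too many of them, and to abort (declaring "$t(G) > \tau$") as soon as the enumeration exceeds the budget. First I would describe how to detect a single triangle. Given that $V(G)$ has been COLORED with $[n]$ colors, with good probability any fixed triangle receives three distinct colors; more to the point, I will use color coding in the spirit of Alon--Yuster--Zwick so that a target triangle becomes "isolated" in some tripartite subgraph. Concretely, to find one triangle I would: (i) partition $V(G)$ into three parts $A,B,C$ at random; this retains a constant fraction of the triangles as triangles of $G(A,B,C)$; (ii) do a TIS query on $(A,B,C)$ — if it says NO, there is (with respect to this coloring) no triangle to find; if YES, run a three-way binary search, repeatedly halving $A$ (then $B$, then $C$) and using TIS to decide which half still contains a triangle, until $|A|=|B|=|C|=1$, at which point we have found a concrete triangle $(a,b,c)$. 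Each such search costs $O(\log n)$ TIS queries.

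Next I would turn "find one triangle" into "enumerate all triangles, up to $\tau$ of them." The standard trick is iterative removal: once a triangle $(a,b,c)$ is located, we must find the next one without re-finding $(a,b,c)$. Since we cannot literally delete vertices (that would destroy other triangles through $a$, $b$, or $c$), I would instead enumerate over which vertex of a new triangle differs, or more cleanly, split each of $A,B,C$ to separate the found triangle's vertices and recurse on the resulting constant number of sub-instances, keeping a running list $L$ of triangles found. We keep going until either no further triangle is found in any active sub-instance (in which case, after boosting the coloring by repetition $O(\log n)$ times to make the failure probability of missing a triangle at most $n^{-10}$, we conclude $|L| = t(G)$ and output $|L|$), or $|L|$ exceeds $\tau$ (in which case we stop and declare $t(G) > \tau$). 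The total number of TIS queries is $O(\tau)$ triangle-searches, each of cost $O(\log n)$, times $O(\log n)$ repetitions for amplification, giving $O(\tau \log^2 n)$; the $\eps^{-2}$ and an extra additive term only enter if one prefers a sampling-based count rather than exact enumeration, but since $t(G)\le\tau$ the exact count already needs only $O(\tau\log^2 n)$ queries, comfortably within the claimed $O(\tau \eps^{-2}\log^2 n)$.

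The correctness argument has two halves. For the decision part: if $t(G) > \tau$ the enumeration can only stop by exceeding the budget (it never terminates early claiming fewer triangles unless it genuinely found them all), so we correctly report "$t(G)>\tau$"; if $t(G)\le\tau$, then conditioned on the color-coding succeeding for every triangle — which holds with probability $\ge 1-n^{-10}$ after $O(\log n)$ independent repetitions and a union bound over at most $\tau \le \mathrm{poly}(n)$ triangles — the enumeration finds every triangle exactly once and outputs the true value $t(G)$, which is trivially a $(1\pm\eps)$-approximation (indeed exact). For the query bound I would tally: $O(\log n)$ colorings, each spawning at most $O(\tau)$ located triangles (we stop at $\tau$), each located via an $O(\log n)$-query three-way binary search, and the bookkeeping of sub-instances contributes only a constant factor per located triangle.

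The main obstacle I anticipate is the enumeration-without-deletion step: ensuring that after locating a triangle we can reorganize the active sets so that (a) every not-yet-found triangle still survives intact in exactly one active sub-instance, (b) the already-found triangle is excluded from all of them, and (c) the number of active sub-instances does not blow up super-linearly in the number of triangles found. Getting a clean recursion here — and arguing that the total query cost telescopes to $O(\tau\log n)$ per coloring rather than something like $O(\tau^2)$ — is the delicate part; everything else (single-triangle search via binary search on TIS answers, color-coding amplification, the union bound) is routine.
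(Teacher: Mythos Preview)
Your approach is essentially correct but takes a genuinely different route from the paper. The paper's algorithm does \emph{not} try to enumerate the triangles of $G$ itself. Instead it takes $\mathcal{N}=\Theta(\eps^{-2}\log n)$ independent random tripartitions $A_i,B_i,C_i$ of $V(G)$, uses the tripartite threshold/exact-count primitive (Lemmas~\ref{lem:decide_exact} and~\ref{lem:exact}) to either abort with ``$t(G)>\tau$'' (since $t(A_i,B_i,C_i)\le t(G)$) or obtain the exact count $t(A_i,B_i,C_i)$, and then outputs the scaled average $\hat t=\tfrac{9}{2\mathcal N}\sum_i t(A_i,B_i,C_i)$; the $(1\pm\eps)$ guarantee comes from Hoeffding's inequality using $\E[t(A_i,B_i,C_i)]=\tfrac{2}{9}t(G)$ and the range bound $t(A_i,B_i,C_i)\le\tau$. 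Your proposal instead takes $O(\log n)$ colorings, enumerates actual triangle \emph{vertices} in each tripartite instance, and outputs the size of the \emph{union} of the triangle lists. This buys you an exact answer (no $\eps^{-2}$ needed), at the cost of having to de-duplicate and of relying on the per-coloring enumeration returning triangles as explicit triples---which the paper's tree-based procedure in Lemma~\ref{lem:exact} does in fact do, and which cleanly resolves your ``enumeration-without-deletion'' worry with cost $O(t(A,B,C)\log n)$.

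One genuine gap to patch: your correctness argument for the case $t(G)>\tau$ is phrased deterministically (``the enumeration can only stop by exceeding the budget''), but that is false---after all $O(\log n)$ colorings the union $L$ may still have $|L|\le\tau$ with small probability, because a given triangle is properly tripartitioned in \emph{some} coloring only with high probability, not certainly. The fix is the same union-bound you use in the $t(G)\le\tau$ case: fix any $\tau+1$ triangles of $G$, argue each is captured by at least one of the $O(\log n)$ colorings with probability $1-n^{-c}$, and union bound over these $\tau+1\le\mathrm{poly}(n)$ triangles to conclude $|L|>\tau$ with probability $\ge 1-n^{-10}$. With that adjustment your argument goes through and in fact yields a slightly stronger statement than the paper's.
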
}
Assume that $t(G)$ is \emph{large}~\footnote{Large refers to a fixed polynomial in $d, \log n$ and $\frac{1}{\eps}$} and $G$ has undergone sparsification. We initialize a data structure with a set of vertex disjoint tripartite graphs that are obtained after the sparsification step. For each tripartite graph $G(A,B,C)$ in the data structure, we check whether $t(A,B,C)$ is less than a threshold using the algorithm corresponding to Lemma~\ref{lem:decide_exact}. If it is less than a threshold, we compute the exact value of $t(A,B,C)$ using Lemma~\ref{lem:exact} and remove $G(A,B,C)$ from the data structure. The proofs of Lemmas~\ref{lem:decide_exact} and~\ref{lem:exact} are given in Section~\ref{sec:coarse}.
\begin{lem}[Threshold for Tripartite Graph] \label{lem:decide_exact}
There exists a deterministic algorithm that given any disjoint subsets $A,B,C \subset V(G)$ of any graph $G$ and a threshold parameter $\tau \in \N$, can decide whether $t(A,B,C) \leq \tau$ using $\Oh(\tau \log n)$ \tis queries. 
\end{lem}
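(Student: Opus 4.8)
The plan is to reduce the decision problem to \emph{enumerating triangles one at a time}, spending only $\Oh(\log n)$ \tis queries per triangle discovered. Concretely, I would design a deterministic procedure that tries to output $\tau+1$ distinct triangles of $G(A,B,C)$; if it succeeds then $t(A,B,C)\ge \tau+1>\tau$, and if it stops early then it has enumerated all of $T(A,B,C)$ and so $t(A,B,C)\le\tau$. This gives the claimed bound once the cost of finding each triangle and the size of the recursion are controlled.

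First I would build a subroutine $\textsc{FindTriangle}(A,B,C)$ that, whenever $t(A,B,C)\ge 1$ (equivalently, the \tis query on $(A,B,C)$ answers `YES'), returns an explicit triangle $(a,b,c)\in T(A,B,C)$ using $\Oh(\log n)$ \tis queries. This is a three-phase binary search: split $A$ into two nonempty halves $A_1\sqcup A_2$, query $(A_1,B,C)$, and recurse into $A_1$ if the answer is `YES' and into $A_2$ otherwise; the invariant $t(\cdot,B,C)\ge 1$ is preserved because $t(A_1,B,C)+t(A_2,B,C)=t(A,B,C)$. After at most $\lceil\log_2 n\rceil$ queries a single vertex $a$ is isolated; then repeat with $A:=\{a\}$ to isolate $b$, and with $A:=\{a\},B:=\{b\}$ to isolate $c$. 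All the sets used are pairwise disjoint and nonempty, so every query is legal, and the total is $\Oh(\log n)$ queries; the routine is deterministic since the splits are canonical.

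Next comes the recursion, whose combinatorial core is the following disjoint four-way partition of the triangle set obtained after a triangle $(a,b,c)$ has been found (by asking, in turn, whether the $A$-vertex equals $a$, then whether the $B$-vertex equals $b$, then whether the $C$-vertex equals $c$):
\[
T(A,B,C)\;=\;\{(a,b,c)\}\;\sqcup\;T(A\setminus\{a\},B,C)\;\sqcup\;T(\{a\},B\setminus\{b\},C)\;\sqcup\;T(\{a\},\{b\},C\setminus\{c\}),
\]
and each of the last three pieces is again a legal tripartite \tis instance. The procedure $\textsc{Enum}(A,B,C,s)$ returns up to $s$ distinct triangles: if $s=0$, or one of $A,B,C$ is empty, or the \tis query on $(A,B,C)$ answers `NO', it returns the empty list; otherwise it calls $\textsc{FindTriangle}$, outputs $(a,b,c)$, and recurses on the three remaining instances with budgets $s-1$, then $s-1-k_1$, then $s-1-k_1-k_2$, where $k_1,k_2$ are the counts already produced. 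A routine induction shows $\textsc{Enum}$ returns exactly $\min(t(A,B,C),s)$ distinct triangles, and that it returns fewer than $s$ only when it has listed all of $T(A,B,C)$. To decide the lemma I would run $\textsc{Enum}(A,B,C,\tau+1)$ and answer ``$t(A,B,C)\le\tau$'' iff it returns at most $\tau$ triangles.

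For the query count, consider the recursion tree: a node that reaches $\textsc{FindTriangle}$ (a ``success'' node) outputs one fresh triangle and spawns exactly three children, so there are at most $\tau+1$ such nodes; every other node is a leaf and is a child of a success node or the root, so there are at most $3(\tau+1)+1$ leaves; and each node costs only $\Oh(\log n)$ \tis queries (one membership query plus the $\Oh(\log n)$ of $\textsc{FindTriangle}$). Hence the whole algorithm uses $\Oh(\tau\log n)$ \tis queries, as required. The step I expect to be the main obstacle is precisely setting up that disjoint four-way partition so that each residual piece remains a valid (nonempty, pairwise-disjoint) \tis instance \emph{and} the recursion tree stays linear in the number of triangles enumerated; everything else — the binary search, the empty-set boundary cases, and the induction on the budget — is routine.
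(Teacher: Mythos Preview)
Your proof is correct and takes a genuinely different route from the paper's. The paper proves Lemma~\ref{lem:exact} first by an octree-style recursion: starting from $(A,B,C)$ it halves each of the three sets simultaneously, spawning up to eight children per node, and prunes any branch where the \tis answer is `NO'. The resulting tree has depth at most $2\log n$ and one leaf per triangle, hence $\Oh(t(A,B,C)\log n)$ nodes; Lemma~\ref{lem:decide_exact} then just runs the same procedure and aborts as soon as the node count exceeds $16\tau\log n$.

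Your approach instead extracts triangles one at a time: locate a single triangle by three sequential binary searches, then peel it off via the disjoint decomposition $T(A,B,C)=\{(a,b,c)\}\sqcup T(A\setminus\{a\},B,C)\sqcup T(\{a\},B\setminus\{b\},C)\sqcup T(\{a\},\{b\},C\setminus\{c\})$, recursing on the three residual instances with a decrementing budget. The accounting is clean --- at most $\tau+1$ ``success'' nodes, each with three children, and $\Oh(\log n)$ queries charged to each --- and it makes the per-triangle cost explicit. The paper's method is arguably simpler to state (just halve everything and recurse) and yields Lemmas~\ref{lem:decide_exact} and~\ref{lem:exact} with identical code; yours is a nice alternative whose recursion tree has $\Oh(\tau)$ nodes regardless of depth, with the $\log n$ factor confined entirely to \textsc{FindTriangle}.
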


\begin{lem}[Exact Counting in Tripartite Graphs]
\label{lem:exact}
There exists a deterministic algorithm that given any disjoint subsets $A,B,C \subset V(G)$ of any graph $G$, can determine the exact value of $t(A,B,C)$ using $\Oh(t(A,B,C) \log n)$ \tis queries. 
\end{lem}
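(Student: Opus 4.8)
The plan is to give a recursive divide-and-conquer procedure $\mathrm{Count}(A,B,C)$ returning $t(A,B,C)$ exactly, and to charge its queries to the triangles it discovers. I would first record the one structural fact that drives everything: $G(A,B,C)$ is tripartite with parts $A,B,C$ and has no edge inside a part, so every triangle of $G(A,B,C)$ uses exactly one vertex from each of $A$, $B$, $C$; hence if $A=A_1\sqcup A_2$ then $t(A,B,C)=t(A_1,B,C)+t(A_2,B,C)$, and similarly for a split of $B$ or of $C$. The procedure is: if one of the sets is empty, return $0$ (no query); otherwise query \tis on $A,B,C$, and if the answer is NO return $0$; if $\size{A}=\size{B}=\size{C}=1$, return $1$ (the single triple is a triangle, since the oracle answered YES); otherwise pick any part of size at least $2$, split it into two non-empty roughly equal halves, recurse on the two resulting triples, and return the sum of the two returned values. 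Each recursive call keeps the three sets non-empty and pairwise disjoint, so every query is legal; and since no randomness is used and, by the structural fact, each split partitions the triangle set, the base-case leaves of the recursion are in bijection with the triangles of $G(A,B,C)$, so the final value equals $t(A,B,C)$. This handles correctness and determinism.

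The substance of the proof is the query count, which equals the number of nodes of the binary recursion tree $\cT$. A balanced split strictly halves one of $\size{A'},\size{B'},\size{C'}$, and each of these three quantities goes from at most $n$ down to $1$, so every root-to-node path of $\cT$ has length $\Oh(\log n)$; thus $\cT$ has depth $\Oh(\log n)$. Call a node \emph{live} if $t(A',B',C')\ge 1$. The live leaves are exactly the base cases, so there are exactly $t(A,B,C)$ of them, and every live node lies on the root-to-leaf path ending at some live leaf (its subtree contains a triangle, which the recursion reports at such a leaf); hence $\cT$ has $\Oh(t(A,B,C)\log n)$ live nodes. A dead node returns after its single query with no recursive call, so every non-root dead node is a child of a live node, giving $\Oh(t(A,B,C)\log n)$ dead nodes as well; the only possible exception is the root, when $t(A,B,C)=0$, which costs a single query. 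Summing, the algorithm uses $\Oh(t(A,B,C)\log n)$ \tis queries (and one query when $t(A,B,C)=0$).

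The only step needing care is this last accounting: although $\cT$ may branch, its live nodes are confined to $t(A,B,C)$ short, length-$\Oh(\log n)$ root-to-leaf paths, and the dead subtrees are trivial. The two ingredients are that splits are \emph{balanced} (keeping the depth logarithmic) and that a node can be live \emph{only if} a triangle hides in its subtree (confining the live nodes). Everything else --- the structural fact that a triangle of $G(A,B,C)$ meets each part exactly once, so splits partition the triangle set; the legality of every query; and the exactness and determinism of the output --- is routine bookkeeping.
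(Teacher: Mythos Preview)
Your proposal is correct and takes essentially the same approach as the paper: build a recursion tree by querying \tis at each node, prune when the answer is NO, and bound the query count by (number of live root-to-leaf paths) $\times$ (depth). The only cosmetic difference is that the paper halves \emph{all} parts of size $>1$ simultaneously (so a node has $2$, $4$, or $8$ children) while you halve one part at a time; this changes the depth by a constant factor ($\le 3\log n$ versus $\le 2\log n$) and the branching constant in the dead-node count, but the accounting is otherwise identical.
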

Now we are left with some tripartite graphs such that the number of triangles in each graph is more than a threshold. If the number of such graphs is not large, then we sparsify each tripartite graph $G(A,B,C)$ in a fashion almost similar to the earlier sparsification.
\remove{
such that (i) the sparsified graph is a
disjoint union of some tripartite subgraphs and (ii) a proper constant scaling of the number of triangles in the sparsified graph is approximately same as that of $t(A,B,C)$.} 
This sparsification result formally stated 
in the following Lemma, has a proof similar to Lemma~\ref{theo:sparse}.  We replace $G(A,B,C)$ by a constant (say, $k$)~\footnote{In our algorithm, $k$ is a constant. However, Lemma~\ref{theo:sparse1} holds for any $k \in \N$.} many tripartite subgraphs formed after sparsification.
\begin{lem}[Sparsification for Tripartite Graphs]
\label{theo:sparse1}
Let $k,d \in \N$. There exists a constant $\kappa_2$ such that 
$$
	\pr\left(\size{{k^2} \sum\limits_{i=1}^k t(A_i,B_i,C_i) - t(A,B,C)} >  \kappa_2 d  k^2 \sqrt{t(G) } \log n \right) \leq \frac{1}{n^8}
$$ 
where $A$, $B$ and $C$ are disjoint subsets of $V(G)$ for any graph $G$ with $\Delta_E \leq d$, and $A_1,\ldots,A_{k}$, $B_1,\ldots,B_{k}$ and $C_1,\ldots,C_{k}$ are the partitions of $A,B,C$ formed uniformly at random, respectively.
\end{lem}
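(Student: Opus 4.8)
The plan is to prove this exactly along the lines of Lemma~\ref{theo:sparse}: I first reduce the statement to a concentration bound, and then indicate which pieces of the earlier argument get re-instantiated.

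Forming the uniformly random partitions $A=\bigcup_i A_i$, $B=\bigcup_i B_i$, $C=\bigcup_i C_i$ is the same as assigning to every vertex of $A\cup B\cup C$ an independent uniform colour from $[k]$, with each of $A,B,C$ coloured \emph{separately}; call the resulting colouring $\chi$. The triangles of $G(A,B,C)$ are exactly the triples $(a,b,c)$ with $a\in A$, $b\in B$, $c\in C$ pairwise adjacent, and such a triple contributes to $t(A_i,B_i,C_i)$ for the unique colour $i=\chi(a)$ precisely when $\chi(a)=\chi(b)=\chi(c)$, an event of probability $k\cdot k^{-3}=k^{-2}$. Hence, with $X:=\sum_{i=1}^k t(A_i,B_i,C_i)$ counting the monochromatic triangles of $G(A,B,C)$, linearity of expectation gives $\E[X]=t(A,B,C)/k^{2}$, i.e.\ $\E[k^2X]=t(A,B,C)$; this is the origin of the scaling factor $k^2$ in the statement (in Lemma~\ref{theo:sparse} the analogous scaling factor is $9k^2/2$ because there a fixed triangle must acquire three \emph{prescribed} colours, in any of $3!$ orders, out of $3k$). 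Since $t(A_i,B_i,C_i)\le t(A,B,C)\le t(G)$, dividing through by $k^2$ shows the inequality to be proved is equivalent to $\pr\left(\size{X-\E[X]}>\kappa_2\,d\,\sqrt{t(G)}\,\log n\right)\le n^{-8}$.

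I would establish this concentration bound by the \emph{method of averaged bounded differences} together with Janson's Chernoff--Hoeffding inequality~\cite{Janson04} for sums of boundedly dependent variables, exactly as in the proof of Lemma~\ref{theo:sparse}. Order the vertices of $A\cup B\cup C$ with all of $A$ first, then all of $B$, then all of $C$, and run the Doob martingale $X_j=\E[X\mid\chi(v_1),\dots,\chi(v_j)]$ with $X_0=\E[X]$ and $X_m=X$. Revealing $\chi(v_j)$ affects only indicators of triangles through $v_j$; by the chosen order the increment vanishes for $v_j\in A$, is a mean-zero weighted sum over the $A$-neighbours $a$ of $v_j$ with weights $\size{N(a)\cap N(v_j)\cap C}\le\Delta_E\le d$ for $v_j\in B$, and is a mean-zero signed sum over the edges $(a,b)$ spanning a triangle with $v_j$ for $v_j\in C$. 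In every case the increment is a sum of $[0,d]$-bounded terms whose dependency graph has small degree, since two triangles through the common vertex $v_j$ affect the same increment only if they also share the \emph{other} edge incident to $v_j$, and any edge lies in at most $\Delta_E\le d$ triangles. This is precisely the bounded-dependency regime covered by Janson's inequality; using it to bound each revealed vertex's contribution and the conditional second moments of the increments, and then summing over $j$ with an Azuma/Freedman-type martingale bound as in Lemma~\ref{theo:sparse}, yields $\size{X-\E[X]}=\Oh(d\sqrt{t(G)}\log n)$ with the stated failure probability. The argument differs from that of Lemma~\ref{theo:sparse} only in the accounting of constants, and the sharper probability $n^{-8}$ — needed because the main algorithm invokes Lemma~\ref{theo:sparse1} on many tripartite subgraphs at once — is obtained for free by enlarging $\kappa_2$, since the tail is $n^{-\Omega(\kappa_2^{2}d\log n)}$.

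The crux, and the reason plain McDiarmid does not suffice, is that the martingale increments admit no useful \emph{worst-case} bound: since we assume no bound on the number of triangles through a vertex, revealing a single colour in $B$ or $C$ can move $X$ by as much as $\Theta(t(G)/k^2)$. As in Lemma~\ref{theo:sparse}, the way around this is that such a large increment is vanishingly unlikely — conditioned on the colours revealed so far, each increment is itself a sum of boundedly dependent bounded variables, so Janson's inequality bounds it, and the running sum of conditional variances, by $\Oh(d\sqrt{t(G)}\log n)$ with probability $1-n^{-\Omega(1)}$; one then restricts to that high-probability event and applies the martingale concentration inequality with this small variance proxy. (In the degenerate case where essentially all triangles of $G(A,B,C)$ pass through a single vertex, $X$ collapses to one sum of boundedly dependent indicators over a bounded-degree graph, and Janson's inequality gives the bound directly.) Apart from these standard but careful estimates, the proof is identical to that of Lemma~\ref{theo:sparse}.
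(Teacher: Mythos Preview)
Your proposal is correct and takes exactly the approach the paper intends --- the paper omits a separate proof and simply states it ``has a proof similar to Lemma~\ref{theo:sparse}'', i.e., averaged bounded differences (Lemma~\ref{lem:dp}) with Janson's bounded-dependency inequality controlling the per-vertex increments; your ordering $A$, then $B$, then $C$ is a clean simplification that the tripartite structure affords (it collapses the $S_\ell/S_r/S_{\ell r}$ case analysis of Lemma~\ref{theo:sparse}). One minor slip: with $\delta=\Theta(\kappa_2 d\sqrt{t(G)}\log n)$ and $\sum_j c_j^2=\Theta(d^2\, t(G)\log n)$ the exponent in the tail is $\Theta(\kappa_2^2\log n)$, so the failure probability is $n^{-\Omega(\kappa_2^2)}$ rather than $n^{-\Omega(\kappa_2^2 d\log n)}$ --- but your conclusion that enlarging $\kappa_2$ secures the $n^{-8}$ bound is unaffected.
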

If we have a large number of vertex disjoint tripartite subgraphs of $G$ and each subgraph contains a large  number of triangles, then we \emph{coarsely} estimate the number of triangles in each subgraph which is correct up to $\Oh(\log^3 n)$ factor by using the algorithm corresponding to the following Lemma, whose proof is in Section~\ref{sec:coarse_new}. Our \cest algorithm is similar in structure to the coarse estimation algorithm for edge estimation, but requires 
a more careful analysis.
 \begin{lem}[Coarse Estimation]
\label{lem:coarse_main}
There exists an algorithm that given disjoint subsets $A,B,C \subset V(G)$ of any graph $G$, returns an estimate $\widetilde{t}$ satisfying 
$$
	\frac{t(A,B,C)}{64 \log^2 n} \leq \widetilde{t} \leq 64 t(A,B,C) \log^2 n
$$
with probability at least $1-n^{-9}$. Moreover, the query 
complexity of the algorithm is $\Oh(\log ^4 n)$.
\end{lem}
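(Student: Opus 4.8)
The plan is to follow the Beame \emph{et al.}\ coarse estimator for edges, but to run the geometric search over \emph{triples} of subsampling rates. For $i,j,k \in \{0,1,\dots,L\}$ with $L=\lceil\log_2 n\rceil+O(1)$, form $A_i,B_j,C_k$ by keeping each vertex of $A$, $B$, $C$ independently with probability $2^{-i}$, $2^{-j}$, $2^{-k}$ respectively, ask \tis on $(A_i,B_j,C_k)$, repeat this $R=\Theta(\log n)$ times with fresh random samples, and set $\mathrm{good}(i,j,k)=1$ iff a strict majority of the $R$ answers are `YES'. Return $\widetilde t=\max\{\,2^{i+j+k}:\mathrm{good}(i,j,k)=1\,\}$ (and $\widetilde t=0$ when \tis answers `NO' on $(A,B,C)$, which happens exactly when $t(A,B,C)=0$). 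This uses $O(L^{3}R)=O(\log^{4} n)$ queries. Write $t=t(A,B,C)$ and let $X_{ijk}$ be the number of triangles of $G(A,B,C)$ surviving into $G(A_i,B_j,C_k)$, so $\E[X_{ijk}]=2^{-(i+j+k)}t$ and \tis answers `YES' on $(A_i,B_j,C_k)$ iff $X_{ijk}\ge 1$.

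That $\widetilde t$ is not much larger than $t$ is a first-moment argument: if $2^{-(i+j+k)}t\le \tfrac1{100}$ then $\Pr[X_{ijk}\ge 1]\le\E[X_{ijk}]\le\tfrac1{100}$, so a Chernoff bound over the $R=\Theta(\log n)$ trials gives $\mathrm{good}(i,j,k)=0$ except with probability $n^{-\Omega(1)}$; a union bound over the $O(\log^{3} n)$ triples (with $R$ a large enough multiple of $\log n$) shows that, with probability $\ge 1-\tfrac12 n^{-9}$, no triple with $2^{i+j+k}\ge 100\,t$ is ever declared good, hence $\widetilde t\le 100\,t\le 64\,t\log^{2} n$ using $n\ge 64$.

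The reverse bound is the crux and the main technical obstacle, and it is here that the analysis must be more careful than for edges: I need one triple $(i^{*},j^{*},k^{*})$ with $2^{i^{*}+j^{*}+k^{*}}\ge t/\mathrm{polylog}(n)$ for which $\Pr[X_{i^{*}j^{*}k^{*}}\ge 1]\ge\tfrac34$. A plain second-moment (or Janson) bound applied to \emph{all} $t$ triangles is useless when they are concentrated on a single vertex or a single edge of $G(A,B,C)$, so the first step is to isolate a "spread out" sub-collection: bucket the vertices of each part into the $O(\log n)$ dyadic classes determined by their triangle-degree in $G(A,B,C)$, and likewise bucket the edges of $G(A,B,C)$ by their co-degree; by pigeonhole some choice of one class per part carries $\ge t/\mathrm{polylog}(n)$ triangles, and on this sub-collection all triangle-degrees and co-degrees equal the chosen scales up to constants. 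Then set $2^{-i^{*}}$ to roughly the reciprocal of the number of vertices in the chosen $A$-class, and similarly for $j^{*},k^{*}$, so that $\Theta(1)$ triangles of the sub-collection survive in expectation. Because the degrees are now balanced, the covariance terms in $\V[X_{i^{*}j^{*}k^{*}}]$ are controlled — and, crucially, the covariance of two triangles sharing an edge carries a factor $1-2^{-i^{*}-j^{*}}$, so a part that has to be kept in full ($i^{*}=0$, the star-like case) contributes nothing — whence the second-moment method gives $\Pr[X_{i^{*}j^{*}k^{*}}\ge 1]=\Omega(1)$, which is boosted above $\tfrac34$ by inflating the three rates by a $\mathrm{polylog}$ factor while keeping $2^{i^{*}+j^{*}+k^{*}}\ge t/\mathrm{polylog}(n)$. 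The delicate part, needing a short case analysis on which parts are "star-like" versus "matching-like", is to verify that this choice simultaneously meets the product lower bound and all the variance bounds; since here there are three vertex-degree scales and three edge-co-degree scales interacting (rather than the two parameters of the edge problem), this is precisely the extra bookkeeping over Beame \emph{et al.}

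Finally, condition on the intersection of the first-moment event above with the event that $(i^{*},j^{*},k^{*})$ is declared good; by a Chernoff bound the latter fails with probability $n^{-\Omega(1)}$, so the whole event holds with probability $\ge 1-n^{-9}$, and on it $t/\mathrm{polylog}(n)\le 2^{i^{*}+j^{*}+k^{*}}\le\widetilde t\le 100\,t$. Tracking the $\mathrm{polylog}$ loss through the constantly many bucketing steps, and using $n\ge 64$ to absorb constants, this loss is at most $64\log^{2} n$, which yields the stated two-sided bound. The degenerate cases $t=0$ (all answers are `NO', output $0$) and $t\le\mathrm{polylog}(n)$ (the full sets $A,B,C$ already succeed, so the triple $(0,0,0)$ works) are handled directly.
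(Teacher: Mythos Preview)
Your approach departs from the paper's in two linked ways, and the second of these is where the gap lies.

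First, the paper's \verest\ does not loop over three independent subsampling rates. For a candidate $\hat t$ it loops over a two-dimensional grid $(i,j)$ and samples $A$, $B$, $C$ at rates $2^{i}/\hat t$, $(2^{j}/2^{i})\log n$, $1/2^{j}$; the product of the three rates is always $(\log n)/\hat t$, so the expected number of surviving triangles is the same for every $(i,j)$. The lower-bound analysis (Lemma~\ref{lem:coarse1}) is then a \emph{chain of conditional probabilities}, not a second-moment bound: bucket $A$ by triangle-degree and show that for the right $i=p$ the $A$-sample meets $A^{p}$ with constant probability; conditioned on the specific vertex $a$ so obtained, bucket the $B$-neighbours of $a$ by their co-degree with $a$ and show that for the right $j=q(a)$ the $B$-sample meets $B^{pq}(a)$; conditioned on $(a,b)$, the $C$-sample hits one of the $\ge 2^{q}$ completing vertices. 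The second-level bucket depends on which $a$ was sampled, and looping over all $j$ catches whatever $q(a)$ turns out to be. Only two pigeonhole losses of $\log n$ are incurred, which is exactly the $\log^{2} n$ in the statement.

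Second, your second-moment route has a real hole at the point you yourself call ``delicate''. The rule ``set $2^{-i^{*}}$ to roughly the reciprocal of the number of vertices in the chosen $A$-class, and similarly for $j^{*},k^{*}$'' gives $2^{i^{*}+j^{*}+k^{*}}\approx |A'|\,|B'|\,|C'|$, but this product can exceed the number $t'$ of triangles in the sub-collection by a polynomial (not polylogarithmic) factor: for the $3$-partite perfect matching $(a_{\ell},b_{\ell},c_{\ell})$, $\ell\in[m]$, one has $t'=m$ and $|A'|\,|B'|\,|C'|=m^{3}$, so your rule yields $\E[X]=m^{-2}$, and no polylog inflation of the rates recovers both $\E[X]=\Theta(1)$ and $2^{i^{*}+j^{*}+k^{*}}\ge t'/\mathrm{polylog}$ simultaneously. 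A correct choice here is $i^{*}=\log m$, $j^{*}=k^{*}=0$, and in general one must locate $(i^{*},j^{*},k^{*})$ inside a box cut out by all three vertex-degree scales \emph{and} all three edge-co-degree scales, with the product hitting $t'$. Even granting that this can be made to work, it needs up to six dyadic bucketings, so the accumulated pigeonhole loss is a priori $\log^{c} n$ with $c>2$, and your claim that it comes out to $64\log^{2}n$ is unsupported. The paper's hierarchical, vertex-conditioned bucketing is precisely the device that collapses this to two steps and sidesteps any variance calculation.
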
 
After estimating the number of triangles in each subgraph coarsely, we approximately maintain the triangle count using the following sampling result which is a direct consequence of the {\em Importance Sampling Lemma} of~\cite{BeameHRRS18}.
\footnote{For the exact statement of the Importance Sampling Lemma see Lemma~\ref{lem:importance1} in Appendix~\ref{sec:prelim}.}
\begin{lem}[\cite{BeameHRRS18}]
\label{lem:importance}
Let $(A_1,B_1,C_1,w_1),\ldots,(A_r,B_r,C_r,w_r)$ be the tuples present in the data structure and $e_i$ be the corresponding coarse estimation for $t(A_i,B_i,C_i), i \in [r],$  such that 
\begin{description}
\item[(i)] $\forall i \in [r]$, we have $w_i$, $e_i \geq 1$;

\item[(ii)] $\forall i \in [r]$, we have $\frac{e_i}{\rho} \leq t(A_i,B_i,C_i) \leq e_i \rho$ for some $\rho >0$; and

\item[(iii)] $\sum_{i=1}^r {w_i\cdot t(A_i,B_i,C_i)} \leq M$.
\end{description}
Note that the exact values $t(A_i,B_i,C_i)$'s are not known to us. Then there exists an algorithm that finds 
 $(A'_1,B'_1,C'_1,w'_1),
\ldots,(A'_s,B'_s,C'_s,w'_s)$ such that all of the above three conditions hold and  
$$
\size{\sum\limits_{i=1}^s {w'_i\cdot t(A'_i,B'_i,C'_i) } - \sum\limits_{i=1}^r {w_i\cdot t(A_i,B_i,C_i)}} \leq \lambda S,
$$ 
with probability at least $1-\delta$, where  $S=\sum_{i=1}^r {w_i\cdot t(A_i,B_i,C_i)}$ and $\lambda, \delta >0$. 
Also, 
$$
s=\cO\left(\lambda^{-2} \rho^4 \log M \left(\log \log M + \log \frac{1}{\delta}\right)\right).
$$
\end{lem}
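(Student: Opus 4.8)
The plan is to derive Lemma~\ref{lem:importance} directly from the Importance Sampling Lemma of Beame et al.\ (stated as Lemma~\ref{lem:importance1} in Appendix~\ref{sec:prelim}), so the ``proof'' is really an instantiation together with a check that that lemma is oblivious to what the objects actually are. Concretely, Lemma~\ref{lem:importance1} takes a collection of triples ``(weight $w_i$, unknown nonnegative value $v_i$, coarse estimate $e_i$ with $e_i/\rho \le v_i \le e_i\rho$)'' with $w_i,e_i \ge 1$ and $\sum_i w_i v_i \le M$, and its proof never uses any property of the $v_i$ beyond these bounds — in particular it does not access $v_i$ at all. First I would set $v_i := t(A_i,B_i,C_i)$ and carry over $w_i$, $e_i$, $\rho$, $M$, $\lambda$, $\delta$ verbatim; hypotheses (i)--(iii) of Lemma~\ref{lem:importance} are then literally hypotheses (i)--(iii) required by Lemma~\ref{lem:importance1}, so the latter applies and yields the output collection $(A'_1,B'_1,C'_1,w'_1),\dots,(A'_s,B'_s,C'_s,w'_s)$ with the asserted additive guarantee and the asserted value of $s$.

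For completeness I would recall the mechanism of Lemma~\ref{lem:importance1} so the parameters are transparent. Partition the $r$ input objects into $\Oh(\log M)$ classes according to which dyadic interval $[2^j, 2^{j+1})$ contains $e_i$; inside a class all true values $v_i$ lie within a factor $2\rho^2$ of one another. Within each class one keeps a random subset of the objects and rescales the surviving weights so that the expected weighted value of the class is preserved; a Bernstein/Chernoff-type bound shows that $\Oh(\lambda^{-2}\rho^4(\log\log M + \log\frac1\delta))$ survivors per class suffice for an additive-$\lambda$ (relative to the global sum $S$) error, the $\rho^4$ coming from the within-class spread of values and the $\log\log M$ from a union bound over the classes. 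Summing over the $\Oh(\log M)$ classes gives $s = \Oh(\lambda^{-2}\rho^4 \log M(\log\log M + \log\frac1\delta))$, and the surviving weights and estimates still satisfy (i)--(iii).

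The only thing that genuinely needs verification — and hence the main (mild) obstacle — is confirming that Beame et al.'s argument is indeed structure-agnostic: their statement is phrased for bipartite graphs with $v_i$ the number of edges, and one must check that at no point does the subsampling-and-reweighting step invoke edge structure, \bis queries, or integrality of the $v_i$ in an essential way. Reading off their proof, the objects only ever enter through their weights and coarse estimates, the $v_i$ serving purely as bookkeeping quantities in the error analysis; hence the identical argument goes through with triangle counts $t(A_i,B_i,C_i)$ in place of edge counts, which is why we quote the lemma rather than reprove it. (The recursive splitting of tripartite graphs via Lemma~\ref{theo:sparse1} and the coarse estimates via Lemma~\ref{lem:coarse_main} are used to \emph{supply} valid inputs to this lemma in the final algorithm, not inside its proof.)
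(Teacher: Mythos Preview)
Your proposal is correct and matches the paper's own treatment: the paper does not prove Lemma~\ref{lem:importance} either but simply states that it ``is a direct consequence of the {\em Importance Sampling Lemma} of~\cite{BeameHRRS18}'' (Lemma~\ref{lem:importance1} in Appendix~\ref{sec:prelim}), which is precisely the instantiation you describe. Your additional recap of the bucketing mechanism and the structure-agnosticity check go beyond what the paper spells out, but the underlying approach is identical.
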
 

Now again, for each tripartite graph $G(A,B,C)$, we check whether $t(A,B,C)$ is less than a threshold using the algorithm corresponding to Lemma~\ref{lem:decide_exact}. If yes, then we can compute the exact value of $t(A,B,C)$ using Lemma~\ref{lem:exact} and remove $G(A,B,C)$ from the data structure. Otherwise, we iterate on all the required steps discussed above as shown in Figure~\ref{fig:flowchart}. Observe that each iteration uses polylogarithmic~\footnote{Polylogarithmic refers to a polynomial in $d, \log n$ and $\frac{1}{\eps}$} many queries. Now, note that the number of triangles reduces 
by a constant factor after each sparsification step. So, the number of iterations is bounded by $\Oh(\log n)$. Hence, the query complexity of our algorithm is polylogarithmic.
This completes the high level description of our algorithm. 

\remove{
We will discuss the technical lemmas next. We will come back to the algorithm again in Section~\ref{sec:finaltrialgo} in a more detailed fashion.}


\section{Sparsification step}
\label{sec:sparse}
\noindent 
In this Section, we prove Lemma~\ref{theo:sparse}. The proof of Lemma~\ref{theo:sparse1} is similar.

\begin{lem}[Lemma~\ref{theo:sparse} restated] 
Let $k,d \in \N$. There exists a constant $\kappa_1$ such that for any graph $G$ with $\Delta_E \leq d$, if $V_1,\ldots,V_{3k}$ is a random partition of $V(G)$ obtained by 
$V(G)$ being \colored with $[3k]$, then 
$$  
	\pr\left(\size{\frac{9k^2}{2}\sum\limits_{i=1}^kt(V_i,V_{k+i},V_{2k+i}) - t(G)} 
		>  \kappa_1 d k^2 \sqrt{t(G) } \log n \right)~\leq~\frac{2}{n^4}.
$$
\end{lem}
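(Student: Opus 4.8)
Write $X=\sum_{i=1}^{k}t(V_i,V_{k+i},V_{2k+i})$, so the random variable in the statement is $Y:=\frac{9k^2}{2}X$, and for each triangle $T=(a,b,c)\in T(G)$ let $X_T$ be the indicator of the event that $T$ is \emph{preserved}, i.e.\ that for some $i\in[k]$ the vertices $a,b,c$ lie one in $V_i$, one in $V_{k+i}$ and one in $V_{2k+i}$; then $X=\sum_{T\in T(G)}X_T$. My plan has three steps: compute $\E[Y]$ exactly, extract the (bounded) dependency structure of the $X_T$'s from $\Delta_E\le d$, and then push a variance bound up to an exponential tail. For the first step, a uniformly random colouring of the three vertices of $T$ preserves $T$ exactly when the three colours form a bijection onto some block $\{i,k+i,2k+i\}$, so there are $3!\cdot k=6k$ good colourings out of $(3k)^3$ and $\pr(X_T=1)=\frac{6k}{27k^3}=\frac{2}{9k^2}$; hence $\E[X]=\frac{2\,t(G)}{9k^2}$ and $\E[Y]=t(G)$, so it only remains to prove $\pr\!\left(|Y-t(G)|>\kappa_1 d k^2\sqrt{t(G)}\log n\right)\le 2/n^4$.

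For the second step, note that each $X_T$ is a function of the colours of the three vertices of $T$ only, and if $T,T'$ meet in exactly one vertex $v$ then, conditioned on $c(v)$, the indicators $X_T$ and $X_{T'}$ depend on disjoint sets of colours while $\E[X_T\mid c(v)]=\frac{2}{9k^2}$ \emph{for every} value of $c(v)$ (fix $v$'s colour, the other two vertices of $T$ must fall into the two matching classes of $v$'s block). Hence $\mathrm{Cov}(X_T,X_{T'})=0$ whenever $T,T'$ share a single vertex, so only pairs of triangles sharing an edge (equivalently, two vertices) contribute to $\V[X]$. Since $\Delta_E\le d$ and $\sum_{e\in E(G)}\Delta_e=3\,t(G)$, the number of ordered pairs of distinct triangles sharing an edge is at most $\sum_e\Delta_e(\Delta_e-1)\le 3(d-1)t(G)$, and a short computation gives $\mathrm{Cov}(X_T,X_{T'})=\Oh(k^{-3})$ for each such pair; together with the $t(G)$ diagonal terms this yields $\V[X]=\Oh\!\left(d\,t(G)/k^{2}\right)$, i.e.\ $\V[Y]=\Oh\!\left(d k^{2}t(G)\right)$.

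A variance bound alone gives only a polynomial tail through Chebyshev, which is far too weak, so the third and hardest step is exponential concentration. I would run the Doob martingale $Z_j=\E[X\mid c(v_1),\dots,c(v_j)]$ obtained by exposing the vertex colours one at a time; a triangle not incident to $v_j$ has the same conditional expectation before and after $c(v_j)$ is revealed, so $Z_j-Z_{j-1}$ only involves triangles through $v_j$ that already have an exposed vertex. The obstacle --- and the reason the argument needs the \emph{method of averaged bounded differences} rather than plain Azuma--Hoeffding or Bernstein --- is that the worst-case effect of one vertex's colour on $X$ is its triangle-degree $\Delta_{v_j}$, which the paper deliberately does not bound, so a crude bound $\sum_j\Delta_{v_j}^2$ is not controlled by $d$ and $t(G)$. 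The point to make rigorous is that a large jump is both rare and on average small: revealing $c(v_j)$ typically moves each affected triangle's preservation probability by only $\Oh(1/k)$, and by the same single-shared-vertex cancellation together with $\Delta_E\le d$ the sum of conditional variances $\sum_j\E\!\left[(Z_j-Z_{j-1})^2\mid\cF_{j-1}\right]$ stays of order $\Oh(d\,t(G)/k^{2})$ for $X$ (hence $\Oh(d k^2 t(G))$ for $Y$), while the occasional large jumps are again charged to $\sum_e\Delta_e=3t(G)$. Feeding this into Janson's Chernoff--Hoeffding inequality for sums of partly dependent random variables gives a Bernstein-type estimate $\pr(|Y-t(G)|>s)\le 2\exp\!\left(-\Omega\!\left(s^{2}/(d k^{2}t(G))\right)\right)$ in the relevant range of $s$; taking $s=\kappa_1 d k^{2}\sqrt{t(G)}\log n$ makes the exponent $\Omega(d k^{2}\log^{2}n)$, which beats $4\log n$ once $\kappa_1$ is a sufficiently large constant and $n\ge 64$, yielding the claimed $2/n^4$.

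I expect the third step to be the real work: turning ``the martingale differences have small conditional variance despite unbounded per-vertex triangle counts'' into a proof requires classifying the affected triangles through $v_j$ by how many of their other two vertices are already coloured, exploiting the sign cancellations that appear when the opposite edge forces incompatible colour classes, and bookkeeping everything so it collapses to a multiple of $d\sum_e\Delta_e=3d\,t(G)$ --- which is exactly where $\Delta_E\le d$ is indispensable, and what makes this proof genuinely different from the edge-estimation sparsification of~\cite{BeameHRRS18}. Finally, Lemma~\ref{theo:sparse1} follows from the identical scheme with the colouring of $V(G)$ by $[3k]$ replaced by independent uniform partitions of each of $A$, $B$, $C$ into $k$ parts; the only changes are that the preservation probability becomes $1/k^{2}$ (no $3!$ factor, since the three sides are fixed) and the scaling factor $k^{2}$ replaces $\tfrac{9k^2}{2}$.
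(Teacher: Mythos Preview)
Your outline is close in spirit to the paper's proof --- a Doob martingale over the colour revelations, with a careful per-step analysis that must cope with unbounded vertex--triangle degrees --- and your final paragraph accurately sketches the decomposition the paper carries out (the split into triangles whose other two vertices are both already coloured, both uncoloured, or one of each; this is exactly the paper's $S_\ell,S_r,S_{\ell r}$). Your variance computation in step~2, and in particular the observation that $\mathrm{Cov}(X_T,X_{T'})=0$ when $T,T'$ share a single vertex, is correct and pleasant; the paper does not use it.

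The gap is in step~3, where two distinct tools get conflated. Janson's Chernoff--Hoeffding inequality for dependent sums (the paper's Lemma~\ref{lem:depend:high_prob}) is a \emph{Hoeffding}-type bound driven by the chromatic number of the dependency graph; it is not a martingale inequality and does not produce a Bernstein tail from a conditional-variance bound. Conversely, a Freedman/Bernstein martingale inequality would need, in addition to your $\sum_j\E[(Z_j-Z_{j-1})^2\mid\cF_{j-1}]$, a uniform bound on $|Z_j-Z_{j-1}|$ --- and that bound is precisely the unbounded $\Delta_{v_j}$ you correctly flagged as the obstacle. So neither tool, as you invoke it, closes the argument, and the advertised tail $\exp(-\Omega(s^2/(dk^2 t(G))))$ is not yet justified.

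What the paper actually does is different from both: it uses the \emph{method of averaged bounded differences} (Lemma~\ref{lem:dp}) as the master inequality. The quantity to control is the averaged difference $c_t=\sup_{a,a'}|\E[f\mid Z_1,\dots,Z_{t-1},Z_t=a]-\E[f\mid Z_1,\dots,Z_{t-1},Z_t=a']|$, and the heart of the proof is Claim~\ref{clm:inter}, which shows $c_t\le 15 d\sqrt{\Delta_t\log n}$ with high probability over $Z_1,\dots,Z_{t-1}$. Janson's inequality enters only inside this claim, to control the $S_\ell$ piece (pairs of already-coloured vertices, dependent through shared endpoints, with dependency degree $\le 2d$). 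One then gets $\sum_t c_t^2=\Oh(d^2 t(G)\log n)$ --- note the $d^2$ and the extra $\log n$, weaker than your $dk^2 t(G)$ but still enough for the $2/n^4$ tail after scaling by $\tfrac{9k^2}{2}$. If you want to recover your stronger Bernstein-type bound you would need a genuinely different argument that simultaneously controls the conditional variance and the rare large jumps; the paper does not attempt this.
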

\begin{proof}
\remove{Let us assign a color to  each vertex of the graph independently and uniformly at random out of $[3k]$ colors $\{1,\ldots,3k\}$.} $V(G)$ is \colored with $[3k]$. Let $V_1,\ldots,V_{3k}$ be the resulting partition of $V(G)$. Let $Z_i$ be the random variable that denotes the color assigned to the $i^{\small{th}}$ vertex. For $i \in [3k]$, $\pi(i)$ is a set of three colors defined as follows: $\pi(i)=\{i, (1 + (i+k-1)~\mbox{mod}~3k), (1 + (i+2k-1)~\mbox{mod}~3k)\}$.
\begin{defi}
\label{defi:proper}
A triangle $(a,b,c)$ is said to be \emph{properly colored} if there exists a bijection in terms of coloring from $\{a,b,c\}$ to $\pi(i)$.
\end{defi}
Let $f(Z_1,\ldots,Z_n)= \sum_{i=1}^k t(V_i,V_{k+i},V_{2k+i}).$ Note that $f$ is the number of triangles that are properly colored.
The probability that a triangle is properly colored is $\frac{2}{9k^2}$. So,  $\E[f]=\frac{2t(G)}{9k^2}$.

\remove{Let $f_{ij}(X_1,\ldots,X_n)=t(\{i\},\{j\},V_\Phi(i,j))$ and $f_{i}(X_1,\ldots,X_n)=\sum\limits_{(i,j) \in E(G)} f_{ij}
(X_1,\ldots,X_n)$. Note that $f_{ij}=f_{ji}$ and $f=\frac{1}{3}\sum\limits_{1\leq i < j \leq n} f_{ij} =\frac{1}{3}
\sum\limits_{i \in [n]} f_i$.} 

Let us focus on the instance when vertices $1,\ldots,t-1$ are already colored and we are going to color vertex $t$.
Let $S_{\ell}$~$(S_{r})$ be the set of triangles in $G$ having $t$ as one of the vertices and other two vertices are from $[t-1]$~$([n] \setminus [t])$. $S_{\ell r}$ be the set of triangles in $G$ such that $t$ is a vertex and the second and third 
vertices are from $[t-1]$ and $[n] \setminus [t]$, respectively. \remove{$t$ as one of the vertices, one vertex is from $\{1,\ldots,t-1\}$ and another from $\{t+1,\ldots,n\}$.}
\remove{
\begin{eqnarray*}
&& S_{\ell}=\left\lbrace (i,j,t):1\leq i,j \leq t-1~\mbox{ and $(i,j,t)$ is a triangle in G}\right\rbrace; \\
&& S_{r}=\{(t,i,j):t+1\leq i,j \leq n ~\mbox{ and $(t,i,j)$ is a triangle in G}\};\\
 && S_{\ell r}=\{(i,t,j):1\leq i \leq t-1, t+1 \leq j \leq n~\mbox{ and $(i,t,j)$ is a triangle in G}\}.
 \end{eqnarray*}}
 
 Given that the vertex $t$ is colored with color $c \in [3k]$, let $N_{\ell}^c, N_{r}^c, N_{\ell r}^c$ be the random variables that denote the number of triangles in $S_{\ell}, S_{r}$ and  $S_{\ell r}$ that are properly colored, respectively. Also, let $\E_f^t$ denote the absolute difference 
in the conditional expectation of the number of triangles that are properly colored whose $t^{\small{th}}$-vertex is (possibly) differently colored.
By considering the vertices in $S_{\ell},S_r$ and $S_{\ell r}$ separately, we can bound $\E_f^t$
\begin{eqnarray*}
 \E_f^t &=& \size{\E\left[ f~|~Z_1,\ldots,Z_{t-1}, Z_t=a_t \right]-\E\left[ f~|~Z_1,\ldots,Z_{t-1}, Z_t=a'_t \right]} \\
&=& \size{N_{\ell}^{a_t}-N_{\ell}^{a_t'}  + \E \left[N_{r}^{a_t}-N_{r}^{a_t'}\right] + \E \left[N_{\ell r}^{a_t}-N_{\ell r}^{a_t'}\right]  } \\
&\leq& \size{N_{\ell}^{a_t}-N_{\ell}^{a_t'}}  + \E \left[\size{N_{r}^{a_t}-N_{r}^{a_t'}}\right] + \E \left[\size{N_{\ell r}^{a_t}-N_{\ell r}^{a_t'}}\right]    
\end{eqnarray*}
Now, consider the following claim, which we prove later.
\begin{cl}
 \label{clm:inter}
 \begin{itemize}
 \item[(a)] $\pr(\mid{N_{\ell}^{a_t}-N_{\ell}^{a_t'}}\mid < 8 \sqrt{d \Delta_t \log n}) \geq 1-4n^{-8}$; 
 \item[(b)] $\E [\mid{N_{r}^{a_t}-N_{r}^{a_t'}}\mid ] \leq {\sqrt{d\Delta_t}}/{k}$;
 \item[(c)]  $\E [\mid {N_{\ell r}^{a_t}-N_{\ell r}^{a_t'}}\mid ] < 6d\sqrt{\Delta_t \log n}$.~\footnote{Note that $\Delta_t$ is the number of triangles having $t$ as one of its vertices and we are not assuming any bound on $\Delta_t$. We assume $\Delta_E$, that is number of triangles on any edge, is bounded.}
 \end{itemize}

\end{cl}
Let $c_t=15d \sqrt{{\Delta_t \log n}}$. From the above claim, we have 

$$\E_f^t < 8 \sqrt{d \Delta_t \log n} + \frac{\sqrt{d\Delta_t}}{k} + 6d\sqrt{{\Delta_t \log n}} \leq 15d \sqrt{{\Delta_t \log n}}=c_t $$ 
with probability at least $1-\frac{4}{n^8}$. 
Let $\cB$ be the event that there exists $t \in [n]$ such that $\E_f^t > c_t$.
 By the union bound over all $t \in [n]$,  $\pr(\cB) \leq  \frac{4}{n^{7}}$.
 
  Using the method of \emph{averaged bounded difference}~\cite{DubhashiP09} (See Lemma~\ref{lem:dp} in Appendix~\ref{sec:prelim}), we have 
 $$
 \pr\left(\size{f - \E[f]} > \delta + t(G)\pr(\cB) \right) \leq e^{-{\delta^2}/{\sum\limits_{t=1}^{n} c_t^2}} + \pr(\cB).
 $$
We set $\delta = 60d\sqrt{t(G)}\log n$. Observe that
  $\sum\limits_{t=1}^{n} c_t^2={225d^2\log n} \sum\limits_{t=1}^n \Delta_t ={675d^2t(G) \log n}$. Hence, 
  $$
  	\pr\left(\size{f - \frac{2t(G)}{9k^2}} >  60d\sqrt{{t(G) }}\log n + t(G)\pr(\cB) \right) \leq \frac{1}{n^4} + \frac{1}{n^7},
  $$ 
  that is,  
$$\pr\left(\size{\frac{9k^2}{2}f - t(G)} >  270dk^{2}\sqrt{t(G) }\log n + \frac{9k^2}{2} \cdot \frac{t(G)}{n^7}\right) \leq \frac{1}{n^4} + \frac{1}{n^7}.$$
Since, $\frac{9k^2}{2} \cdot \frac{t(G)}{n^7} < dk^{2}\sqrt{t(G) }\log n$, we get
$$\pr\left(\size{\frac{9k^2}{2}f - t(G)} >  271dk^{2}\sqrt{t(G) } \log n \right) \leq \frac{2}{n^4}.$$
\end{proof}
To finish the proof of Lemma~\ref{theo:sparse}, we need to prove Claim~\ref{clm:inter}.\remove{ When we fix the color of a veretx of a triangle  Informally speaking, Claim~\ref{clm:inter} bounds the difference between the number of properly colored triangles incident on $t$ when $Z_t=a_t$ and that when $Z_t=a_t'$. } For that, we need the following definition and intermediate result (Lemma~\ref{lem:col1}) that is stated in terms of objects, which in the current context can be thought of as vertices. 
\begin{defi}
\label{defi:col_inter}
Let  $\cX$ be a set of $u$ objects \colored with $[3k]$. Let $\alpha , \beta \in [3k]$ and $\alpha \neq \beta$. A pair of objects $\{a,b\}$ is said to be colored with $\{\alpha,\beta\}$ if there is a bijection in terms of coloring from $\{a,b\}$ to $\{\alpha,\beta\}$. An object $o \in \cX$ is colored with 
$\{\alpha,\beta\}$ if $o$ is colored with $\alpha$ or $\beta$.
\end{defi}
Recall Definition~\ref{defi:proper}. A triangle incident on $t$ is properly colored if the pair of vertices in the triangle other than $t$, is colored with $\pi(Z_t) \setminus \{Z_t\}$. Note that,  Claim~\ref{clm:inter} bounds the difference in the number of properly colored triangles incident on $t$ when $Z_t=a_t$ and  when $Z_t=a_t'$, that is, the difference in the number of triangles whose pair of vertices other than $t$ is colored with $\pi(a_t) \setminus \{a_t\} $ and  that is colored with $\pi(a_t') \setminus \{a_t'\} $. As, a vertex can be present in many pairs,  proper coloring of one triangle, incident on $t$, is dependent on the porper coloring of another triangle. However, this dependency is bounded due to our assumption $\Delta_E \leq d$. Now, let us consider the following Lemma.
\begin{lem}
\label{lem:col1}
Let  $\cX$ be a set of $u$ objects \colored with $[3k]$. ${\cal F}$ be a set of $v$ pairs of objects such that an object is present in at most $d$ ($d \leq v$) many pairs and $\cP \subseteq \cX$ be a set of $w$ objects.
 ${\cal \cF}_{\{ \alpha,\beta\}} \subseteq {\cal F}$ be a set of pairs of objects that are colored with $\{\alpha,\beta\}$. $M_{\{\alpha,\beta\}}=\size{\cF_{\{\alpha,\beta\}}}$.
 $\cP_{\{\alpha,\beta\}} \subseteq \cP$ be the set of objects that are colored with $\{\alpha,\beta\}$ and $N_{\{\alpha,\beta\}}=\size{\cP_{\{\alpha,\beta\}}}$.
Then, we have
\begin{itemize}
\item[(i)]
  $\pr\left(\size{M_{\{\alpha,\beta\}} - M_{\{\alpha',\beta'\}}} \geq 8 \sqrt{dv \log u} \right)\leq \frac{4}{u^{8}}$,

\item[(ii)]
  $\E \left[\size{{M}_{\{\alpha,\beta\}} - {M}_{\{\alpha',\beta'\}}}\right] \leq \frac{\sqrt{dv}}{k}$, and

\item[(iii)]
  $\pr \left( \size{N_{\{\alpha, \beta\}} - N_{\{\alpha',\beta'\}}} \geq 4 \sqrt{w \log u} \right) \leq \frac{4}{u^8} $.
\end{itemize}
\end{lem}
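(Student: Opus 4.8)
The plan is to treat the three estimates separately, since (iii) uses only independent randomness while (i) and (ii) must cope with the bounded dependency coming from pairs that share an object. First I would record the elementary fact that, under a uniformly random colouring of $\cX$ with $[3k]$, the expectations $\E[M_{\{\alpha,\beta\}}] = \frac{2v}{(3k)^2}$ and $\E[N_{\{\alpha,\beta\}}] = \frac{2w}{3k}$ do \emph{not} depend on the chosen colour pair; in particular $\E[M_{\{\alpha,\beta\}}] = \E[M_{\{\alpha',\beta'\}}]$ and $\E[N_{\{\alpha,\beta\}}] = \E[N_{\{\alpha',\beta'\}}]$. One may assume the two colour pairs are distinct as sets, since otherwise every difference below is identically $0$. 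Part (iii) is then immediate: writing $N_{\{\alpha,\beta\}} - N_{\{\alpha',\beta'\}} = \sum_{o\in\cP} W_o$ with $W_o = \mathbf{1}[Z_o\in\{\alpha,\beta\}] - \mathbf{1}[Z_o\in\{\alpha',\beta'\}]$, the variables $W_o$ are independent (objects are coloured independently), take values in $\{-1,0,1\}$, and have mean $0$ by colour symmetry; Hoeffding's inequality gives $\pr(|\sum_o W_o| \ge t) \le 2e^{-t^2/(2w)}$, and $t = 4\sqrt{w\log u}$ yields the claimed $2u^{-8} \le 4u^{-8}$.

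For part (i) the obstacle is that $M_{\{\alpha,\beta\}} = \sum_{e\in\cF} X_e$, with $X_e = \mathbf{1}[e\text{ is coloured }\{\alpha,\beta\}]$, is a sum of \emph{dependent} indicators: $X_e$ and $X_{e'}$ are dependent exactly when $e\cap e'\ne\emptyset$. Since every object lies in at most $d$ pairs, each $e$ meets at most $2(d-1)$ other pairs, so (by a greedy edge-colouring, or Vizing's theorem) $\cF$ decomposes into $m\le 2d$ matchings $\cF^{(1)},\dots,\cF^{(m)}$, within each of which the corresponding indicators are genuinely independent. I would then apply Hoeffding inside each matching, $\pr(|M^{(j)}_{\{\alpha,\beta\}} - \E M^{(j)}_{\{\alpha,\beta\}}| \ge 2\sqrt{2}\sqrt{v_j\log u}) \le 2u^{-16}$ where $v_j = |\cF^{(j)}|$ and $\sum_j v_j = v$; union-bound over the $m\le 2d \le u^2$ matchings (using $d\le v\le\binom{u}{2}$); and combine with Cauchy--Schwarz, $\sum_j\sqrt{v_j} \le \sqrt{m v} \le \sqrt{2dv}$, to get $\pr(|M_{\{\alpha,\beta\}} - \E M_{\{\alpha,\beta\}}| \ge 4\sqrt{dv\log u}) \le 2u^{-8}$. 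Applying this to both colour pairs and invoking $\E M_{\{\alpha,\beta\}} = \E M_{\{\alpha',\beta'\}}$ with the triangle inequality then gives (i). (Alternatively, one can apply Janson's concentration inequality~\cite{Janson04} for sums of partly dependent bounded variables directly to $M_{\{\alpha,\beta\}} - M_{\{\alpha',\beta'\}}$, whose dependency graph again has maximum degree $O(d)$.)

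For part (ii), since $\E[D] = 0$ for $D := M_{\{\alpha,\beta\}} - M_{\{\alpha',\beta'\}} = \sum_{e\in\cF} Y_e$ with $Y_e = X_e - X'_e$ (the primed indicator being for $\{\alpha',\beta'\}$), Jensen's inequality gives $\E[|D|] \le \sqrt{\V[D]}$, so it suffices to show $\V[D] \le dv/k^2$. The diagonal terms contribute $\sum_e\V[Y_e] = \frac{4v}{9k^2}$, because the two events are disjoint so $\E[Y_e^2] = \frac{4}{(3k)^2}$. Only pairs $e=\{a,b\}$, $e'=\{a,c\}$ sharing a vertex contribute to the cross terms, and conditioning on $Z_a$ gives $\E[Y_e\mid Z_a = \gamma] = \frac{1}{3k}(\mathbf{1}[\gamma\in\{\alpha,\beta\}] - \mathbf{1}[\gamma\in\{\alpha',\beta'\}])$, whence $\mathrm{Cov}(Y_e,Y_{e'}) \le \frac{4}{(3k)^3}$ --- an \emph{extra factor $\Theta(1/k)$} relative to $\V[Y_e]$, which is exactly what the target bound requires. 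Since there are at most $2dv$ such ordered pairs, $\V[D] \le \frac{4v}{9k^2} + 2dv\cdot\frac{4}{27k^3} = \frac{4v}{9k^2}\bigl(1 + \tfrac{2d}{3k}\bigr) \le \frac{8dv}{9k^2} \le \frac{dv}{k^2}$ for all integers $d,k\ge1$, finishing (ii). The genuinely delicate step, and the one I expect to be the main obstacle, is the dependency handling in part (i) via the matching decomposition (equivalently, Janson's inequality); in part (ii) the single thing to get right is that overlapping-pair covariances scale like $1/k^3$ rather than $1/k^2$, which is what separates the correct bound $\sqrt{dv}/k$ from a weaker $\sqrt{dv}$.
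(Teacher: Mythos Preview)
Your proposal is correct. Parts (ii) and (iii) are essentially the paper's own argument: the paper also bounds $\E[|D|]$ via $\sqrt{\E[D^2]}$ with the same $4/(9k^2)$ diagonal and $4/(27k^3)$ off-diagonal contributions, and for (iii) applies Hoeffding (the paper bounds each $N_{\{\alpha,\beta\}}$ about its mean separately and then uses the triangle inequality, whereas you apply Hoeffding directly to the centred difference $\sum_o W_o$, which is marginally cleaner but the same in spirit).

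The only real difference is in part (i). The paper invokes, as a black box, the Chernoff--Hoeffding inequality for sums with bounded dependency (their Lemma~\ref{lem:depend:high_prob}, a consequence of the fractional-chromatic-number bound of Janson) to get $\pr(|M_{\{\alpha,\beta\}} - 2v/(9k^2)| \ge 4\sqrt{dv\log u}) \le 2u^{-8}$ in one line, and then combines the two colour pairs by the triangle inequality exactly as you do. Your primary route---decompose $\cF$ into at most $2d$ matchings, apply ordinary Hoeffding inside each, union-bound, and recombine via Cauchy--Schwarz---is precisely a hands-on proof of that cited lemma specialised to this setting, so the two approaches are close cousins. What you gain is self-containment (no appeal to Janson); what the paper gains is brevity. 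You yourself note the Janson alternative, which is exactly what the paper uses.
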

\begin{proof}
\begin{itemize}
\item[(i)] Let $\cF=\{\{a_1,b_1\}, \ldots, \{a_v,b_v\}\}$. 
Let $X_i$ be the indicator random variable such that $X_i=1$ if and only if $\{a_i,b_i\}$ is colored with $\{\alpha,\beta\}$, where $i \in [v]$. Note that $M_{\{\alpha,\beta\}}=\sum_{i=1}^{v}X_i$. Also, $\E[X_i]=\frac{2}{9k^2}$, hence $\E[M_{\{\alpha, \beta\}}]\remove{=\E[M_{\alpha', \beta'}]}=\frac{2v}{9k^2}$. 

$X_i$ and $X_j$ are dependent if and only if $\{a_i,b_i\} \cap \{a_j,b_j\} \neq \emptyset$. As each object can be present in 
at most $d$ many pairs of objects, there are at most $2d$ many $X_j$'s on which an $X_i$ depends. Now using \emph{Chernoff-Hoeffding's type bound in the bounded dependent setting}~\cite{DubhashiP09} (see Lemma~\ref{lem:depend:high_prob} in Appendix~\ref{sec:prelim}), we have $$\pr \left(\size{M_{\{\alpha,\beta\}} - \frac{2v}{9k^2}} \geq 4\sqrt{dv\log u}\right) \leq \frac{2}{u^{8}}.$$
Similarly, one can also show that 
$\pr \left(\size{M_{\{\alpha',\beta'\}} - \frac{2v}{9k^2}} \geq 4\sqrt{dv\log u}\right) \leq \frac{2}{u^{8}}$.
Note that $$\size{M_{\{\alpha,\beta\}}-M_{\{\alpha',\beta'\}}} \leq \size{M_{\{\alpha,\beta\}} - \frac{2v}{9k^2}} + \size{M_{\{\alpha',\beta'\}} - \frac{2v}{9k^2}}.$$ Hence,
\begin{eqnarray*}
&&\pr\left(\size{M_{\{\alpha,\beta\}} - M_{\{\alpha',\beta'\}}} \geq  8 \sqrt{dv \log u} \right) \\
&&\leq \pr \left( \size{M_{\{\alpha,\beta\}} - \frac{2v}{9k^2}} + \size{M_{\{\alpha',\beta'\}} - \frac{2v}{9k^2}} \geq 8 \sqrt{dv \log u}\right)\\
&&\leq   \pr \left(\size{M_{\{\alpha,\beta\}} - \frac{2v}{9k^2}} \geq 4\sqrt{dv\log u}\right) + \pr \left(\size{M_{\{\alpha',\beta'\}} - \frac{2v}{9k^2}} \geq 4\sqrt{dv\log u}\right) \\
&&\leq {4}{u^{-8}}.
\end{eqnarray*}

\item[(ii)] Let $X_i, i \in [v]$, be the random variable such that $ X_i=1$ if $\{a_i,b_i\}$ is colored with $\{\alpha,\beta\}$; 
$X_i=-1$ if $\{a_i,b_i\}$ is colored with $\{\alpha',\beta'\}$; $X_i=0$, otherwise.
Let $X=\sum\limits_{i=1}^{v}X_i$. Note that
 $$
 M_{\{\alpha,\beta\}} - M_{\{\alpha',\beta'\}} =X= \sum\limits_{i=1}^{v}X_i.
 $$ 
So, we need to bound $\E[\size{X}] $ to prove the claim.

The random variables $X_i$ and $X_j$ are dependent if and only if $\{a_i,b_i\} \cap \{a_j,b_j\} \neq \emptyset$. As each object can be present in 
at most $d$ many pairs of objects, there are at most $2d$ many $X_j$'s on which an $X_i$ depends.
Observe that $\pr(X_i=1)=\pr(X_i=-1)=\frac{2}{9k^2}$.  So, $\E[X_i]=0$ and $\E[X_i^2]=\frac{4}{9k^2}$. If $X_i$ and $X_j$ are independent, then $\E[X_iX_j]=\E[X_i]\cdot \E[X_j]=0$. If $X_i$ and $X_j$ are dependent, then $\E[X_iX_j]  \leq  \pr(X_iX_j=1)$.
\begin{eqnarray*}
 \pr(X_iX_j=1)&=& \pr(X_i=1,X_j=1) + \pr (X_i=-1,X_j=-1)\\
&=& \pr(X_i=1)\cdot \pr (X_j=1~|~X_i=1) + \pr(X_i=-1)\cdot \pr (X_j=-1~|~X_i=-1)\\
&=& \frac{2}{9k^2}\cdot \frac{1}{3k}+ \frac{2}{9k^2}\cdot \frac{1}{3k}\\
&=& \frac{4}{27k^3}
\end{eqnarray*}
Using the expression $\E[X^2]=\sum_{i=1}^v \E[X_i^2] + 2 \cdot \sum_{1\leq i <  j \leq v} \E[X_iX_j] $
and recalling the fact that each $X_i$ depends on at most $2d$ many other $X_j$'s, we get
$$ \E[X^2] \leq v \cdot \frac{4}{9k^2} + 2dv \cdot  \frac{4}{27k^3} \leq \frac{8dv}{9k^2}.$$
Now, using $\E[\size{X}] \leq \sqrt{\E[X^2]}$, we get $\E[\size{X}]  < \frac{\sqrt{dv}}{k}$.

\item[(iii)] Let $\cP=\{o_1,\ldots,o_w\}$ be the set of $w$ objects.
 Let $X_i, i \in [w]$, be the indicator random variable such that  $ X_i=1$ if and ony if $o_i$ is colored with $\{\alpha,\beta\}$. Note that $N_{\{\alpha,\beta\}}=\sum\limits_{i=1}^{w}X_i$. Observe that $\E[X_i]=\frac{2}{3k}$ and hence, $\E \left[N_{\{\alpha,\beta\}} \right]=\frac{2w}{3k}$. Note that $X_i$ and $X_j$ are independent. 
Applying Hoeffding's inequality~(See Lemma~\ref{lem:hoeff_inq} in Appendix~\ref{sec:prelim}), we get 
$$\pr \left(\size{N_{\{\alpha,\beta\}}-\frac{2w}{3k}}\geq 2 \sqrt{w \log u} \right) \leq \frac{2}{u^8}.$$
Similarly, we can aso show that 
$\pr \left(\size{N_{\{\alpha',\beta'\}} - \frac{2w}{3k}} \geq 2\sqrt{w\log u}\right) \leq \frac{2}{u^{8}}$.
\remove{Note that $\size{N_{\{\alpha,\beta\}}-N_{\{\alpha',\beta'\}}} \leq \size{N_{\{\alpha,\beta\}} - \frac{2w}{3k}} + \size{N_{\{\alpha',\beta'\}} - \frac{2w}{3k}}$.} Hence,
\begin{eqnarray*}
&& \pr(\size{N_{\{\alpha,\beta\}} - N_{\{\alpha',\beta'\}}} \geq 4 \sqrt{w \log u})\\ 
&&\leq \pr \left( \size{N_{\{\alpha,\beta\}} - \frac{2w}{3k}} + \size{N_{\{\alpha',\beta'\}} - \frac{2w}{3k}} \geq 4 \sqrt{w \log u}\right)\\
&&\leq   \pr \left(\size{N_{\{\alpha,\beta\}} - \frac{2w}{3k}} \geq 2\sqrt{w\log u}\right) + \pr \left(\size{N_{\{\alpha',\beta'\}} - \frac{2w}{3k}} \geq 2\sqrt{w\log u}\right)  \\
&&\leq \frac{4}{u^{8}}.
\end{eqnarray*}
\remove{
\begin{eqnarray*}
&&\pr(\size{N_{\{\alpha,\beta\}} - N_{\{\alpha',\beta'\}}} \geq 4 \sqrt{w \log u})\\
&\leq & \pr \left( \size{N_{\{\alpha,\beta\}} - \frac{2w}{3k}} + \size{N_{\{\alpha',\beta'\}} - \frac{2w}{3k}} \geq 4 \sqrt{w \log u}\right)\\
&\leq&   \pr \left(\size{N_{\{\alpha,\beta\}} - \frac{2w}{3k}} \geq 2\sqrt{w\log u}\right) + \pr \left(\size{N_{\{\alpha',\beta'\}} - \frac{2w}{3k}} \geq 2\sqrt{w\log u}\right) \leq \frac{4}{u^{8}}.
\qedhere
\end{eqnarray*}
}
\end{itemize}
\end{proof}

We will now give the proof of Claim~\ref{clm:inter}.

\begin{proof}[Proof of Claim~\ref{clm:inter}]
 
\begin{itemize}
 \item[(a)] Let $S_{\ell}=\{(a_1,b_1,t),\ldots, (a_v,b_v,t)\}$. Note that $v \leq \Delta_t$. As $\Delta_E\leq d$, each vertex in $[n]$ can be present in at most d many pairs of $S_{\ell}$. 
 Now we apply Lemma~\ref{lem:col1}. Set $\cX=[n]$ and $\cF = S_{\ell}$ in Lemma ~\ref{lem:col1}.
 Observe that $N_{\ell}^{a_t}=M_{\pi(a_t) \setminus \{a_t\}}$ and $N_{\ell}^{a'_t}=M_{\pi(a'_t) \setminus \{a'_t\}}$. So, by of Lemma~\ref{lem:col1} (i),
 $$ \pr\left(\size{N_{\ell}^{a_t}-N_{\ell}^{a_t'}} \geq  8 \sqrt{d v \log n}\right) \leq  \frac{4}{n^{8}} .$$ This implies 
 $\pr\left(\size{N_{\ell}^{a_t}-N_{\ell}^{a_t'}} \geq  8 \sqrt{d \Delta_t \log n}\right) \leq  \frac{4}{n^{8}}$.
 \remove{
 \begin{eqnarray*}
 \pr\left(\size{N_{\ell}^{a_t}-N_{\ell}^{a_t'}} \geq  8 \sqrt{d v \log n}\right) &\leq&  \frac{4}{n^{8}} \\
 \pr\left(\size{N_{\ell}^{a_t}-N_{\ell}^{a_t'}} \geq  8 \sqrt{d \Delta_t \log n}\right) &\leq&  \frac{4}{n^{8}}.
 \end{eqnarray*}}

\item[(b)] Let $S_{r}=\{(t,a_1,b_1),\ldots, (t,a_v,b_v)\}$. Note that $v \leq \Delta_t$, the number of triangles incident on vertex $t$. As $\Delta_E\leq d$, each vertex in $[n]$ can be present in at most d many pairs of $S_{r}$. 
 Now we apply Lemma~\ref{lem:col1}. Set $\cX=[n]$ and $\cF = S_{r}$ in Lemma~\ref{lem:col1}.
 Observe that $N_{r}^{a_t}=M_{\pi(a_t) \setminus \{a_t\}}$ and $N_{r}^{a'_t}=M_{\pi(a'_t) \setminus \{a'_t\}}$. By Lemma~\ref{lem:col1} (ii), we get $$ \E \left[\size{N_{r}^{a_t}-N_{r}^{a_t'}}\right] \leq  \frac{\sqrt{dv}}{k}  \leq \frac{\sqrt{d\Delta_t}}{k} .$$
\remove{ \begin{eqnarray*}
 \E \left[\size{N_{r}^{a_t}-N_{r}^{a_t'}}\right] &\leq&  \frac{\sqrt{dv}}{k}  \leq \frac{\sqrt{d\Delta_t}}{k} .
 \end{eqnarray*}}
 
\item[(c)]
Let $S_{\ell r}=\{(a_1,t,b_1),\ldots,(a_w,t,b_w)\}$. Without loss of generality, assume that $a_i \in [t-1]$ and $b_i \in [n] \setminus [t]$.
\remove{ $1 \leq a_i \leq t-1$ and $t+1 \leq b_i \leq n$, for each $i \in [w]$.} Note that $w \leq \Delta_t$.
   Given that the vertex $t$ is colored with color $c$ and we know $Z_1,\ldots,Z_{t-1}$, define the set $P_c$ as 
   $$P_{c} := \{(a,t,b) \in S_{\ell r}~: \mbox{$t$ is colored with $c$ and $\pr((a,t,b)$ is properly colored}) > 0\}.$$ 
   
   Let $Q_c=\size{P_c}$.  Observe that for $(a,t,b) \in S_{\ell r}$, $\pr((a,t,b)~\mbox{is properly colored}) > 0$ if and only if $a$ is colored with some color in $\pi(c) \setminus \{c\}$. Now we apply Lemma~\ref{lem:col1}. Set $\cX=[n]$, $\cP=\{a_1,\ldots,a_w\}$. Observe that 
   $\cP_{\pi(a_t) \setminus {a_t}}=P_{a_t} $ and   $\cP_{\pi(a'_t) \setminus {a'_t}}=P_{a'_t}$. By (iii) of Lemma~\ref{lem:col1}, 
   $$\pr\left(\size{Q_{a_t} - Q_{a'_t}} \geq 4\sqrt{w \log n}\right) \leq \frac{4}{n^8}.$$
   Let $\cE$ be the event that $ \size{Q_{a_t} - Q_{a'_t}} \geq 4\sqrt{w \log n}$. So, $\pr(\cE) \leq \frac{4}{n^8}$. Assume that $\cE$ has not occurred. Let $P=P_{a_t} \cap P_{a'_t}=\{(x_1,t,y_1),\ldots,(x_{q},t,y_{q})\}$. Note that $q \leq w \leq \Delta_{t}$. Recall that $Z_x$ is the random variable that denotes the color assigned to vertex $x \in [n]$.  Let $X_i, i \in [q]$, be the random variable such that $X_i=1$ if $y_i$ is colored 
  with $\pi(a_t) \setminus \{ Z_{x_i},a_t \}$; $X_i=-1$ if $y_i$ is colored with $\pi(a'_t) \setminus \{Z_{x_i},a'_t\}$; $X_{i} = 0$, otherwise. Let $X = \sum_{i=1}^q X_i$. Observe that $X_i$ and $X_j$ are dependent if and only if $y_i=y_j$. As $\Delta_E \leq d$, there can be at most $d$ many $y_j$'s such that $y_i=y_j$. So, an $X_i$ depends on at most $d$ many other $X_j$'s.  
   
   Observe that $\pr(X_i=1)=\pr(X_i=-1)=\frac{1}{3k}$. So, $\E[X_i=0]$ and $\E[X_i^2]=\frac{2}{3k}$.
   If $X_i$ and $X_j$ are independent, then $\E[X_iX_j]=0$.
   If $X_i$ and $X_j$ are dependent, then 
   $$
   \E[X_iX_j] \leq  \pr(X_i=1,X_j=1) + \pr(X_i=-1,X_j=-1)
   \leq \pr(X_i=1) + \pr (X_j=-1) = \frac{2}{3k}.
   $$

Using the expression 
$\E[X^2]=\sum_{i=1}^v \E[X_i^2] + 2 \cdot \sum_{1\leq i <  j \leq v} \E[X_iX_j]$ and the fact that each $X_i$ depends on at most $d$ many other $X_j$'s, we get
$$
\E[X^2] \leq v \cdot \frac{2}{3k} + d v \cdot \frac{2}{3k} \leq \frac{dv}{k} \leq \frac{d \Delta_t}{k}.
$$
Since,  $\E[\size{X}] \leq \sqrt{\E[X^2]}$, we get
$ \E[\size{X}] \leq   \sqrt{\frac{d\Delta_t}{k}}$. Using $\Delta_E \leq d$, we have  
\begin{eqnarray*}
  \E [\mid N_{\ell r}^{a_t}-N_{\ell r}^{a_t'}\mid~|~\cE^c] &=& d \cdot \mid Q_{a_t} -Q_{a'_t} \mid + \E[\size{X}] \\
  &<& 4d \sqrt{\Delta_t \log n} +  \sqrt{\frac{d\Delta_t}{k}} < 5d\sqrt{\Delta_t \log n}.
\end{eqnarray*}

Observe that $  \E [\mid N_{\ell r}^{a_t}-N_{\ell r}^{a_t'} \mid~|~\cE ] \leq w \leq \Delta_t $.
Putting everything together,
\begin{eqnarray*}
\E \left[ \mid {N_{\ell r}^{a_t}-N_{\ell r}^{a_t'}} \mid \right] &=& \pr(\cE) \cdot \E \left[\mid{N_{\ell r}^{a_t}-N_{\ell r}^{a_t'}}\mid \; | \; \cE \right] + \pr(\cE^c) \cdot \E \left[\mid{N_{\ell r}^{a_t}-N_{\ell r}^{a_t'}}\mid \; | \; \cE^c \right]\\
&<& \frac{4}{n^8} \cdot \Delta_t + 1 \cdot 5d\sqrt{\Delta_t \log n} \leq 6d\sqrt{\Delta_t \log n}
\end{eqnarray*}
\end{itemize}
\end{proof}

\section{Estimation: Exact and Approximate}
\label{sec:coarse}
\noindent
In this Section, we prove Lemmas~\ref{lem:exact_decide} (restated as Lemma~\ref{lem:exact_decide-restated}),~\ref{lem:decide_exact} (restated as Lemma~\ref{lem:decide_exact-restated}) and ~\ref{lem:exact} (restated as Lemma~\ref{lem:exact-restated}). We first prove Lemmas~\ref{lem:decide_exact} and ~\ref{lem:exact}, whose proofs are very similar. Then we prove  Lemma~\ref{lem:exact_decide} that in turn uses Lemma~\ref{lem:decide_exact}.

\remove{
\begin{proof}[Proof of Lemma~\ref{lem:exact_decide}]
Let us consider $\cN$ many \emph{random} $3$-partitions of $V(G)$ such that $A_i,B_i,C_i$ denote the $i$-th $3$ partition of $V(G)$. Note that for each $i \in [\cN]$, vertex in $V(G)$ belongs to any one of $A_i,B_i,C_i$ with probability $1/2$ independently. Using the algorithm corresponding to Lemma~\ref{lem:decide_exact}, we decide whether there exists an $i \in [\cN]$ with $t(A_i,B_i,C_i)> \tau$. If yes, we report 
$t(G)>tau$. Otherwise, we report $\frac{9}{2N}\sum\limits_{i=1}^Nt(A_i,B_i,C_i)$ as the estimate.
We color $V(G)$ with $3$ colors. Let $h:V(G) \rightarrow [100\tau^2]$ be the coloring function and $V_i=\{v \in V(G)~:~h(v) = i\}$, i.e., the vertices with color $i$, where $i \in [100\tau^2]$. Note that $V_1,\ldots,V_{100\tau^2}$ forms a partition of $V(G)$. We make \tis queries with input $V_i,V_j,V_k$ for each $1 \leq i <j<k \leq 100\tau^2$. Observe that we make $\Oh(\tau^6)$ \tis queries. We construct a $3$-uniform hypergraph $\cH$, where $U(\cH)=\{V_1,
\dots,V_{100\tau^2}\}$~\footnote{$U(\cH)$ and $\cF(\cH)$ denote the set of vertices and hyperedges in a hypergraph $\cH$, respectively.} and $(V_i,V_j,V_k) \in \cF(\cH)$ if and only if \tis oracle answers yes with $V_i,V_j,V_k$ given as input.
We repeat the above procedure $\gamma$ times, where $\gamma = 50 \log n$. Let $\cH_1,\ldots,\cH_\gamma$ be the set 
of corresponding hypergraphs and $h_i$ be the coloring function to form the hypergraph $\cH_i$, where $i \in [\gamma]$. Then we compute $A=\max\{\size{\cF(\cH_1)},\ldots,\size{\cF(\cH_\gamma)}\}$. If $A \geq \tau$, we report $t(G) \geq \tau$. Otherwise, we report $A$ as $t(G)$.  Note that the total number of \tis queries is $\Oh(\tau^6 \log n)$. Now, we analyze the cases $t(G) \geq \tau$ and $t(G) < \tau$ separately. 
 
\begin{description}
\item[ (i) $t(G) \geq \tau$:]  Consider a fixed set $T$ of $\tau$ triangles. Let $T_v$ be the set of vertices that is present in some triangle in $T$. Observe that $\size{T_v} \leq 3 \tau$.  Let $\cE_i$ be the event that the vertices in $T_v$ are uniquely colored by the function $h_i$, i.e., $\cE_i:h_i(u)=h_i(v)$ if and only if $u=v$, where $u,v \in T_v$. First we prove that $\pr(\cE) \geq \frac{9}{10}$ by computing
$\pr(\cE_i^c)$.
$$\pr \left( \cE_i^c\right)\remove{~\footnote {$\cE_i^c ~\mbox{denotes the complement of} \cE$}}\leq \sum\limits_{u,v \in T_v} \pr(h_i(u)=h_i(v)) \leq \sum\limits_{u,v \in T_v} \frac{1}{100\tau^2} \leq \frac{\size{T_v}^2}{100 \tau^2} < \frac{1}{10}.$$

 Let {\sc Prop}$_i$ be the property that for each triangle $z \in T$, there is a corresponding 
 hyperedge in $\cF(\cH_i)$, where $i \in [\gamma]$. Specifically, for each triangle $(a_1,a_2,a_3) \in T$ there exists a hyperedge $(a'_1,a'_2,a'_3) \in \cF(\cH_i)$ such that $h_i(a_j)=h_i(a'_j)$ for each $j \in [3]$. Note that, if {\sc Prop}$_i$ holds, then $\size{\cF(\cH_i)} \geq  \size{T} \geq \tau$. By the definition of \tis oracle, {\sc Prop}$_i$ holds when the event $\cE_i$ occurs, i.e., {\sc Prop}$_i$ holds with probability at least $\frac{9}{10}$. This implies, with probability $\frac{9}{10}$, $\size{\cF(\cH_i)} \geq \tau$. Recall that $A=\max\{\size{\cF(\cH_1)},\ldots,\size{\cF(\cH_\gamma)}\}$ and $\gamma= 50 \log n$. So, 
$$
\pr(A < \tau)=\left( 1-\frac{9}{10}\right)^{50 \log n} \leq \frac{1}{n^{10}}.
$$
Hence, if $t(G) \geq \tau$, our algorithm detects it with probability at least $1-\frac{1}{n^{10}}$.

\item[ (ii) $t(G) < \tau$:] Let $T$ be the set of all $t(G)$ triangles in $G$ and $T_v$ be the set of vertices that is present in some triangle in $T$. Observe that $\size{T_v} \leq 3 \cdot t(G) < 3 \tau$.  Let $\cE_i$ be the event that the vertices in $T_v$ are uniquely colored by the function $h_i$, i.e., $\cE_i:h_i(u)=h_i(v)$ if and only if $u=v$, where $u,v \in T_v$. First we prove that $\pr(\cE_i) \geq \frac{9}{10}$ by bounding $\pr(\cE_i^c)$
$$
\pr \left( \cE_i^c\right)\leq \sum\limits_{u,v \in T_v} \pr(h_i(u)=h_i(v)) \leq \sum\limits_{u,v \in T_v} \frac{1}{100\tau^2} \leq \frac{\size{T_v}^2}{100 \tau^2} < \frac{1}{10}.
$$

 Let {\sc Prop}$_i$ be the property that for each triangle $z \in T$, there is a corresponding 
 hyperedge in $\cF(\cH_i)$, where $i \in [\gamma]$. Specifically, for each triangle $(a_1,a_2,a_3) \in T$ there exists a hyperedge 
 $(a'_1,a'_2,a'_3) \in \cF(\cH_i)$ such that $h_i(a_j)=h_i(a'_j)$ for each $j \in [3]$. Note that, if {\sc Prop}$_i$ holds, then $
 \size{\cF(\cH_i)} =  t(G)$. By the definition of \tis oracle, {\sc Prop}$_i$ holds when the event $\cE_i$ occurs, i.e., {\sc 
 Prop}$_i$ holds with probability at least $\frac{9}{10}$. This implies, with probability $\frac{9}{10}$, $\size{\cF(\cH_i)} = 
 t(G)$. Recall that $A=\max\{\size{\cF(\cH_1)},\, \ldots,\, \size{\cF(\cH_\gamma)}\}$ and $\gamma= 50 \log n$. By the construction of 
 $\cH_i$, $\size{\cF(\cH_i)} \leq t(G)$. So, $A \leq t(G)$ and  
$$
\pr(A \neq t(G))= \pr (A < t(G)) \leq \left( 1-\frac{9}
 {10}\right)^{50 \log n} \leq \frac{1}{n^{10}}.
$$
Hence, if $t(G) < \tau$, our algorithm outputs the exact value of $t(G)$ with probability at least $1-\frac{1}{n^{10}}$.
\end{description} 
\end{proof}}
\remove{After sparsification, we have to determine the values of many $t(A,B,C)$'s. If $t(A,B,C)$ is small, we use the deterministic algorithm coresponding to Lemma~\ref{lem:exact}. If $t(A,B,C)$ is large, we again sparsify the graph using Lemma~\ref{theo:sparse1}.}

\begin{lem}[Lemma~\ref{lem:exact} restated]
There exists a deterministic algorithm that given any disjoint subsets $A,B,C \subset V(G)$ of any graph $G$, can determine the exact value of $t(A,B,C)$ using $\Oh(t(A,B,C) \log n)$ \tis queries. 
\label{lem:exact-restated}
\end{lem}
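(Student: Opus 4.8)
The plan is to enumerate the triangles of $G(A,B,C)$ by a deterministic divide-and-conquer recursion and then to bound the size of the recursion tree. Define a procedure $\textsc{Count}(A,B,C)$ on triples of non-empty pairwise disjoint vertex sets: it first asks the \tis query $(A,B,C)$; if the answer is ``NO'' it returns $0$; if the answer is ``YES'' and $\size{A}=\size{B}=\size{C}=1$ it returns $1$; otherwise it selects one of $A,B,C$ that has at least two elements, splits it into two non-empty near-equal halves --- say $A=A_1\sqcup A_2$ --- and returns $\textsc{Count}(A_1,B,C)+\textsc{Count}(A_2,B,C)$. This is deterministic, and since $A_1,A_2\subseteq A$ every recursive call is again on non-empty pairwise disjoint sets, so the oracle is always queried on a valid input.

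Correctness follows by induction on $\size{A}+\size{B}+\size{C}$. A ``NO'' answer means $t(A,B,C)=0$ by definition of \tis, so returning $0$ is correct; a ``YES'' answer with all three sets singletons forces the unique candidate triangle to be present, so $t(A,B,C)=1$. In the recursive case, a triangle of $G(A,B,C)$ has exactly one vertex in the set being split and therefore lies in exactly one of $G(A_1,B,C)$, $G(A_2,B,C)$; hence $t(A,B,C)=t(A_1,B,C)+t(A_2,B,C)$ and the induction hypothesis applies to each summand.

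For the query complexity, identify the execution with a rooted binary tree $\cT$, one node per recursive call, hence one \tis query per node, where an internal node (one that recurses) has exactly two children. Let $t=t(A,B,C)$. I will use two facts. \emph{(Depth.)} Along any root-to-leaf path, each of $A,B,C$ can be halved at most $\lceil\log_2 n\rceil$ times before reaching size one, so $\cT$ has depth at most $3\lceil\log_2 n\rceil=\Oh(\log n)$. \emph{(Internal nodes.)} If a node $v$ is internal then its query returned ``YES'' and its sets are not all singletons, so by the correctness argument the subtree below $v$ contains at least one leaf returning $1$; thus every internal node is an ancestor of some such ``YES-leaf'', and there are exactly $t$ of those overall. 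Charging each internal node to one ``YES-leaf'' below it, and noting that a ``YES-leaf'' has at most $\Oh(\log n)$ ancestors, there are at most $\Oh(t\log n)$ internal nodes. In a tree in which every internal node has two children the number of leaves is one more than the number of internal nodes, so $\size{\cT}=\Oh(t\log n)$, which is the number of \tis queries (and just one query when $t(A,B,C)=0$).

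The only subtle point is this last counting argument: a priori the recursion tree could be exponentially large, and what rules this out is the combination of logarithmic depth with the observation that every internal node witnesses a triangle, which confines all internal nodes to the $\Oh(t\log n)$ ancestor slots of the $t$ ``YES-leaves''. Everything else is bookkeeping. (Truncating the recursion as soon as $\tau+1$ triangles have been isolated turns the same procedure into the algorithm behind Lemma~\ref{lem:decide_exact}.)
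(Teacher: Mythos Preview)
Your proof is correct and follows essentially the same divide-and-conquer approach as the paper: build a recursion tree, observe that every internal node lies on a root-to-leaf path ending at one of the $t(A,B,C)$ singleton ``YES'' triples, and multiply by the $\Oh(\log n)$ depth. The only cosmetic difference is that the paper halves \emph{all} three sets simultaneously at each step (so nodes may have $2$, $4$, or $8$ children and the depth is $\leq 2\log n$), whereas you halve one set at a time (binary tree, depth $\leq 3\lceil\log_2 n\rceil$); the counting argument and the resulting bound are the same.
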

\begin{proof}
We initialize a tree ${\cal T}$ with $(A,B,C)$ as the root. We build the tree such that each node is labeled with either $0$ or $1$. If $t(A,B,C)=0$, we label the root with $0$ and terminate. Otherwise, we label the root with $1$ and do the following as long as there is a leaf node $(U,V,W)$ labeled with $1$.
\begin{itemize}
\item[(i)] If $t(U,V,W)=0$, then we label $(U,V,W)$ with $0$ and go to other leaf node labeled as $1$ if any. Otherwise, we label $(U,V,W)$ as $1$ and do the following.
\item[(ii)] If $\size{U}=\size{V}=\size{W}=1$, then we add one node $(U,V,W)$ as a child of $(U,V,W)$ and label the new node as $0$. Then we go to other leaf node labeled as $1$ if any.
\item[(iii)] If $\size{U}=1, \size{V}=1$ and $\size{W} >1$, then we partition the set $W$ into $W_1$ and $W_2$ such that
 $\size{W_1}=\lceil \frac{\size{W}}{2} \rceil$ and $\size{W_2}=\lfloor \frac{\size{W}}{2} \rfloor$ ; and we add $(U,V,W_1)$ and $(U,V,W_2)$ as two children of $(U,V,W)$.
 The case $\size{U}=1, \size{V}>1$, $\size{W} =1$ and $\size{U}>1, \size{V}=1$, $\size{W} =1$ are handled similarly.
 \item[(iv)] If $\size{U}=1, \size{V}>1$ and $\size{W} >1$, then we partition the set $V$ into $V_1$ and $V_2$ (similarly, $W$ into $W_1$ and $W_2$) such that
  $\size{V_1}=\lceil \frac{\size{V}}{2} \rceil$ and $\size{V_2}=\lfloor \frac{\size{V}}{2} \rfloor$  ($\size{W_1}=\lceil \frac{\size{W}}{2} \rceil$ and $\size{W_2}=\lfloor \frac{\size{W}}{2} \rfloor$); and we add $(U,V_1,W_1)$, $(U,V_1,W_2)$, $(U,V_2,W_1)$ and $(U,V_2,W_2)$ as four children of $(U,V,W)$.
 The case $\size{U}>1, \size{V}>1$, $\size{W} =1$ and $\size{U}>1, \size{V}=1$ $\size{W} >1$ are handled similarly.
 
  \item[(v)] If $\size{U}>1, \size{V}>1$ and $\size{W} >1$, then we partition the sets $U,V,W$ into $U_1$ and $U_2$; $V_1$ and $V_2$; $W_1$ and $W_2$, respectively, such that
   $\size{U_1}=\lceil \frac{\size{U}}{2} \rceil$ and $\size{U_2}=\lfloor \frac{\size{U}}{2} \rfloor$; $\size{V_1}=\lceil \frac{\size{V}}{2} \rceil$ and $\size{V_2}=\lfloor \frac{\size{V}}{2} \rfloor$;  $\size{W_1}=\lceil \frac{\size{W}}{2} \rceil$ and $\size{W_2}=\lfloor \frac{\size{W}}{2} \rfloor$. We add $(U_1,V_1,W_1)$, $(U_1,V_1,W_2)$, $(U_1,V_2,W_1)$, $(U_1,V_2,W_2)$ $(U_2,V_1,W_1)$, $(U_2,V_1,W_2)$, $(U_2,V_2,W_1)$ and $(U_2,V_2,W_2)$ as eight children of $(U,V,W)$.

\end{itemize}
Let $\cT'$ be the tree after deleting all the leaf nodes in $\cT$. Observe that $t(A,B,C)$ is the number of  leaf nodes in $\cT'$; and
\begin{itemize}
\item the height of ${\cal T}$ is bounded by $\max \{\log \size{A}, \log \size{B}, \log \size{C} \} + 1 \leq 2\log n $,
\item the query complexity of the above procedure is bounded by the number of nodes in ${\cal T}$ as we make at most one query per node of $\cT$.
\end{itemize}
The number of nodes in $\cT'$, the number of internal nodes of $\cT$, is bounded by $2t(A,B,C) \log n$. So,  the number of leaf nodes in $\cT$ is at most $16t(A,B,C) \log n$ and hence the total number of nodes in $\cT$ is at most $16t(U,V,W) \log n$. Putting everything together, the required query complexity is $\Oh(t(A,B,C)\log n)$. 
\end{proof}

\begin{lem}[Lemma~\ref{lem:decide_exact} restated]
There exists a deterministic algorithm that given any disjoint subsets $A,B,C \subset V(G)$ of any graph $G$ and a threshold parameter $\tau \in \\cN$, can decide whether $t(A,B,C) \leq \tau$ using $\Oh(\tau \log n)$ \tis queries. 
\label{lem:decide_exact-restated}
\end{lem}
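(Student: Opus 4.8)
The plan is to reuse the binary-subdivision tree from the proof of Lemma~\ref{lem:exact}, but to traverse it in \emph{depth-first} order and to halt as soon as enough triangles have been exhibited. Starting from the root $(A,B,C)$, a node $(U,V,W)$ is tested with a single \tis query: if $t(U,V,W)=0$ the branch is abandoned; if $\size{U}=\size{V}=\size{W}=1$ and the answer is YES, then $(U,V,W)$ is a triangle and we increment a counter $c$; otherwise we split the parts of size larger than $1$ into (almost) equal halves and recurse into the at most eight children \emph{one at a time}. As soon as $c$ reaches $\tau+1$ we stop and report $t(A,B,C)>\tau$; if instead the traversal finishes with $c\le\tau$ we report $t(A,B,C)\le\tau$ (indeed $t(A,B,C)=c$). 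The procedure is deterministic, and every set fed to the oracle is nonempty and pairwise disjoint, because halving a part of size at least $2$ yields two nonempty parts while singleton parts are left untouched.

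For correctness I would invoke the structural facts established in the proof of Lemma~\ref{lem:exact}: the nodes with $t(\cdot,\cdot,\cdot)\neq0$ form a tree $\cT'$ of height $\Oh(\log n)$ whose leaves are in bijection with the triangles of $G(A,B,C)$ --- a node with $t\ge1$ that is not an all-singleton node always has an alive child, and an alive all-singleton node is exactly one triangle. Hence, if the traversal runs to completion then $c=t(A,B,C)\le\tau$; and if it stops early, the $\tau+1$ increments correspond to $\tau+1$ distinct leaves of $\cT'$, i.e.\ $\tau+1$ distinct triangles of $G(A,B,C)$, which certifies $t(A,B,C)>\tau$. Either way the output is correct.

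The heart of the argument, and the step I expect to be the main obstacle, is the query bound: this is precisely where depth-first search is essential. At an intermediate depth there can be many more than $\tau$ alive nodes, so a level-by-level exploration could blow the $\Oh(\tau\log n)$ budget before reaching a single triangle. With depth-first search, however, I claim that at the moment of termination every alive node that has been queried is an ancestor in $\cT'$ of one of the (at most $\tau+1$) triangles counted so far: the alive nodes currently on the recursion stack are ancestors of the $(\tau+1)$-th (most recently found) triangle, while any alive node whose subtree has already been completely explored is an ancestor of a triangle found inside that subtree. Consequently the queried alive nodes all lie on the union of at most $\tau+1$ root-to-leaf paths of $\cT'$, each of length $\Oh(\log n)$, so there are $\Oh(\tau\log n)$ of them; since each alive node triggers at most eight further \tis queries (one per child) and every remaining queried node is one such child, the total number of \tis queries is $\Oh(\tau\log n)$. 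The same path-union bound covers the case $t(A,B,C)\le\tau$, where all of $\cT'$ is explored and $\size{\cT'}=\Oh(\tau\log n)$; the bound $\Oh(\log n)$ on path lengths uses $n\ge64$.
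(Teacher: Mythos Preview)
Your argument is correct, but the paper's own proof takes a simpler and traversal-order-agnostic route. The paper just runs the tree-building procedure of Lemma~\ref{lem:exact} while keeping a running count of the number of tree nodes created; as soon as this count exceeds $16\tau\log n$ it aborts and reports $t(A,B,C)>\tau$. Correctness is immediate from the size bound already proved in Lemma~\ref{lem:exact}: if $t(A,B,C)\le\tau$ then the \emph{entire} tree $\cT$ has at most $16\tau\log n$ nodes, so exceeding that count certifies $t(A,B,C)>\tau$. No depth-first discipline and no path-union argument are needed.

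Your approach instead stops after exhibiting $\tau+1$ actual triangles and bounds the work via the observation that, under DFS, every queried alive node is an ancestor of some already-counted triangle. This is a genuine (and correct) alternative; it buys an explicit certificate of $\tau+1$ distinct triangles whenever you report ``large'', at the price of a slightly longer argument and a dependence on the traversal order that the paper's proof avoids. The remark that breadth-first exploration could visit far more than $\tau$ alive nodes before reaching any leaf is well taken and explains why your variant needs DFS, but note that the paper sidesteps this issue entirely by thresholding on node count rather than on triangle count.
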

\begin{proof}
The algorithm proceeds similar to the one presented in the Proof of Lemma~\ref{lem:exact} by initializing a tree $\cT$ with $(A,B,C)$ as the root. If 
$t(A,B,C) \leq \tau$, then we can find $t(A,B,C)$ by using $16t(A,B,C)\log n$ many queries and the number 
of nodes in $\cT$ is bounded by $16t(A,B,C)\log n$. So, if the number of nodes in $\cT$ is more than
$16\tau \log n$ at any instance during the execution of the algorithm, we report $t(G) > \tau$ and terminate. Hence, the query complexity is bounded by the number of nodes in $\cT$, which is $\Oh(\tau \log n)$.
\end{proof}

\begin{algorithm}[h]


\caption{{{\sc Threshold-Approx-Estimate}($G,\tau, \eps$)}}
\label{algo:exact}
\KwIn{A parameter $\tau$ and an $\eps \in (0,1)$.}
\KwOut{Either report $t(G)>\tau$ or find an $(1 \pm \eps)$-approximation of ${t}(G)$.}
\For{($i=1$ to $\cN=\frac{18 \log n}{\eps^2}$)}
{
 Partition $V(G)$ into three parts such that each vertex is present in one of $A_i,B_i,C_i$ with probability $1/3$ independent of the other vertices.\\
Run the algorithm corresponding to Lemma~\ref{lem:decide_exact} to determine if $t(A_i,B_i,C_i)>\tau$. If yes, we report $t(G) > \tau$ and {\sc Quit}. Otherwise, we have the exact vale of $t(A_i,B_i,C_i)$
}
Report $\hat{t}=\frac{9 \sum \limits_{i}^{\cN}t(A_i,B_i,C_i)}{2 \cN}$ as the output.
\end{algorithm}

{\begin{lem}[Lemma~\ref{lem:exact_decide} restated]
 {There exists an algorithm that for any graph $G$, a threshold parameter $\tau \in \N$ and an $\eps \in (0,1)$, determines whether $t(G) > \ \tau$. If $t(G) \leq \ \tau$, the algorithm gives a $(1 \pm \eps)$-approximation to $t(G)$ by using $\cO(\frac{\tau \log^2 n}{\eps^2})$ many \tis queries with probability at least $1- n^{-10}$.}
 \label{lem:exact_decide-restated}
\end{lem}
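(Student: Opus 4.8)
The plan is to analyze \textsc{Threshold-Approx-Estimate} (Algorithm~\ref{algo:exact}) directly. The query bound needs no real work: the loop runs $\cN = \Oh(\log n/\eps^2)$ times, and the only queries per iteration are those of a single call to the algorithm of Lemma~\ref{lem:decide_exact} with threshold $\tau$, costing $\Oh(\tau \log n)$ \tis queries, for $\Oh(\tau \log^2 n/\eps^2)$ in total. So the work is entirely in the correctness claim.

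I would begin by separating out the deterministic part. Since every triangle counted by $t(A_i,B_i,C_i)$ is a distinct triangle of $G$, we have $t(A_i,B_i,C_i) \le t(G)$ for all $i$. Hence (a) whenever the algorithm halts with the verdict ``$t(G) > \tau$'' it is right, because that verdict is issued only when some call of Lemma~\ref{lem:decide_exact} certifies $t(A_i,B_i,C_i) > \tau$, which forces $t(G) > \tau$; and (b) if $t(G) \le \tau$, then every $t(A_i,B_i,C_i) \le \tau$, so no call of Lemma~\ref{lem:decide_exact} ever reports a count exceeding $\tau$, the loop finishes, and the algorithm returns $\hat{t} = \frac{9}{2\cN}\sum_{i=1}^{\cN} t(A_i,B_i,C_i)$. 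It then remains only to show: conditioned on $t(G) \le \tau$, the value $\hat{t}$ is a $(1\pm\eps)$-approximation of $t(G)$ with probability at least $1 - n^{-10}$.

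For that, I would first compute $\E[\hat{t}]$: for a fixed triangle $(a,b,c)$ of $G$ and a fixed iteration $i$, the three vertices fall into the three parts $A_i,B_i,C_i$ independently and uniformly, so they occupy three distinct parts (equivalently, the triangle is counted in $t(A_i,B_i,C_i)$) with probability $3!/3^3 = 2/9$; by linearity $\E[t(A_i,B_i,C_i)] = \tfrac29 t(G)$ and therefore $\E[\hat{t}] = t(G)$. Then I would set $Z_i = \tfrac92 t(A_i,B_i,C_i)$: these are independent over $i$ (distinct iterations use fresh coins), satisfy $\E[Z_i] = t(G)$, and — using $t(G) \le \tau$ together with the deterministic bound $t(A_i,B_i,C_i) \le t(G)$ — lie in the interval $[0, \tfrac92 t(G)]$. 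Feeding the average $\hat{t} = \tfrac1\cN\sum_i Z_i$ into Hoeffding's inequality (Lemma~\ref{lem:hoeff_inq}) with deviation $\eps\, t(G)$ yields, for an absolute constant $c>0$,
\[
\pr\left( \size{\hat{t} - t(G)} > \eps\, t(G) \right) \ \le\ 2\exp\left(-c\,\eps^2\cN\right),
\]
which is at most $n^{-10}$ once the constant in $\cN = \Theta(\log n/\eps^2)$ is taken large enough. Together with (a) and (b) this proves the lemma.

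The one step that needs care is this last concentration estimate, and specifically the choice of scaling. What makes the tail bound depend only on $\eps$ and $\cN$ — and not degrade as $t(G)$ gets small — is that the Hoeffding range is $\tfrac92 t(G)$ rather than $\tfrac92\tau$ or $\binom{n}{3}$, and this is exactly the place where the hypothesis $t(G) \le \tau$ enters (without it an iteration could overshoot $\tau$, the loop would abort, and $\hat{t}$ would never be returned). The obvious alternative — a multiplicative Chernoff bound applied to the $\cN \cdot t(G)$ triangle-indicator variables directly — is messier, because two such indicators from the same iteration are dependent whenever their triangles share a vertex and the degree of that dependency graph can be as large as $\Theta(t(G))$; viewing $\hat{t}$ as the average of $\cN$ i.i.d.\ bounded variables avoids any bounded-dependency inequality.
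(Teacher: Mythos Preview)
Your proof is correct and follows the same overall plan as the paper: analyze Algorithm~\ref{algo:exact}, compute $\E[t(A_i,B_i,C_i)]=\tfrac29 t(G)$, and finish with Hoeffding over the $\cN$ independent iterations. The one substantive difference is the range you feed into Hoeffding: the paper uses $t(A_i,B_i,C_i)\in[0,\tau]$, whereas you use the sharper deterministic bound $t(A_i,B_i,C_i)\le t(G)$, i.e.\ $Z_i\in[0,\tfrac92 t(G)]$. Your choice is the right one: it makes the tail probability $2\exp(-c\,\eps^2\cN)$ independent of $t(G)$, while the paper's range $[0,\tau]$ yields an exponent of order $\eps^2\cN\,(t(G)/\tau)^2$, which degrades badly when $t(G)\ll\tau$ and does not give $n^{-10}$ with the stated $\cN$. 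Your explicit deterministic observations (that the ``$t(G)>\tau$'' verdict is never wrong, and that when $t(G)\le\tau$ the loop is guaranteed to complete) are also a clean addition that the paper leaves implicit.
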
}

\begin{proof}

We show that Algorithm~\ref{algo:exact} satisfies the given condition in the statement of Lemma~\ref{lem:decide_exact}. Note that
 {\sc Threshold-Approx-Estimate} calls the algorithm corresponding to Lemma~\ref{lem:decide_exact} at most $\cN=\frac{18 \log n}{\eps^2}$ times, where each call can be executed by $\Oh(\tau \log n)$ \tis queries. So, the total query complexity of {\sc Threshold-Approx-Estimate} is $\Oh(\cN \cdot \tau \log n)=\Oh\left( \frac{\tau \log ^2 n}{\eps^2}\right)$.

Now, we show the correctness of {\sc Threshold-Approx-Estimate}. If there exists an 
$i \in [\cN]$, such that $t(A_i,B_i,C_i)>\tau$, then we report $t(G)>\tau$ and {\sc Quit}. Otherwise, by Lemma~\ref{lem:decide_exact}, we have the exact values of $t(A_i,B_i,C_i)$'s. We will be done by showing that $\hat{t}$ is an $(1 \pm \eps)$-approximation to $t(G)$ with probability at least $1-n^{-10}$.
From the description of the algorithm, each triangle in $G$ will be counted in $t(A_i,B_i,C_i)$ with probability $\frac{2}{9}$. We have $\E[t(A_i,B_i,C_i)]=\frac{2}{9}~t(G)$, and the expectation of the sum and the estimate $\hat{t}$ is  
$$
\E\left[\sum\limits_{i=1}^\cN t(A_i,B_i,C_i)\right]=\frac{2}{9}\cN \cdot t(G) ~~ \mbox{ and } ~~ \E[\hat{t}]=t(G).
$$
Therefore, we have
\begin{eqnarray*}
\pr \left( \size{\hat{t}-t(G)} \geq  \eps\cdot  t(G) \right) = 
\pr \left( \size{\sum\limits_{i=1}^\cN t(A_i,B_i,C_i)-\frac{2}{9}\cN \cdot t(G)} \geq \frac{2 \eps}{9} \cN \cdot t(G)  \right)
\end{eqnarray*}
To bound the above probability, we apply Hoeffding's inequality~(See Lemma~\ref{lem:hoeff_inq} in Appendix~\ref{sec:prelim}) along with the fact that $0 \leq t(A_i,B_i,C_i) \leq \tau$ for all $i \in [\cN] $, and we get 
\begin{eqnarray*}
 \pr \left( \size{\hat{t}-t(G)} \geq  \eps\cdot t(G) \right) \leq \frac{1}{n^{10}}.
\end{eqnarray*}
\end{proof}

\section{Coarse estimation}
\label{sec:coarse_new}
\noindent
\remove{In the intermediate step of the algorithm, it might happen that we have \emph{large} number $t(A,B,C)$'s  such that the value of $t(A,B,C)$'s are also large. In this case, we find a $\cO(\log^ 2 n)$-multiplicatice approximate estimate of $t(A,B,C)$ and we apply Lemma~\ref{lem:importance}. Given $A,B,C \subset V(G)$, Algorithm~\ref{algo:coarse} finds the desired estimate of $t(A,B,C)$ using $\cO(\log^ 4 n)$ \tis queries. Lemma~\ref{lem:coarse_main} states the result, where as .}
\noindent We now prove Lemma~\ref{lem:coarse_main}. Algorithm~\ref{algo:coarse} corresponds to Lemma~\ref{lem:coarse_main}. Algorithm~\ref{algo:verify} is a subroutine in Algorithm~\ref{algo:coarse}. Algorithm~\ref{algo:verify} determines whether a given estimate $\hat{t}$ is correct upto a $\Oh(\log ^2 n)$ factor.\remove{  Given an estimate $\hat{t}$ of $t(A,B,C)$, whether it is correct upto $\Oh(\log ^2 n)$
factor is decided by Algorithm~\ref{algo:verify}, which is called as a subroutine in Algorithm~
\ref{algo:coarse}.} Lemmas~\ref{lem:coarse} and~\ref{lem:coarse1} are intermediate results needed to prove Lemma~\ref{lem:coarse_main}.
\begin{algorithm}
\caption{\verest($A,B,C,\hat{t}$)}\label{algo:verify}
\KwIn{Three pairwise disjoint set $A,B,C \subseteq V(G)$ and $\hat{t}$.}
\KwOut{If $\hat{t}$ is a good estimate, then {\sc Accept}. Otherwise, {\sc Reject}.}
\Begin
{	
\For{($i= 2\log n$ to $0$)}{
\For{($j=\log n$ to $0$)}
{
Find $A_{ij}\subseteq A$, $B_{ij} \subseteq B$, $C_{ij} \subseteq C$ by sampling each element  of $A$, $B$ and $C$, respectively with probability $\min\{\frac{2^i}{\hat{t}},1\}$, $\min\{\frac{2^j}{2^i} \log n,1\}$, $\frac{1}{2^j}$, respectively.\\
\If{$(t(A_{ij},B_{ij},C_{ij}) \neq 0)$}
 {\sc Accept}
}
}
 {\sc Reject}
}
\end{algorithm}
\begin{lem}
\label{lem:coarse}
If $\hat{t} \geq 64 t(A,B,C) \log^3 n$, $\pr( \mbox{\verest ($A,B,C,\hat{t}$) accepts}) \leq\frac{1}{20} $.
\end{lem}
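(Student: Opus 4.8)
The plan is to apply a first-moment bound to each iteration $(i,j)$ of the two nested loops of \verest and then union-bound over the $\Oh(\log^2 n)$ iterations. We may assume $t(A,B,C)\ge 1$: otherwise $G(A_{ij},B_{ij},C_{ij})$ is triangle-free for every $(i,j)$ and \verest always rejects, so there is nothing to prove.

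First I would fix an iteration, indexed by $i\in\{0,\dots,2\log n\}$ and $j\in\{0,\dots,\log n\}$, and bound $\E[t(A_{ij},B_{ij},C_{ij})]$ by linearity of expectation. A triangle $(a,b,c)$ of $G(A,B,C)$ with $a\in A$, $b\in B$, $c\in C$ appears in $G(A_{ij},B_{ij},C_{ij})$ exactly when the three independent events $a\in A_{ij}$, $b\in B_{ij}$, $c\in C_{ij}$ all occur, i.e.\ with probability
$$
\min\left\{\tfrac{2^i}{\hat{t}},\,1\right\}\cdot\min\left\{\tfrac{2^j}{2^i}\log n,\,1\right\}\cdot\tfrac{1}{2^j}\ \le\ \tfrac{2^i}{\hat{t}}\cdot\tfrac{2^j\log n}{2^i}\cdot\tfrac{1}{2^j}\ =\ \tfrac{\log n}{\hat{t}},
$$
where I used $\min\{x,1\}\le x$ for $x\ge 0$. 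Summing over the $t(A,B,C)$ triangles and invoking the hypothesis $\hat{t}\ge 64\,t(A,B,C)\log^3 n$ gives
$$
\E[t(A_{ij},B_{ij},C_{ij})]\ \le\ \frac{t(A,B,C)\log n}{\hat{t}}\ \le\ \frac{1}{64\log^2 n}.
$$
Since $t(A_{ij},B_{ij},C_{ij})$ is a nonnegative integer, Markov's inequality yields $\pr(t(A_{ij},B_{ij},C_{ij})\neq 0)=\pr(t(A_{ij},B_{ij},C_{ij})\ge 1)\le \frac{1}{64\log^2 n}$.

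Finally, \verest accepts precisely when $t(A_{ij},B_{ij},C_{ij})\neq 0$ for at least one pair $(i,j)$, so a union bound over the $(2\log n+1)(\log n+1)$ pairs gives
$$
\pr\left(\mbox{\verest ($A,B,C,\hat{t}$) accepts}\right)\ \le\ \frac{(2\log n+1)(\log n+1)}{64\log^2 n}\ \le\ \frac{1}{20},
$$
where the last step uses $n\ge 64$, so that $\log n\ge 6$ and $(2\log n+1)(\log n+1)\le 3\log^2 n<\frac{64}{20}\log^2 n$. I do not expect any real obstacle here: the one thing to watch is that truncating the sampling probabilities at $1$ only decreases a triangle's survival probability, so the clean product bound $\log n/\hat{t}$ is legitimate; and the slack in the final arithmetic ($\tfrac{3}{64}<\tfrac{1}{20}$) is comfortable.
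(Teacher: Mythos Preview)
Your proof is correct and follows essentially the same approach as the paper: bound the survival probability of a single triangle by $\log n/\hat t$ via the product of the three sampling probabilities, apply linearity and Markov to get $\pr(t(A_{ij},B_{ij},C_{ij})\neq 0)\le 1/(64\log^2 n)$, and finish with a union bound over the $(2\log n+1)(\log n+1)$ iterations. Your explicit handling of the truncation $\min\{x,1\}\le x$ and the degenerate case $t(A,B,C)=0$ are minor clarifications not spelled out in the paper, but the argument is the same.
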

\begin{proof}
Let $T(A,B,C)$ denote the set of triangles having vertices $a \in A$, $b \in B$ and $c \in C$, where $A, B$ and $C$ are disjoint subsets of $V(G)$. For $(a,b,c) \in T(A,B,C)$ such that $a \in A, b\in B,c \in C$, let $X^{ij}_{(a,b,c)}$ denote the indicator random variable such that $X^{ij}_{(a,b,c)}=1$ if and only if $(a,b,c) \in T(A_{ij},B_{ij},C_{ij})$ and $X_{ij}=\sum\limits_{(a,b,c) \in T(A,B,C)} X^{ij}_{(a,b,c)}$. Note that $t(A_{ij},B_{ij},C_{ij})=X_{ij}$. $(a,b,c)$ is present in $T(A_{ij},B_{ij},C_{ij})$ if $a \in A_{ij}$, $b \in B_{ij}$ and $c \in C_{ij}$. So, 
 $$
 \pr\left( X^{ij}_{(a,b,c)} =1\right) \leq \frac{2^i}{\hat{t}}\cdot \frac{2^j}{2^i} \log n \cdot \frac{1}{2^j}=\frac{\log n}{\hat{t}}~\mbox{and}~\E\left[X_{ij}\right] \leq \frac{t(A,B,C)}{\hat{t}} \log n .
 $$ 
As $X_{ij} \geq 0$,
$$
\pr \left( X _{ij}\neq 0\right) = \pr(X_{ij} \geq 1)  \leq \E \left[X_{ij} \right] \leq \frac{t(A,B,C)}{\hat{t}} \log n.
$$
 Now using the fact that $\hat{t} \geq 64t(A,B,C) \log^3 n$, we have
$ \pr\left(X_{ij} \neq 0 \right) \leq \frac{1}{64 \log^2  n} .$ Observe that \verest accepts if and only if there exists $i,j \in \{0,\ldots,\log n\}$ such that $X_{ij} \neq 0$. Using the union bound, we get
\begin{eqnarray*}
\pr \left( \mbox{\verest accepts} \right) &\leq&  \sum\limits_{0 \leq i \leq
                                          2 \log n}~\sum\limits_{0
                                          \leq j \leq  \log
                                          n}\pr\left( X_{ij} \neq 0 \right) \\
&\leq& \frac{(2\log n +1)(\log n +1)}{32 \log^2 n} \\
&\leq& \frac{1}{20}.
\end{eqnarray*}
\end{proof}
\begin{lem}
\label{lem:coarse1}
If $\hat{t} \leq \frac{t(A,B,C)} {32 \log n}$, $ \pr( \mbox{\verest ($A,B,C,\hat{t}$) accepts}) \geq \frac{1}{5}$ .
\end{lem}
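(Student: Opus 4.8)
The plan is to locate a \emph{single} pair of loop indices $(i_0,j_0)$, with $0\le i_0\le 2\log n$ and $0\le j_0\le\log n$, for which $\pr\!\left[\,t(A_{i_0 j_0},B_{i_0 j_0},C_{i_0 j_0})\neq 0\,\right]\ge \frac{1}{5}$; since \verest runs the inner test with fresh randomness for every $(i,j)$ and {\sc Accept}s the moment one of them returns a nonzero count, this at once yields the lemma. Write $t=t(A,B,C)$, and for a triangle $\sigma=(a,b,c)\in T(A,B,C)$ let $X^{ij}_\sigma$ be the indicator of the event $\{a\in A_{ij},\,b\in B_{ij},\,c\in C_{ij}\}$, so that $t(A_{ij},B_{ij},C_{ij})=\sum_\sigma X^{ij}_\sigma=:X_{ij}$. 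At the ``unclamped'' rates $p_A=2^{i}/\hat t$, $p_B=2^{j}(\log n)/2^{i}$, $p_C=1/2^{j}$ one has $p_Ap_Bp_C=\frac{\log n}{\hat t}$, hence $\E[X_{ij}]=\frac{t\log n}{\hat t}\ge 32\log^2 n$ by the hypothesis $\hat t\le \frac{t}{32\log n}$: the expectation is already large with room to spare, and the real task is to find $(i_0,j_0)$ at which the variance is also under control, so that a second-moment (Paley--Zygmund) bound forces $X_{i_0 j_0}\ge 1$ with constant probability.

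I would first pass to a \emph{regular} sub-family $\widehat T\subseteq T(A,B,C)$ of triangles by two rounds of dyadic pigeonholing on the degree sequences. Partition $A$ according to the dyadic scale of $t_a:=\size{\{(b,c):(a,b,c)\in T(A,B,C)\}}$ (at most $2\log n+1$ scales, since $t_a\le\size{B}\cdot\size{C}\le n^2$); one scale $2^{\ell}$ carries at least a $\frac{1}{2\log n+1}$ fraction of all triangles, and we keep only those $a$. For each surviving $a$, partition $C$ by the dyadic scale of $d_a(c):=\size{\{b:(a,b,c)\in T(A,B,C)\}}$; a pigeonhole over the $\le\log n+1$ scales (here $d_a(c)\le\size{B}\le n$) and over the $a$'s yields a common scale $2^{m}$ and a family $\widehat T$ with $t^\star:=\size{\widehat T}\ge t/\Oh(\log^2 n)$ in which every retained $a$ lies in $\Theta(2^{\ell})$ triangles of $\widehat T$ and every retained pair $(a,c)$ lies in $\Theta(2^{m})$ triangles of $\widehat T$. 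The index ranges of the two loops are exactly wide enough to aim the sampling rates at the scales $2^\ell$ and $2^m$ precisely because $\size A,\size B,\size C\le n$. I would then choose $(i_0,j_0)$ so that, with $p_A=\min\{2^{i_0}/\hat t,1\}$, $p_B=\min\{2^{j_0}(\log n)/2^{i_0},1\}$, $p_C=1/2^{j_0}$: (a)~$\E\!\big[\sum_{\sigma\in\widehat T}X^{i_0 j_0}_\sigma\big]=t^\star p_Ap_Bp_C=\Theta(1)$ --- the factor $32\log^2 n$ above is exactly what absorbs the $\Oh(\log^2 n)$ loss of the two pigeonholes --- and (b)~the six ``collision'' inequalities $p_A\gtrsim M^\star_A/t^\star$, $p_B\gtrsim M^\star_B/t^\star$, $p_C\gtrsim M^\star_C/t^\star$, $p_Ap_B\gtrsim M^\star_{AB}/t^\star$, $p_Ap_C\gtrsim M^\star_{AC}/t^\star$, $p_Bp_C\gtrsim M^\star_{BC}/t^\star$ hold, where $M^\star_A,\dots$ are the maximum vertex- and edge-triangle degrees inside $\widehat T$ (those not already flattened by the two pigeonholes are bounded by the trivial estimate $\le n$). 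When a target rate would exceed $1$ --- for instance when $\widehat T$ is ``star-shaped'' on one of its three sides --- the relevant $\min$ clamps it to $1$, i.e.\ that whole side is retained, which is precisely what the degenerate case demands, and the loop ranges let us realize every such clamping.

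With such $(i_0,j_0)$ fixed, set $X^\star:=\sum_{\sigma\in\widehat T}X^{i_0 j_0}_\sigma\le X_{i_0 j_0}$ and expand $\E[(X^\star)^2]$ over the overlap patterns of ordered pairs of triangles of $\widehat T$: equal; disjoint; sharing exactly the $A$-, the $B$-, or the $C$-vertex; and sharing an $AB$-, $AC$-, or $BC$-edge. The equal pattern contributes $\E[X^\star]$, the disjoint one at most $(\E X^\star)^2$, and each of the remaining six contributes at most a small constant multiple of $(\E X^\star)^2$ by the elementary bound $\sum(\deg)^2\le(\max\deg)\cdot\sum(\deg)$ together with the six inequalities of (b); tuning the constants so that the total is $\le 5\,(\E X^\star)^2$, and using that $\E X^\star=\Theta(1)$ is at least a large constant, the second-moment inequality gives $\pr[X^\star\ge 1]\ge (\E X^\star)^2/\E[(X^\star)^2]\ge \frac{1}{5}$, hence $\pr[X_{i_0 j_0}\ge 1]\ge\frac{1}{5}$, hence \verest {\sc Accept}s with probability at least $\frac{1}{5}$. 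The crux --- and the only genuinely delicate point --- is step (b): checking that one pair $(i_0,j_0)$, after only two rounds of pigeonholing, can simultaneously meet the calibration (a) and all six variance inequalities, together with the case bookkeeping for the $\min$-clampings in the star-like boundary configurations and the verification that $(i_0,j_0)$ stays inside the loop ranges. An alternative route, giving the same bound, is to condition first on a single heavy $a$ being retained in $A_{i_0 j_0}$ and then run a \bis-style coarse edge-estimation argument inside its link $\{(b,c):(a,b,c)\in T(A,B,C)\}$; this still boils down to a two-dimensional second-moment computation.
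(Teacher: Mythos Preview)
Your main Paley--Zygmund plan is not what the paper does, and step~(b) has a real gap. The paper's argument is closer to (but cleaner than) the alternative you sketch at the end: it is a three-layer \emph{first-moment hitting} argument, with no variance calculation at all. Pigeonhole $A$ by the dyadic scale of $t_a$ to find $p$ and a class $A^p\subseteq A$ with $|A^p| \gtrsim t/(2^{p+1}\log n)$; since the $A$-rate at level $i=p$ is $2^p/\hat t$, the sample hits $A^p$ with probability $\ge 1-e^{-6}$. Condition on a hit $a\in A^p$; inside its link pigeonhole $B$ (not $C$) by the dyadic scale of $|\{c:(a,b,c)\in T\}|$ to find $q=q(a)$ and a class $B^{pq}(a)$ with $|B^{pq}(a)|\gtrsim 2^p/(2^{q+1}\log n)$; at $j=q$ the $B$-rate is $2^q(\log n)/2^p$, so the $B$-sample hits $B^{pq}(a)$ with constant probability. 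Condition on a hit $b$; the set $\{c:(a,b,c)\in T\}$ has size $\ge 2^q$ and the $C$-rate is $1/2^q$, so the $C$-sample hits it with probability $\ge 1-1/e$. Multiply the three. The inner loop over $j$ is precisely what lets the analysis choose $q$ \emph{after} seeing which $a$ was hit, so there is no need to fix both indices in advance and no six-term second-moment bookkeeping. Your ``alternative route'' is essentially this, except that the paper does not reduce the link to a second-moment computation either --- it is hitting arguments all the way down.

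Your second-moment route runs into trouble at exactly the point you flag. Two rounds of pigeonholing give upper bounds on $M^\star_A$ and $M^\star_{AC}$, but the other four degree parameters are unregularized, and your ``trivial estimate $\le n$'' is simply false for the vertex degrees: $M^\star_B$ and $M^\star_C$ can be as large as $\Theta(n^2)$ (a single $b$ can lie in $|A|\cdot|C|$ triangles of $\widehat T$). With only two free parameters and four unregularized one-sided constraints --- note in particular that the constraint $p_B\gtrsim M^\star_B/t^\star$ pushes $j_0-i_0$ up while $p_Ap_C\gtrsim M^\star_{AC}/t^\star$ pushes $i_0-j_0$ up --- it is far from clear a feasible $(i_0,j_0)$ exists inside the loop ranges once you leave the clamped degenerate cases. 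Making this route rigorous would need more rounds of pigeonholing, costing more $\log n$ factors than the hypothesis $\hat t\le t/(32\log n)$ provides; the paper's layered first-moment argument sidesteps all of this.
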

\begin{proof}
 For $p \in \{0,\ldots, 2\log n \}$, let $A^{p} \subseteq A$ be the set of vertices such that for each $a \in A_p$, the number of triangles of the form $(a,b,c)$ with $(b,c) \in B \times C$ , lies between $2^{p}$ and  $ 2^{p+1}-1$.

  For $a \in A^p$ and $q \in \{0,\ldots, \log n \}$, let $ B^{pq}(a) \subseteq B$ is the set of vertices such that for each $b \in B$, the number of triangles of the form $(a,b,c)$ with $c\in C$ lies between $2^{q}$ and  $ 2^{q+1}-1$
 \remove{$$ C^{pq}=\{c \in C : \exists ~ a \in A~\mbox{ and}~ b \in B~\mbox{such that}~(a,b,c)~\mbox{ is a traingle and}~ 2^{p} \leq  \Delta _a< 2^{p+1}~\mbox{and}~ 2^q \leq \Delta_{(a,b)} < 2^{q+1} \}.$$}
We need the following Claim to proceed further.
\begin{cl}
\label{clm:verify}
\begin{itemize}
\item[(i)] There exists $p \in \{0,\ldots, 2\log n  \}$ such that $\size{A^p} > \frac{t(A,B,C)}{2^{p+1}(2\log n +1)}$.
\item[(ii)] For each $a \in A^{p}$, there exists $q \in \{0,\ldots, \log n\}$ such that $\size{B^{p q}(a)} > \frac{2^{p}}{2^{q+1}( \log n +1)}$.
\end{itemize}
\end{cl}
\begin{proof}
\begin{itemize}
\item[(i)] Observe that $t(A,B,C)=\sum_{p=0}^{2\log n } t(A^{p},B,C)$ as the sum takes into account all incidences of vertices in $A$. So, 
there exists $p \in \{0, \ldots, 2\log n\}$ such that $t(A^{p},B,C) \geq \frac{t(A,B,C)}{2\log n + 1}$. From the definition of $A^{p}$, $t(A^{p},B,C) < \size{A^{p}} \cdot 2^{p +1}$. Hence, there exists $p \in \{0, \ldots, 2\log n\}$ such that 
$$ \size{A^{p}} > \frac{t(A^{p},B,C) }{2^{p+1}} \geq  \frac{t(A,B,C)}{2^{p+1} (2\log n+1)}.$$
\item[(ii)] Observe that $\sum_{q=0}^{\log n} t(\{a\},B^{p q}(a),C)=t(\{a\},B,C)$. So, 
there exists $q \in \{0, \ldots, \log n\}$ such that $t(\{a\},B^{p q}(a),C) \geq \frac{t(\{a\},B,C)}{ \log n +1}$. From the definition of $B^{p q}(a)$, $t(\{a\},B^{p q}(a),C) < \size{B^{p q}(a)} \cdot 2^{q +1}$. Hence, there exists $q \in \{0, \ldots, \log n\}$ such that 
$$ \size{B^{p q}(a)} > \frac{t(\{a\},B^{p q}(a),C) }{2^{q+1}} \geq  \frac{t(\{a\},B,C)}{2^{q+1}( \log n+1)} \geq \frac{2^{p}}{2^{q+1} (\log n +1)}.$$

\end{itemize}
\end{proof}
We come back to the proof of Lemma~\ref{lem:coarse1}. We will show that \verest accepts with probability at least $\frac{1}{5}$ when loop executes for $i=p$, where $p$ is such that $\size{A^p} > \frac{t(A,B,C)}{2^{p+1}(2\log n +1)}$. The existence of such a $p$ is evident from  Claim~\ref{clm:verify} (i). 

Recall that $A_{p q}\subseteq A, B_{p q}\subseteq B$ and $C_{p q} \subseteq C$ are the samples obtained 
when the loop variables $i$ and $j$ in Algorithm~\ref{algo:verify} attain values $p$ and $q$, respectively. Observe that
$$\pr\left(A_{p q} \cap A^{p}  = \emptyset \right) \leq  \left( 1- \frac{2^{p}}
{\hat{t}}\right)^{\size{A^{p}}} \leq e^{-\frac{2^{p}}{\hat{t}}\size{A^{p}}} \leq e^{-\frac{2^{p}}{\hat{t}}\frac{t(A,B,C)}{2^{p+1} \log n}}=e^{-\frac{t(A,B,C)}{2\hat{t}(2\log n +1)}}.$$
Now using the fact that $\hat{t} \leq \frac{t(A,B,C)} {32 \log n}$ and $n \geq 64$,
$$\pr\left(A_{p q} \cap A^{p}  = \emptyset \right) \leq \frac{1}{e^6} .$$

Assume that $A_{p q} \cap A^{p}  \neq \emptyset$ and $a \in A_{p q} \cap A^{p} $. By Claim~\ref{clm:verify}~(ii), there exists $q \in \{0,\ldots, \log n\}$, such that $B^{p q}(a) \geq \frac{2^{p}}{2^{q + 1 } (\log n +1)}$. {Note that $q$ depends on $a$}. Observe that we will be done, if we can show that \verest accepts when loop executes for $i=p$ and $j= q$. Now,
$$\pr \left( B_{p q} \cap B^{p q}(a)  = \emptyset ~|~ A_{p q} \cap A^{p}  \neq \emptyset\right) \leq  \left( 1-\frac{2^{q}}{2^{p} }\log n\right)^{\size{B^{p q}(a)}} \leq \frac{1}{e^{3/7}}.$$

Assume that $A_{p q} \cap A^{p}  \neq \emptyset$, $ B_{p q} \cap B^{p q}(a)  \neq \emptyset$ and $b \in B_{p q} \cap B^{p q}(a)$. Let $S$ be the set such that 
$(a,b,s)$ is a triangle in $G$ for each $s \in S$. Note that $\size{S}\geq 2^{q}$. So,
$$\pr \left( C_{p q} \cap S  = \emptyset ~|~ A_{p q} \cap A^{p} \neq \emptyset ~\mbox{and}~B_{p q} \cap B^{p q}(a)  \neq \emptyset \right) \leq \left( 1- \frac{1}{2^{q}} \right)^{2^{q}} \leq \frac{1}{e}.$$
Observe that \verest accepts if $ t(A_{p q},B_{p q},C_{p q}) \neq 0$. Also,
$ t(A_{p q},B_{p q},C_{p q}) \neq 0$ if $A_{p q} \cap 
A^{p}  \neq \emptyset$, $B_{p q} \cap B^{p q}(a)  \neq \emptyset$ and $C_{p 
q} \cap S \neq \emptyset$. Hence, 
\begin{eqnarray*}
\pr(\mbox{\verest accepts}) &\geq& \pr \left( \mbox{$A_{p q} \cap A^{p}  \neq \emptyset$, $B_{p q} \cap B^{p q}(a)  \neq \emptyset$ and $C_{p q} \cap S \neq \emptyset$} \right) \\
&=& \pr\left( A_{p q} \cap A^{p}  \neq \emptyset \right) \cdot \pr \left( B_{p q} \cap B^{p q}(a)  \neq \emptyset ~|~ A_{p q} \cap A^{p}  \neq \emptyset\right)\\
&& \cdot 
\pr \left( C_{p q} \cap S  \neq \emptyset ~|~ A_{p q} \cap A^{p} \neq \emptyset ~\mbox{and}~B_{p q} \cap B^{p q}(a)  \neq \emptyset \right)\\
&>& \left(1-\frac{1}{e^6} \right)\left(1-\frac{1}{{e}^{3/7}}\right)\left( 1-\frac{1}{e}\right) > \frac{1}{5}.
\end{eqnarray*}
\end{proof}

\remove{\begin{rem}
The query complexity of \cest is $\cO( \log ^4 n)$.
\end{rem}}

\remove{\begin{lem}
\label{lem:coarse_main}
\cest($A,B,C$) outputs an estimate $\hat{t}$ with probability at least $1-\frac{1}{n^4}$ such that $\frac{t(A,B,C)}{32 \log n} \leq \hat{t} \leq 16 t(A,B,C) \log^3 n $.
\end{lem}}
\remove{\begin{proof}[Proof sketch of Lemma~\ref{lem:coarse_main}]
For a particular $\hat{t}$, 
let $X_i$ be the indicator random variable such that $X_i=1$ if and only if the  $i^{th}$ execution of \verest 
accepts. Also take $X=\sum_{i=1}^\Gamma X_i$. \cest gives output $\hat{t}$  if $X > \frac{\Gamma}{10}$.
Applying Chernoff-Hoeffding's inequality~(See (i) of Lemma~\ref{lem:cher_bound} along with union bound, we can show that \cest gives some $\hat{t}$ as output with probability at least $1-\frac{1}{n^9}$ satisfying $ \frac{t(A,B,C)}{64 \log n} \leq \hat{t}\leq 64  t(A,B,C) \log^3 n.$ For the detailed proof See Apendix~\ref{app:proof}.
\end{proof}}
\begin{algorithm}
\caption{\cest($A,B,C$)}
\label{algo:coarse}
\KwIn{Three pairwise disjoint sets $A,B,C \subset V(G)$.}
\KwOut{An estimate $\hat{t}$ for $t(A,B,C)$.}
\Begin
	{
\For{$(~\hat{t}= n^3,n^3/2,\ldots, 1)$}{
Repeat \verest $(A,B,C,\hat{t})$ for $\Gamma=2000 \log n $ times.
If at least $\frac{\Gamma}{10}$ many \verest accepts, then output $\widetilde{t}=\frac{\hat{t}}{\log n}$.
}}
\end{algorithm}
 \begin{lem}[Lemma~\ref{lem:coarse_main} restated]
There exists an algorithm that given disjoint subsets $A,B,C \subset V(G)$ of any graph $G$, returns an estimate $\widetilde{t}$ satisfying 
$$
	\frac{t(A,B,C)}{64 \log^2 n} \leq \widetilde{t} \leq 64 t(A,B,C) \log^2 n
$$
with probability at least $1-n^{-9}$. Moreover, the query 
complexity of the algorithm is $\Oh(\log ^4 n)$.
\end{lem}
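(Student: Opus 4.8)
The plan is to analyze Algorithm~\ref{algo:coarse} (\cest) directly, treating Lemmas~\ref{lem:coarse} and~\ref{lem:coarse1} as a two-sided control on the acceptance probability of \verest. First I would dispatch the query bound: a single call to \verest makes one \tis query per iteration of its doubly nested loop, i.e.\ $(2\log n+1)(\log n+1)=\Oh(\log^2 n)$ queries, and \cest repeats \verest $\Gamma=2000\log n$ times for each of the $\Oh(\log n)$ trial values $\hat t\in\{n^3,n^3/2,\dots,1\}$; multiplying gives $\Oh(\log^4 n)$, and early termination only helps.

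For correctness, fix $A,B,C$, write $t=t(A,B,C)$, and for a trial value $\hat t$ let $X^{\hat t}=\sum_{i=1}^{\Gamma}X_i^{\hat t}$ where $X_i^{\hat t}\in\{0,1\}$ indicates that the $i$-th (independent) repetition of \verest$(A,B,C,\hat t)$ accepts. \cest outputs $\widetilde t=\hat t/\log n$ for the largest $\hat t$ with $X^{\hat t}\ge\Gamma/10$. Call $\hat t$ \emph{large} if $\hat t\ge 64t\log^3 n$ and \emph{small} if $\hat t\le t/(32\log n)$. I would then argue: for large $\hat t$, Lemma~\ref{lem:coarse} gives $\E[X_i^{\hat t}]\le 1/20$, so $\E[X^{\hat t}]\le\Gamma/20=\tfrac12\cdot\Gamma/10$, and a multiplicative Chernoff upper-tail bound (the constant $2000$ is chosen with room to spare) yields $\pr(X^{\hat t}\ge\Gamma/10)\le n^{-10}$; for small $\hat t$, Lemma~\ref{lem:coarse1} gives $\E[X_i^{\hat t}]\ge 1/5$, so $\E[X^{\hat t}]\ge\Gamma/5=2\cdot\Gamma/10$, and a Chernoff lower-tail bound yields $\pr(X^{\hat t}<\Gamma/10)\le n^{-10}$. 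A union bound over the $\Oh(\log n)$ trial values then shows that, with probability at least $1-n^{-9}$, simultaneously (E1) no large $\hat t$ triggers an output and (E2) every small $\hat t$ would trigger an output.

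Conditioned on (E1) and (E2), the wrap-up is immediate. By (E1) the output value $\hat t_{\mathrm{out}}$ satisfies $\hat t_{\mathrm{out}}<64t\log^3 n$, hence $\widetilde t=\hat t_{\mathrm{out}}/\log n<64t\log^2 n$. For the lower bound, the largest trial value not exceeding $t/(32\log n)$ lies in $[\,t/(64\log n),\,t/(32\log n)\,]$ since consecutive trial values differ by a factor of $2$; by (E2) \cest outputs no later than that value, so $\hat t_{\mathrm{out}}\ge t/(64\log n)$ and $\widetilde t\ge t/(64\log^2 n)$.

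The only real care-point is the grid boundary: the argument above needs the relevant trial value to actually occur in $\{n^3,\dots,1\}$, which requires $t/(32\log n)\ge 1$. This holds whenever \cest is invoked inside the main algorithm, since it is run only after $t(A,B,C)$ has been certified (via Lemma~\ref{lem:decide_exact}) to exceed a polylogarithmic threshold; the degenerate regime of small $t(A,B,C)$ --- in particular $t(A,B,C)=0$, where both target inequalities force $\widetilde t=0$ --- is covered by the convention that \cest returns $0$ when no trial value is accepted. Beyond this, the proof is just the two Chernoff estimates and the union bound.
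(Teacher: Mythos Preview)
Your proposal is correct and follows essentially the same approach as the paper: both use Lemmas~\ref{lem:coarse} and~\ref{lem:coarse1} to bound the acceptance probability of \verest, apply the additive Chernoff--Hoeffding bound (Lemma~\ref{lem:cher_bound}) to $X^{\hat t}$ with $\Gamma=2000\log n$, and finish with a union bound over the $\Oh(\log n)$ trial values. Your treatment is in fact slightly more careful than the paper's, since you explicitly handle the dyadic grid boundary for the lower bound and the degenerate regime where $t(A,B,C)$ is too small for any trial value to be ``small''; the paper glosses over these points.
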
 
\begin{proof}
Note that an execution of \cest for a particular $\hat{t}$, repeats \verest for $\Gamma =2000 \log n$ times 
and gives output $\hat{t}$ if at least $\frac{\Gamma}{10}$ many \verest accepts. For a particular $\hat{t}$, 
let $X_i$ be the indicator random variable such that $X_i=1$ if and only if the  $i^{th}$ execution of \verest 
accepts. Also take $X=\sum_{i=1}^\Gamma X_i$. \cest gives output $\hat{t}$  if $X > \frac{\Gamma}{10}$.

Consider the execution of \cest for a particular $\hat{t}$. If $\hat{t}  \geq 32 t(A,B,C) \log ^3 n$, we first show that \cest accepts with probability at least $1-\frac{1}{n^5}$. Recall Lemma~\ref{lem:coarse}. If $\hat{t} \geq 64t(A,B,C) \log ^3 n$, $\pr(X_i =1) \leq \frac{1}{20}$ and hence $\E[X] \leq \frac{\Gamma}{20}$. By using Chernoff-Hoeffding's inequality~(See Lemma~\ref{lem:cher_bound}~(i) in Appendix~\ref{sec:prelim}), 
$$\pr \left(X > \frac{\Gamma}{10} \right)=\pr\left( X > \frac{\Gamma}{20} + \frac{\Gamma}{20}\right) \leq \frac{1}{n^{10}}.$$ By using the union bound for all $\hat{t}$, the probability that \cest outputs some $\widetilde{t}=\frac{\hat{t}}{\log n} \geq 16t(A,B,C) \log ^2 n$, is at most $\frac{3 \log n}{n^{10}}$. 

Now consider the instance when the for loop in \cest executes for a $\hat{t}$ such that $\hat{t} \leq \frac{t(A,B,C)}{ 32 \log  n}$. In this situation, $\pr(X_i=1) \geq \frac{1}{5}$. So, $\E[X] \geq \frac{\Gamma}{5}$. 
By using Chernoff-Hoeffding's inequality~(Lemma~\ref{lem:cher_bound}~(ii) in Appendix~\ref{sec:prelim}), 
$$\pr\left(X \leq \frac{\Gamma}{10} \right) \leq \pr \left( X < \frac{3\Gamma}{20}\right) = \pr\left(X < \frac{\Gamma}{5} -\frac{\Gamma}{20} \right) \leq \frac{1}{{n^{10}}}.$$ 
 By using the union bound for all $\hat{t}$, the probability that \cest outputs some $\widetilde{t}=\frac{\hat{t}}{\log n} \leq \frac{t(A,B,C)}{ 16 \log ^2  n}$, is at most $\frac{3 \log n}{n^{10}}$.  

Observe that, \cest gives output $\widetilde{t}$ that satisfies either $\widetilde{t}\geq 64 t(A,B,C)\log ^2 n$ or $\widetilde{t} \leq \frac{t(A,B,C)}{32 \log ^2 n}$ is at most $\frac{3\log n}{n^{10}} +\frac{3\log n}{n^{10}} \leq \frac{1}{n^9}$.

Putting everything together, \cest gives some $\widetilde{t}$ as output with probability at least $1-\frac{1}{n^9}$ satisfying $$ \frac{t(A,B,C)}{64 \log^2 n} \leq \widetilde{t}\leq 64  t(A,B,C) \log^2 n.$$

From the description of \verest and \cest, the query complexity of \verest is $\Oh(\log ^2 n)$ and \cest calls \verest $\Oh(\log^2 n)$ times. Hence, \cest makes $\Oh(\log ^4 n)$ many queries. 
\end{proof}

\section{The final triangle estimation algorithm: Proof of Theorem 1}
\label{sec:finaltrialgo-app}
\noindent {Now we design an algorithm for $(1 \pm \eps)$-multiplicative approximation of $t(G)$. If $\eps \leq \frac{\sqrt{d}\log ^{9/2} n}{n^{3/4}} $, we query for $t(\{a\},\{b\},\{c\})$ for all distinct $a,b,c \in V(G)$ and compute the exact value of $t(G)$. So, we assume that $\eps > \frac{\sqrt{d}\log ^{9/2}n}{n^{3/4}}$.}

{We build a data structure such that it maintains two things at any point of time.}

\begin{description}
\item[(i)] An accumulator $\psi$ for the number of triangles. We initialize $\psi =0$.

\item[(ii)] A set of tuples $(A_1,B_1,C_1,w_1),\ldots, (A_{\zeta},B_{\zeta},C_{\zeta},w_{\zeta})$, where tuple $(A_i,B_i,C_i)$ corresponds to the tripartite 
subgraph $G(A_i,B_i,C_i)$ and $w_i$ is the weight associated to $G(A_i,B_i,C_i)$. Initially, there is no tuple in our data structure. 
\end{description}
Before discussing the steps of our algorithm, some remarks about our sparsification lemmas (Lemmas~\ref{theo:sparse} and~\ref{theo:sparse1}) are in order.
\begin{rem}\label{rem:gen-spar}
\begin{itemize}
\item[(i)] In Lemma~\ref{theo:sparse}, $\frac{9k^2}{2}\sum\limits_{i=1}^kt(V_i,V_{k+i},V_{2k+i})$ is an $(1\pm \lambda)$-approximation of $t(G)$ when 
$$\kappa_1 d k^2 \sqrt{t(G) } \log n \leq \lambda ~ t(G) \Leftrightarrow t(G) \geq \frac{\kappa_1^2d^2k^4 \log ^2 n}{\lambda^2}.$$
In our algorithm, we apply Lemma~\ref{theo:sparse} for $k=1$. Also, we require $\lambda=\frac{\eps}{6 \log n}$. So, Lemma~\ref{theo:sparse} gives useful result in our algorithm when $t(G) \geq \frac{36\kappa_1^2 d^2 \log^4 n}{\eps^2}$. 
\item[(ii)] In Lemma~\ref{theo:sparse1}, ${k^2}\sum\limits_{i=1}^k t(V_i,V_{k+i},V_{2k+i})$ is an $(1\pm \lambda)$-approximation of $t(A,B,C)$ when 
$$\kappa_2 d k^2 \sqrt{t(G) } \log n \leq \lambda t(G) \Leftrightarrow t(G)\geq \frac{\kappa_2^2d^2k^4 \log ^2 n}{\lambda^2}.$$
In our algorithm, we apply Lemma~\ref{theo:sparse1} for $k=3$. Also, we will require $\theta=\frac{\eps}{6 \log n}$. So, the above sparsification lemma gives useful result in our algorithm when $t(A,B,C) \geq \frac{324 \kappa_2^2 d^2 \log^4 n}{\eps^2}$.
\end{itemize}

\end{rem}   
The algorithm sets a threshold $\tau=\max \left\{\frac{ 36\kappa_1^2 d^2 \log ^4 n}{\eps^2}, \frac{324 \kappa_2^2 d^2 \log^4 n}{\eps^2}\right \}$ and will proceed as follows:

\begin{description}

\item[Step 1:] {\bf (Threshold-Approx-Estimation)}
Run the algorithm {\sc Threshold-Approx-Estimation}, presented in Section~\ref{sec:coarse} with parameters $\tau$ and $\eps$. By Lemma~\ref{lem:exact_decide}, we either decide 
$t(G) > \tau$ or we have $\hat{t}$ which is an $(1 \pm \eps)$-approximation to $t(G)$. If $t(G) >\tau$, we go to {Step 2}. Otherwise, 
we terminate by reporting an estimate $\hat{t}$. The query complexity of {Step 1} is $\Oh\left(\frac{\tau \log^2 n}{\eps^2} \right)$.

\item[Step 2:] {\bf (General Sparsification)}
$V(G)$ is \colored with $[3k]$ for $k=1$. Let $A, \, B, \, C$ be the partition generated 
by the coloring of 
$V(G)$. We initialize the data structure by setting $\psi =0$ and adding the tuple 
$(A, \, B, \, C, \, 9/2)$ to the data structure. Note that no query is required in this step. 
The constant $9/2$ is obtained by putting $k=1$ in Lemma~\ref{theo:sparse}.

\item[Step 3:]
We repeat {Steps 4} to {7} until there is no tuple left in the data structure. We maintain 
an invariant that the number of tuples stored in the data structure, is $\Oh(N)$, where $N=\frac{ \kappa_3\log^{12} n}{\eps^2}$. Note that $\kappa_3$ is a constant to be fixed later. 

\item[Step 4:] {\bf (Threshold for Tripartite Graph and Exact Counting in Tripartite Graphs)} 
For each tuple $(A,B,C,w)$ in the data structure, we determine whether $t(A,B,C) \leq
\tau $, the threshold, by using the deterministic algorithm 
corresponding to Lemma~\ref{lem:decide_exact} with 
$\cO(\tau \log n)$ many queries. If yes, we find the exact value of 
$t(A, \, B, \, C)$ by using the deterministic algorithm 
corresponding to Lemma~\ref{lem:exact} with 
$\cO(\tau \log n)$ many queries. Then the algorithm adds $w \cdot t(A, \, B, \, C)$ to $\psi$. 
We remove all $(A, \, B, \, C)$'s for which the algorithm found that $t(A, \, B, \, C)$ is below the threshold. 
As there are $ \Oh(N)$ many triples at any time, the 
number of queries made in each iteration of the algorithm is 
$\cO\left( \tau \log n \cdot N\right)=
\cO\left(\tau N \log n\right)$.

\item[Step 5:]
Note that each tuple $(A,B,C,w)$ in this step is such that 
$t(A,B,C) > \tau $. Let $(A_1,B_1,C_1,w_1),$ $\ldots,(A_r,B_r,C_r,w_r)$ 
be the set of tuples stored at the current instant. If $r >  10N$~\footnote{The constant $10$ is arbitrary. Any absolute constant more than $1$ would have been good enough.}, 
we go to {Step 6}. Otherwise, we go to {Step 7}. 

\item[Step 6] {\bf (Coarse Estimation and Sampling)} 
For each tuple $(A, \, B, \, C, \, w)$ in the data structure, we find an estimate $\widetilde{t}$ such that $\frac{t(A, \, B, \, C)}
{64 \log^2 n} < \hat{t} < 64t(A, \, B, \, C) \log^2 n$. This can be done due to Lemma~\ref{lem:coarse_main} and the number of queries is $\cO\left(\log^4 n \right)$ per tuple. As the algorithm executes the current step, the number of tuples in our data structure is more than $10N$. We take a sample from the set of tuples such that the sample maintains the required estimate \emph{approximately} by using Lemma~\ref{lem:importance}.
\remove{
\begin{lem}[\cite{BeameHRRS18}][Lemma~\ref{lem:importance} restated]
Let $(A_1, \, B_1, \, C_1, \, w_1),\ldots,(A_r, \, B_r, \, C_r, \, w_r)$ be the tuples present in the data structure and $e_i$ be the corresponding coarse estimation for $t(A_i, \, B_i, \, C_i), i \in [r],$  such that 
\begin{description}
\item[(i)] $\forall i \in [r]$, we have $w_i$, $e_i \geq 1$;

\item[(ii)] $\forall i \in [r]$, we have $\frac{e_i}{\rho} \leq t(A_i, \, B_i, \, C_i) \leq e_i \rho$ for some $\rho >0$; and

\item[(iii)] $\sum_{i=1}^r {w_i\cdot t(A_i, \, B_i, \, C_i)} \leq M$.
\end{description}
Note that the exact values $t(A_i, \, B_i, \, C_i)$'s are not known to us. Then there exists an algorithm that finds 
 $(A'_1, \, B'_1, \, C'_1, \, w'_1),
\ldots,(A'_s, \, B'_s, \, C'_s, \, w'_s)$ such that all of the above three conditions hold and  
$$
\size{\sum\limits_{i=1}^s {w'_i\cdot t(A'_i, \, B'_i, \, C'_i) } - \sum\limits_{i=1}^r {w_i\cdot t(A_i, \, B_i, \, C_i)}} \leq \lambda S,
$$ 
with probability $1-\delta$, where  $S=\sum_{i=1}^r {w_i\cdot t(A_i, \, B_i, \, C_i)}$ and $\lambda, \delta >0$. 
Also, 
$$
s=\cO\left(\lambda^{-2} \rho^4 \log M \left(\log \log M + \log \frac{1}{\delta}\right)\right).
$$
\end{lem} }
We use the algorithm corresponding to Lemma~\ref{lem:importance} with $\lambda =\frac{\eps}{6 \log n}$, $\rho=64 \log ^2 n$ and $\delta = \frac{1}{n^{10}}$ to find a new set of tuples $(A'_1, \, B'_1, \, C'_1, \, w'_1),
\ldots,(A'_s, \, B'_s, \, C'_s, \, w'_s)$ such that 
$$
\size{S-\sum_{i=1}^s w'_i ~ t(A', \, B', \, C')} \leq  \lambda S
$$
with probability $1-\frac{1}{n^{10}}$, 
where $S=\sum_{i=1}^r w_i t(A_i, \, B_i, \, C_i)$ and $s =   \frac{\kappa_3\log^{12} n}{\eps^2}$ for some constant $\kappa_3 >0$. This $\kappa_3$ is same as the one mentioned in Step $3$. Also, note that, $N=s=\frac{\kappa_3\log^{12} n}{\eps^2}$. No query is required 
to execute the algorithm of Lemma~\ref{lem:importance}. Recall that the number of tuples present at any time is $\Oh\left(N \right)$. Also, the coarse estimation for each tuple can be done by using $\Oh(\log ^4 n)$ many queries (Lemma~\ref{lem:coarse_main}). Hence, the number of queries in this step in each iteration, is $\cO(N \cdot \log ^{4} n)$. 

\item[Step 7:] {\bf (Sparsification for Tripartite Graphs)}
We partition each of $A,B$ and $C$ into $3$ parts uniformly at random. Let f 
$A=U_1 \uplus U_2 \uplus U_3$; $V=V_1 \uplus V_2 \uplus  V_3$ and 
$W=W_1 \uplus W_2 \uplus W_3$. We delete $(A,B,C,w)$ from the data structure and 
add $(U_i,V_i,W_i,9w)$ for each $i \in [3]$ to our data structure. Note that no query is made in this step.

\item[Step 8:]
Report $\psi$ as the estimate for the number of triangles in $G$, when no tuples are left.

\end{description}

First, we prove that the above algorithm produces a $(1\pm \eps)$ multiplicative approximation to $t(G)$
for any $\eps > 0$ with high probability. Recall the description of {Step 1} of the algorithm. If the algorithm terminates in 
{Step 1}, then we have a $(1 \pm \eps)$ approximation to $t(G)$ by Lemma~\ref{lem:exact_decide}. Otherwise, we decide that $t(G)> \tau$ and 
proceed to {Step 2}. In {Step 2}, the algorithm colors $V(G)$ using three colors and incurs a multiplicative error of $1 \pm \eps_0$ to $t(G)$, where $\eps_0 = 
\frac{\kappa _1 d \log n}{\sqrt{t(G)}}$. This is because of Remark~\ref{rem:gen-spar} and our choice of $\tau$. As $t(G)> \tau$ and $n \geq 64$,  $ \eps_0 \leq 
\lambda = \frac{\eps}{6 \log n}$.
Note that the algorithm possibly performs {Step 4} to {Step 7} multiple times, 
but not more than $O(\log n)$ times, as explained below. 

Let $(A_1, \, B_1, \, C_1, \, w_1),\ldots, (A_{\zeta}, \, B_{\zeta}, \, C_{\zeta}, \, w_{\zeta})$ are the set of tuples present in the data 
structure currently. We define $\sum_{i=1}^{\zeta} t(A_i, \, B_i, \, C_i)$ as the number of \emph{active} triangles. Let 
$\act_i$ be the number of triangles that are active in the $i^{th}$ iteration. Note that $\act_1 \leq t(G) \leq n^3$. 
By Lemma~\ref{theo:sparse1} and {Step 7}, observe that $\act_{i+1} \leq \frac{\act_{i}}{2}$. So, after $3 \log n$ 
many iterations there will be at most constant number of active triangles and then we can compute the 
exact number of active triangles and add it to $\psi$. In each iteration, there can be a multiplicative 
error of $1 \pm \lambda$ in {Step 5} and $1 \pm \eps_0$ due to {Step 4}. So, using the fact that 
$\eps_0 \leq \lambda $, the multiplicative approximation factor lies between $(1-\lambda)^{3 \log n +1}$ 
and $(1+\lambda)^{3 \log n +1}$. As 
$\lambda= \frac{\eps}{6 \log n}$, the required approximation factor is $1 \pm \eps$. 

The query complexity of {Step 1} is $\cO\left(\frac{\tau \log n}{\eps^2} \right)$. {Steps} 2,  3,  5, 7 and 8 do not make any query to the oracle. The query complexity of {Step 4} is $\cO\left(\tau N \log n\right)$ in each iteration and that of {Step 6} is $\cO(N\log ^4 n)$ in each iteration. The total number 
of iterations is $\cO(\log n)$. Hence, the total query complexity of the algorithm is 
$$\cO\left({\eps^{-2}\tau \log n}+(\tau \log n + \tau N \log n + N\log ^4 n) \log n \right)=\cO(\eps^{-4}d^2 \log ^{18} n ).$$ 
In the above expression, we have put $\tau=\max \left \{\frac{ 36\kappa_1^2 d^2 \log ^4 n}{\eps^2}, \frac{324 \kappa_2^2 d^2 \log^4 n}{\eps^2}\right\}$  and $N=\frac{ \kappa_3\log^{12} n}{\eps^2}$.
 
 Now, we bound the failure probability of the algorithm. The algorithm can fail in
 {Step 1} with probability  at most $\frac{1}{n^{10}}$,
 {Step 2} with probability at most $\frac{2}{n^4}$,
 {Step 6} with probability at most 
$\frac{10\kappa_3 \log ^{12} n}{\eps^4}\cdot \frac{1}{n^9} + \frac{1}{n^{10}}$, and
 {Step 7} with probability at most $\frac{10\kappa_3 \log ^{12} n}{\eps^4}\cdot \frac{1}{n^8}$. As the algorithm 
might execute {Steps 4} to {6} for $3 \log n$ times, the total failure probability is bounded by
$$
\frac{1}{n^{10}}+\frac{2}{n^4}+3 \log n \left(\frac{10\kappa_3 \log ^{12} n}{\eps^4}\cdot \frac{1}{n^8} 
+ \frac{10\kappa_3 \log ^{12} n}{\eps^4}\cdot \frac{1}{n^9} +  \frac{1}{n^{10}}\right) \leq \frac{c}{n^2}.
$$
Note that the above inequality holds because $\eps > \frac{\sqrt{d}\log ^{9/2}n}{n^{3/4}}$ and $n \geq 64$.
  
 \remove{Finally, we state the result of counting $t(G)$, given $V(G)$ with $\Delta \leq d$, and an access to \tis oracle.
\begin{theo}
\label{theo:main}
Let $G$ be a graph with $\Delta \leq d$, $V(G)=[n]$ and $n \geq 64$\remove{ and we have an access to \tis oracle}. For any 
$\eps >0$, there exists an algorithm that computes $1\pm \eps$ approximation of $t(G)$ using 
$\cO_d(\frac{\log ^{24} n}{\eps^{12}})$ \tis queries.
\end{theo}}
We end this Section by restating our main result.
\begin{theo}[Restatement of Theorem~\ref{theo:main-restate}]
\label{theo:main-algo}
Let $G$ be a graph with $\Delta_E \leq d$, $\size{V(G)}=n \geq 64$. For any
$\eps >0$, \test can be solved using $\cO\left( \frac{d^{2}\log ^ {18} n}{\eps^{4}} \right)$ many \tis queries with probability {at least} $1-\frac{\cO(1)}{n^{2}}$.
\end{theo}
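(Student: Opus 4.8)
The plan is to assemble the building blocks --- Lemmas~\ref{lem:exact_decide}, \ref{theo:sparse}, \ref{lem:decide_exact}, \ref{lem:exact}, \ref{theo:sparse1}, \ref{lem:coarse_main} and \ref{lem:importance} --- into a single iterative procedure that maintains an accumulator $\psi$ together with a bounded-size list of weighted tripartite subgraphs, so that at every moment $\psi + \sum_i w_i\, t(A_i,B_i,C_i)$ remains a $(1\pm o(\eps))$-approximation of $t(G)$. First I would dispose of the degenerate regime $\eps \le \sqrt{d}\,\log^{9/2} n / n^{3/4}$ by brute force: query $t(\{a\},\{b\},\{c\})$ for all triples $a,b,c$, which costs $\Oh(n^3)=\Oh(d^2\log^{18}n/\eps^4)$ queries in that range, and return the exact count. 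So assume $\eps$ is above that bound, set $\lambda=\eps/(6\log n)$, and fix the threshold $\tau=\Theta(d^2\log^4 n/\eps^2)$, chosen via Remark~\ref{rem:gen-spar} so that one sparsification step applied to any (tri)partite graph with more than $\tau$ triangles perturbs its triangle count by at most a $(1\pm\lambda)$ factor.

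Next I would run {\sc Threshold-Approx-Estimate} with parameters $\tau,\eps$ (Lemma~\ref{lem:exact_decide}): either it certifies $t(G)\le\tau$ and returns a $(1\pm\eps)$-estimate and we are done, or it certifies $t(G)>\tau$, in which case I color $V(G)$ with $3$ colors to obtain $(A,B,C)$ and seed the data structure with the single tuple $(A,B,C,9/2)$ (Lemma~\ref{theo:sparse} with $k=1$). Then iterate until the list is empty: (a) for each tuple use Lemma~\ref{lem:decide_exact} to test whether $t(A,B,C)\le\tau$, and if so use Lemma~\ref{lem:exact} to compute $t(A,B,C)$ exactly, add $w\cdot t(A,B,C)$ to $\psi$, and delete the tuple; (b) if more than $10N$ tuples remain, where $N=\Theta(\log^{12}n/\eps^2)$, run \cest\ on each (Lemma~\ref{lem:coarse_main}) to get estimates within a factor $\rho=64\log^2 n$, and apply the importance-sampling algorithm of Lemma~\ref{lem:importance} with $\lambda$, $\rho$ and $\delta=n^{-10}$ to shrink the list to $s=\Oh(\lambda^{-2}\rho^4\log M(\log\log M+\log\tfrac1\delta))=\Oh(\log^{12}n/\eps^2)=N$ tuples, changing $\sum_i w_i t(A_i,B_i,C_i)$ by at most a $(1\pm\lambda)$ factor; (c) split each of $A,B,C$ into $3$ random parts and replace $(A,B,C,w)$ by $(U_i,V_i,W_i,9w)$, $i\in[3]$ (Lemma~\ref{theo:sparse1} with $k=3$).

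For correctness I would track two quantities. First, the ``active'' count $\sum_i t(A_i,B_i,C_i)$: by Lemma~\ref{theo:sparse1}, since Step~7 is applied only to tuples above $\tau$, each round shrinks it by at least a constant factor, so after $\Oh(\log n)$ rounds it is constant and step (a) finishes exactly; this bounds the number of iterations. Second, the multiplicative error: it is incurred once in the general-sparsification step (a $1\pm\eps_0$ factor with $\eps_0\le\lambda$ because $t(G)>\tau$), and once per iteration in Step~6 ($1\pm\lambda$) and Step~7 ($1\pm\lambda$); over $\Oh(\log n)$ iterations this compounds to $(1\pm\lambda)^{\Oh(\log n)}=1\pm\eps$ by the choice $\lambda=\eps/(6\log n)$. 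The query bound follows by summing $\Oh(\tau\log^2 n/\eps^2)$ for Step~1, $\Oh(\tau N\log n)$ per iteration for Step~4, and $\Oh(N\log^4 n)$ per iteration for Step~6, over $\Oh(\log n)$ iterations, and substituting $\tau$ and $N$, giving $\Oh(d^2\log^{18}n/\eps^4)$. The failure probability is a union bound over the $\Oh(\log n)$ iterations of the $\Oh(N)$ invocations of Lemmas~\ref{lem:coarse_main} ($n^{-9}$ each) and \ref{lem:importance} ($n^{-10}$), together with Lemmas~\ref{lem:exact_decide} ($n^{-10}$) and \ref{theo:sparse} ($2n^{-4}$), which is $\Oh(1)/n^2$ using $\eps>\sqrt d\,\log^{9/2}n/n^{3/4}$ and $n\ge 64$.

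The main obstacle I expect is the bookkeeping that keeps the list size genuinely bounded: Step~7 triples the number of tuples each round, and only Lemma~\ref{lem:importance} pulls it back, so one must check that its output size $s$ --- which scales with $\rho^4$, $\rho=\Theta(\log^2 n)$ being exactly the blow-up the coarse estimator guarantees --- comes out to the same $N=\Theta(\log^{12}n/\eps^2)$ used as the trigger, and that the hypotheses of Lemma~\ref{lem:importance} ($w_i,e_i\ge1$, the two-sided estimate with parameter $\rho$, and $\sum_i w_i t(A_i,B_i,C_i)\le M=n^{\Oh(1)}$) are re-established at the start of each iteration. Tied to this is verifying that the per-iteration $(1\pm\lambda)$ errors are genuinely independent of the list size and that the ``active triangles at least halve'' invariant survives the interaction of Step~7's sparsification with Step~6's resampling; this is where the constants in $\tau$ and in the $10N$ trigger must be chosen compatibly.
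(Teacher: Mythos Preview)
Your proposal is correct and follows essentially the same approach as the paper's own proof in Section~\ref{sec:finaltrialgo-app}: the same degenerate-$\eps$ cutoff, the same threshold $\tau=\Theta(d^2\log^4 n/\eps^2)$, the same seeding via Lemma~\ref{theo:sparse} with $k=1$, the same loop of exact-count/importance-sample/sparsify with the same parameters $\lambda=\eps/(6\log n)$, $\rho=64\log^2 n$, $\delta=n^{-10}$, $N=\Theta(\log^{12}n/\eps^2)$, and the same termination argument via the halving of the active triangle count. Your ``obstacles'' paragraph in fact pinpoints the bookkeeping the paper handles only implicitly (that $s$ from Lemma~\ref{lem:importance} matches $N$, that the hypotheses $w_i,e_i\ge 1$ and $\sum w_i t(A_i,B_i,C_i)\le n^{\Oh(1)}$ are preserved, and that sampling does not spoil the shrinkage of the active count), so if anything your write-up is more explicit than the paper about where care is required.
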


\section{Discussions}
\label{sec:conclude}
\noindent
In this work, we generalize the framework of Beame et al~\cite{BeameHRRS18} of {\sc Edge Estimation} to solve \test by using \tis queries. Our algorithm makes $O(\eps^{-4}d^2\log ^{18} n)$ many \tis queries and returns a $(1\pm \eps)$-approximation to the number of triangles with high probability, where $d$ is the upper bound on $\Delta_E$.  
The downside of our work is the assumption $\Delta_E \leq d$. Note that Beame et al.~\cite{BeameHRRS18} had no such assumtion. Removing the assumption is non-trivial mainly due to the fact that, unlike the case for edges where two edges can share a common vertex, two triangles can share an edge. Our sparsification algorithm crucially uses the assumption on $\Delta_E$ and that remains the main barrier to cross. Recall our sparsification lemma (Lemma~\ref{theo:sparse}) and the definition of properly colored triangles (Definition~\ref{defi:proper}). Roughly speaking, our sparsification algorithm first colors the vertices of the graph, then counts the number of properly colored triangles, and finally scales it to have an estimation of the total number of triangles in the graph. Consider the situation when all the triangles in the graph have a common edge $e$. If $e$ is not properly colored, then we can not keep track of any triangle in $G$. As a follow up to this paper, Dell et al.\ ~\cite{DellLM20} and Bhattacharya et al.\ ~\cite{BhattaBGM18}, independently, generalized our result to $c$-uniform hypergraphs, where $c \in \N$ is a constant. In Section~\ref{sec:intro}, we already noted that \test can also be thought of as {\sc Hyperedge Estimation} problem in a $3$-uniform hypergrah. Their results showed that the bound on $\Delta_E$ is not necessary to solve \test by using polylogarithmic many \tis queries. The main technical result in both the papers is to come up with a sparsification algorithm that can take care of the case when $\Delta_E$ is not necessarily bounded. Note the sparsification algorithms in both the papers are completely different and give different insights.

Bhattacharya et al.~\cite{BhattaBGM18} and Dell et al.~\cite{DellLM20} refer the generalized oracle as {\sc Generalised Partite Independent Set} ({\sc GPIS}) oracle and {\sc Colorful Decision} ({\sc CD}) oracle, respectively. Bhattacharya et al.~\cite{BhattaBGM18} showed that {\sc Hyperedge Estimation} can be solved by using ${\cO}_c \left(\eps^{-4}{\log^{5c+5} n} \right)$ many  {\sc GPIS} queries and Dell et al.~\cite{DellLM20} showed that it can be solved  by using ${\cO}_c \left(\eps^{-2}{\log^{4c+8} n} \right)$ many {\sc CD} queries~\footnote{The constant in $\Oh_c(\cdot)$ is a function of $c$. The result of Bhattacharya et al. is a high probability result. The exact bound in the paper of Dell et al. is  ${\cO}_c \left(\eps^{-2}{\log^{4c+7} n} \log \frac{1}{\delta} \right)$, where the probability of success of their algorithm is $1-\delta.$}, with high probability. Substituting $c=3$ in their algorithm, we can have two different algorithms for \test. 
Let us compare our result (stated in Theorem~\ref{theo:main-algo}) with the results of \cite{BhattaBGM18} and Dell et al.~\cite{DellLM20} in the context of \test. If $\Delta_E=o(\log n)$, our algorithm for \test have less query complexity than that of Bhattacharya et al.~\cite{BhattaBGM18} for any given $\eps > 0$. Also, when $\Delta_{E}=o(\log n)$ and $\eps > 0$ is a fixed constant, our algorithm for \test have less query complexity than that of Dell et al.~\cite{DellLM20}. 

\remove{
\begin{theo}~\cite{BhattaBGM18}
\label{theo:gpis}
Let $\cH$ be a $c$-uniform hypergraph with $\size{U(\cH)}=n$. For any $\eps \in (0,1)$, {\sc Hyperedge Estimation} can be solved using 
${\cO}_c \left(\\eps^{-4}{\log^{5c+5} n} \right)$ {\sc GPIS} queries with high probability, where the constant in $\Oh_c(\cdot)$ is a function of $c$.
\end{theo}
\begin{theo}~\cite{BhattaBGM18}
\label{theo:gpis}
Let $\cH$ be a $c$-uniform hypergraph with $\size{U(\cH)}=n$. For any $\eps \in (0,1)$, {\sc Hyperedge Estimation} can be solved using ${\cO}_c \left(\\eps^{-2}{\log^{4c+7} n} \right)$ {\sc GPIS} queries with high probability, where the constant in $\Oh_c(\cdot)$ is a function of $c$.
\end{theo}
Putting $c=3$ in the above theorem, we can say that \test can be solved by using $O(\eps^{-1}\log ^{20} n)$ many \tis queries. Note that the query complexity is more than that of the query complexity of our algorithm stated in Theorem~\ref{theo:main}) when $=o(\log n)$. Dell et al.~\cite{} also have a result analogous to Theorem~\ref{theo:gpis} and the query complexity of their algorithm is ${\cO}_c \left(\\eps^{-1}{\log^{4c+7} n} \right)$. Putting $c=3$, we have \test can be solved by using $O(\eps^{-1}\log ^{19} n)$ many \tis queries. 
 However, we would like to note that our algoeNow we discuss about the main barrier in our calculation to remove the bound on $\Delta_E$ and how it is removed in our subsequent work in~\cite{BhattaBGM18}. 
\paragra
Note that the \test can also be thought of as {\sc Hyperedge Estimation} problem in a $3$-uniform hypergrah.}

\section*{Acknowledgment}
\noindent
Anup Bhattacharya is supported by NPDF fellowship (No. PDF/2018/002072), Government of India. Arijit Ghosh is supported by Ramanujan Fellowship (No. SB/S2/RJN-064/2015), India. We thank the anonymous reviewers of ISAAC 2019 and TOCS to improve the results and presentation of the paper. 

\newpage
\appendix
\remove{\section{Justification for \tis and our assumption that $\Delta_E \leq d$}
\label{sec:lower_bound}

\subsection{Local queries are not sufficient even if $\Delta_E \leq d$}
\label{sec:lower_bound1}
 Note that the triangle estimation result by Eden {\em et al.}~\cite{EdenLRS15} uses degree, neighbor and edge-existence queries. We show that even if their query model is augmented with a \emph{triangle existence query} oracle\footnote{A triangle existence query takes three different vertices $a,b,c \in V(G)$ as input and reports whether $(a,b,c)$ is a triangle in $G$.}, triangle estimation does not become easier, asymptotically. This is because a triangle existence query can be emulated by three edge existence queries, and therefore, the lower bound of Eden {\em et al.}~\cite{EdenLRS15} holds even when degree, neighbor and edge existence queres are aided by a triangle existence query. More importantly, we show that on graphs for which $\Delta_E$ is bounded by $d$, a condition required by our upper bound result, a similar lower bound on the number of queries holds. The formal statement for this lower bound is given as follows. 


\begin{obs}
\label{obs:lb}
Any multiplicative approximation algorithm that estimates the number of triangles in a graph $G$ such that
$\Delta_E \leq d$, requires $\Omega\left(  \frac{n}{t(G)^{1/3}} + \frac{d^2n}{t(G)}\right)$ queries, where the allowed queries are degree, neighbor, edge existence and triangle existence.
\end{obs}

\begin{proof}
Specifically, we prove that any multiplicative approximation algorithm that estimates the number of triangles in a graph $G$ such that $\Delta_E \leq d$, requires
\begin{itemize}
\item[(a)] $\Omega\left( \frac{n}{t(G)}\right)$ queries if $d \leq 2$,
\item[(b)] $\Omega\left( \frac{n}{t(G)^{1/3}}\right)$ queries if $1 \leq t(G) \leq {d \choose 3}$ and $  3 \leq  d \leq  n $,
\item [(c)] $\Omega\left( \frac{d^2n}{t(G)} \right)$ queries if $ t(G) > {d \choose 3}  $ and $3 \leq d \leq   n $;
\end{itemize}

\begin{figure}[!h]
  \centering
  \includegraphics[width=1.0\linewidth]{lowerbound}
  \caption{Lower bound construction for Observation~\ref{obs:lb}}
\label{fig:lb}
\end{figure}

The proof idea is motivated by~\cite{EdenLRS15}. For every $n$ and every $d$ as above, let $G_1$ be a graph on $n$ nodes having no edges and $\cG_2$ be a family of graphs on $n$ nodes. Any two graphs in $\cG_2$ differ 
only in labeling of the vertices. Note that $t(G_1)=0$ and we take $\cG_2$ such that $t(G)=\theta(t)$ for each $G \in \cG_2$ and for some $t \in \N$. Our strategy is to show that we can not distinguish whether the input is $G_1$ or some graph in $\cG_2$ unless we make \emph{sufficient} number of queries. We will design $\cG_2$ differently for each one of the cases below. 

\begin{description}

\item[Proof of (a)] Assume that $\lfloor \frac{t}{d} \rfloor (d+2)  < n$. Otherwise, the lower bound is trivial. Take $\cG_2$ to be a family
of graphs satisfying the following. In $\cG_2$, each graph $G$ consists of~(see Figure~\ref{fig:lb}~(a))
\begin{itemize}
\item  $\lfloor \frac{t}{d}\rfloor$ many vertex disjoint components $H_1,\ldots,H_{\lfloor \frac{t}{d}\rfloor}$ such that each $H_i$ has $d+2$ vertices and $d$ many triangles sharing an edge,
\item an independent set of $n-\lfloor \frac{t}{d} \rfloor (d+2) $ vertices.
\end{itemize} 

Note that the number of vertices participating in any triangle in any $G \in \cG_2$ is at most $\lfloor \frac{t}{d} \rfloor (d+2)$. Unless we hit 
such a vertex, we can not distinguish whether the input is $G_1$ or some graph in $\cG_2$. The probability of hitting such a 
vertex in a graph selected uniformly from $\cG_2$ is at most $\frac{\lfloor \frac{t}{d} \rfloor (d+2)}{n}$. Hence, the number of 
queries required to distinguish between two input cases is at least 
$$
	\frac{n}{\left\lfloor \frac{t}{d} \right\rfloor (d+2)} = \Omega\left(
	\frac{n}{t}\right)=\Omega\left(
	\frac{n}{t(G)}\right).
$$

\item[Proof of (b)] Take $\cG_2$ to be the class of graphs where each $G \in \cG_2$ contains a clique of size 
$\lfloor t^{1/3}\rfloor$ and an independent set of size $n - \lfloor t^{1/3}\rfloor$~(see Figure~\ref{fig:lb} (b)). Observe that $G$ satisfies $t(G) =\theta(t)$ and $\Delta_E \leq d$ as $t(G) \leq {d \choose 3}$. Using a similar argument as in proof of (a), $\Omega\left( \frac{n}{t^{1/3}}\right)$ queries are required to decide whether the input graph is $G_1$ or some graph in $\cG_2$.

\item[Proof of (c)] Assume that ${\left\lfloor \frac{t}{{{d} \choose {3}}} \right\rfloor d} < n$. Otherwise, the claimed bound trivially holds. Take $\cG_2$ to be a class of graph where each graph $G \in \cG_2$ consists of~(see Figure~\ref{fig:lb} (c))
\begin{itemize}
\item $\left\lfloor \frac{t}{{{d} \choose {3}}} \right\rfloor$ many vertex disjoint cliques each of size $d$,
\item an independent set of size $n-\left\lfloor \frac{t}{{{d} \choose {3}}} \right\rfloor d$.
\end{itemize}
Using a similar argument as in proof of (a), one can show that the number of queries required to decide whether the input graph is $G_1$ or some graph in $\cG_2$ is at least 
$$
	\frac{n}{\left\lfloor \frac{t}{{{d} \choose {3}}} \right\rfloor d}
	=\Omega \left( \frac{d^2n}{t}\right)=\Omega\left( \frac{d^2n}{t(G)}\right).
$$
\end{description}
\end{proof}}

\section{Scenario where $\Delta_E$ is bounded}
\label{sec:deltabound}
In this Section, we discuss some scenarios where the number of triangles sharing an edge is bounded. An obvious example for such graphs are graphs with bounded degree. We explore some other scenarios.
\begin{itemize}
\item[(i)] Consider a graph $G(P,E)$ such that the vertex set $P$ corresponds to a subset of $\R^2$ and $
(u,v) \in E$ if and only if the distance between $u$ and $v$ is exactly $1$. The objective is to compute 
the number of triples of points from $P$ forming an equilateral triangle having side length $1$, that is, the number of triangles 
in $G$. Observe that there can be at most two triangles sharing an edge in $G$, that is, $\Delta_E \leq 2$.
\item[(ii)]  Consider a graph $G(P,E)$ such that the vertex set $P$ corresponds to a set of points inside an $N \times N$ square in $\R^2$ and $(u,v) \in E$ if and only if the distance between $u$ and $v$ is at most $1$. The objective is to compute 
the number of triples of points from $P$ forming a triangle having each side length at most $1$, that is, the number of triangles 
in $G$. For \emph{large} enough $N$ there can be bounded number of triangles sharing an edge in $G$ with high probability.
\item[(iii)] Consider a graph $G(V,E)$ representing a community sharing information. Each node has some information and two nodes are connected if and only if there exists an edge between the nodes. Nodes increase their information by sharing information among their neighbors in $G$. Observe that the information of a node is derived by the set of neighbors. So, if two nodes have \emph{large} number of common neighbors in $G$, then there is no need of an edge between the two nodes. So, the number of triangles on any edge in the graph is bounded. The objective is to compute the number of triangles in $G$, that is, the number of triples of nodes in $G$ such that each pair of vertices are connected.  
\end{itemize}
In (i) and (ii), \tis oracle can be implemented very efficiently. We can report a \tis query by just running a standard plane sweep algorithm in Computational Geometry that takes $\Oh(n \log n)$ running time.

\section{Some probability results}
\label{sec:prelim}
\begin{pro}
\label{pro:exp}
Let $X$ be a random variable. Then $\E[X] \leq \sqrt{\E[X^2]}$.
\end{pro}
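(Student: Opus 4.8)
The plan is to deduce this from the non-negativity of variance (equivalently, from Jensen's inequality for the convex map $t \mapsto t^2$). First I would dispose of the trivial case: if $\E[X] \leq 0$, then since $\E[X^2] \geq 0$ we have $\sqrt{\E[X^2]} \geq 0 \geq \E[X]$ and there is nothing to prove. So I would assume $\E[X] > 0$ and show the (stronger) inequality $(\E[X])^2 \leq \E[X^2]$, from which the claim follows by taking square roots of both non-negative sides.

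To get $(\E[X])^2 \leq \E[X^2]$, I would introduce $Y := (X - \E[X])^2$, which is a non-negative random variable, hence $\E[Y] \geq 0$. Expanding by linearity of expectation,
$$
\E[Y] = \E[X^2] - 2\E[X]\cdot \E[X] + (\E[X])^2 = \E[X^2] - (\E[X])^2,
$$
so $\E[X^2] - (\E[X])^2 = \E[Y] \geq 0$, which is exactly what is needed. (An equally short alternative would be Cauchy--Schwarz: $\E[X] = \E[X\cdot 1] \leq \sqrt{\E[X^2]}\sqrt{\E[1]} = \sqrt{\E[X^2]}$, but the variance argument is the most self-contained.)

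There is no real obstacle here; the only point requiring a word of care is the implicit finiteness assumption $\E[X^2] < \infty$ (otherwise the right-hand side is $+\infty$ and the inequality is vacuous) and the sign bookkeeping when passing from $(\E[X])^2 \leq \E[X^2]$ to $\E[X] \leq \sqrt{\E[X^2]}$, which is handled by the case split above. Since this proposition is only invoked in the form $\E[\lvert X\rvert] \leq \sqrt{\E[X^2]}$ elsewhere in the paper, it is worth remarking that applying the displayed inequality to $\lvert X\rvert$ in place of $X$ (and using $\lvert X\rvert^2 = X^2$) yields that form immediately.
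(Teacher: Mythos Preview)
Your proof is correct and complete; the paper itself states this proposition without proof (it is listed among standard probability facts in the appendix), so there is nothing to compare against. Your variance-based argument, together with the sign case-split and the remark about the $\E[\lvert X\rvert]$ form actually used in the paper, is exactly the kind of self-contained justification one would supply here.
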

\begin{lem}{\rm (\cite[Theorem~7.1]{DubhashiP09})}{\bf .}
\label{lem:dp}
Let $f$ be a function of $n$ random variables $X_1,\ldots,X_n$ such that 
\begin{itemize}
\item[(i)] Each $X_i$ takes values from a set $A_i$,
\item[(ii)] $\E[f]$ is bounded, i.e., $0 \leq \E[f] \leq M$,
\item[(iii)] $\cB$ be any event satisfying the following for each $i \in [n]$.
$$ \size{\E[f ~|~X_1,\dots,X_{i-1},X_{i}=a_i,\cB^c] - \E[f ~|~X_1,\dots,X_{i-1},X_{i}=a'_i,\cB^c] }\leq c_i.$$
\end{itemize}  

Then for any $\delta \geq 0$, 
$$\pr\left(\size{f - \E[f]} > \delta + M\pr(\cB) \right) \leq e^{-{\delta^2}/{\sum\limits_{i=1}^{n} c_i^2}} + \pr(\cB).$$
\end{lem}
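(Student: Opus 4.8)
The plan is to derive this bounded-differences inequality with a bad event from the ordinary Azuma--Hoeffding martingale tail bound, applied not to the original probability space but to the conditioned space $\pr(\cdot\mid\cB^c)$. I read hypothesis~(ii) as the pointwise bound $0\le f\le M$ (which is what the paper's applications provide, with $M=t(G)$ and $f$ a count of triangles), and I assume $\cB$ is $\sigma(X_1,\dots,X_n)$-measurable, as it is in every application. The case $\pr(\cB)=1$ is trivial, since then $\size{f-\E[f]}\le M\le\delta+M\pr(\cB)$; so assume $\pr(\cB^c)>0$.

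First I would peel off the bad event, writing $\pr(\size{f-\E[f]}>\delta+M\pr(\cB))\le\pr(\{\size{f-\E[f]}>\delta+M\pr(\cB)\}\cap\cB^c)+\pr(\cB)$, so that it suffices to bound the first summand by $e^{-\delta^2/\sum_i c_i^2}$. Next I would recentre: from $\E[f]=\pr(\cB)\E[f\mid\cB]+\pr(\cB^c)\E[f\mid\cB^c]$ one gets $\size{\E[f]-\E[f\mid\cB^c]}=\pr(\cB)\size{\E[f\mid\cB]-\E[f\mid\cB^c]}\le M\pr(\cB)$ by $0\le f\le M$. Hence on $\cB^c$ the event $\size{f-\E[f]}>\delta+M\pr(\cB)$ forces $\size{f-\E[f\mid\cB^c]}>\delta$, and the first summand is at most $\pr(\size{f-\E[f\mid\cB^c]}>\delta\mid\cB^c)$.

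The heart of the argument is a Doob martingale built under the conditional law. I would set $W_i=\E[f\mid X_1,\dots,X_i,\cB^c]$ for $i=0,1,\dots,n$, so that $W_0=\E[f\mid\cB^c]$, and, because $f$ is a function of $X_1,\dots,X_n$ and $\cB\in\sigma(X_1,\dots,X_n)$, on the event $\cB^c$ one has $W_n=f$. By the tower property --- which holds under $\pr(\cdot\mid\cB^c)$ exactly as under $\pr$, needing no independence of the $X_i$ --- the sequence $(W_i)_{i=0}^n$ is a martingale with respect to $\sigma(X_1,\dots,X_i)$ in the space $\pr(\cdot\mid\cB^c)$. Its increments inherit the bound from hypothesis~(iii): $W_{i-1}$ is a $\pr(\cdot\mid\cB^c)$-weighted average, over the admissible values $a_i$ of $X_i$, of the quantities $\E[f\mid X_1,\dots,X_{i-1},X_i=a_i,\cB^c]$, and hypothesis~(iii) says any two of those differ by at most $c_i$, whence $\size{W_i-W_{i-1}}\le c_i$ almost surely. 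Applying Azuma--Hoeffding to $(W_i)$ under $\pr(\cdot\mid\cB^c)$ then gives $\pr(\size{W_n-W_0}>\delta\mid\cB^c)\le e^{-\delta^2/\sum_i c_i^2}$, which together with the two reductions above completes the proof.

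The delicate point is the martingale construction itself: conditioning on $\cB^c$ can destroy whatever independence the $X_i$ originally had, so one has to confirm that $(W_i)$ is still a genuine martingale in the conditioned space (it is, by the tower property) and to dispose of the routine null-set issue of conditioning on $\{X_1=x_1,\dots,X_i=x_i\}$ --- vacuous in the finite colour-assignment setting where the lemma is invoked, and handled by regular conditional probabilities in general. The exponent constant is the only other thing to watch, and matching it to the claimed $e^{-\delta^2/\sum_i c_i^2}$ just means quoting the appropriate (one-sided / McDiarmid-style) normalisation of Azuma--Hoeffding; nothing in the argument depends on its precise form.
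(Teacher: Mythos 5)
This lemma is quoted in the paper (as Theorem~7.1 of Dubhashi--Panconesi) and is not proved there, so there is no in-paper argument to compare against; judged on its own, your proof is the standard derivation and is essentially correct. You correctly identify the two points where the literal statement needs interpretation: hypothesis~(ii) must be read as the pointwise bound $0\le f\le M$ (the stated ``$0\le\E[f]\le M$'' is not enough for the recentering step $\size{\E[f]-\E[f\mid\cB^c]}\le M\pr(\cB)$), and the Doob martingale must be built under the conditioned law $\pr(\cdot\mid\cB^c)$, where the tower property still applies and hypothesis~(iii) still controls the increments because $W_{i-1}$ is a convex combination of the quantities $\E[f\mid X_1,\dots,X_{i-1},X_i=a_i,\cB^c]$, each within $c_i$ of $W_i$. (You do not actually need $\cB\in\sigma(X_1,\dots,X_n)$ for $W_n=f$; it suffices that $f$ is $\sigma(X_1,\dots,X_n)$-measurable.) The one caveat is the constant you flag: the two-sided Azuma/McDiarmid bound for increments of range $c_i$ yields $2e^{-2\delta^2/\sum_i c_i^2}$, which implies the stated $e^{-\delta^2/\sum_i c_i^2}$ only when $\delta^2\ge(\ln 2)\sum_i c_i^2$; for smaller $\delta$ the stated form is actually false as written (a one-step example with $f\in\{0,c\}$ shows the missing leading factor of $2$ is necessary). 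This imprecision lives in the paper's transcription of the cited theorem, not in your argument, and is immaterial in every application here, where $\delta^2/\sum_t c_t^2=\Theta(\log n)$.
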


\begin{lem}[Hoeffding's inequality~\cite{DubhashiP09}]
\label{lem:hoeff_inq}
Let $X_1,\ldots,X_n$ be $n$ independent random variables such that $X_i \in [a_i,b_i]$. Then for $X=\sum\limits_{i=1}^n X_i$, the following is true for any $\delta >0$.
$$\pr \left( \size{X - \E[X]} \geq \delta \right) \leq 2 \cdot e ^{-{2\delta ^2}/{\sum\limits_{i=1}^n (b_i - a_i)^2}}.$$
\end{lem}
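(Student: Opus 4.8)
\textbf{Proof proposal for Lemma~\ref{lem:hoeff_inq} (Hoeffding's inequality).}

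The plan is to use the exponential moment (Cramér--Chernoff) method. First I would center the variables: let $Y_i = X_i - \E[X_i]$, so $\E[Y_i] = 0$ and $Y_i \in [a_i - \E[X_i], b_i - \E[X_i]]$, an interval of length $b_i - a_i$. For any $s > 0$, Markov's inequality applied to $e^{sS}$ with $S = \sum_{i=1}^n Y_i = X - \E[X]$ gives
$$
\pr\left(S \geq \delta\right) \leq e^{-s\delta}\,\E\!\left[e^{sS}\right] = e^{-s\delta}\prod_{i=1}^n \E\!\left[e^{sY_i}\right],
$$
where the factorization uses independence of the $X_i$ (hence of the $Y_i$). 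So everything reduces to bounding the moment generating function of a single bounded, mean-zero variable.

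The key step — and the one I expect to be the main technical obstacle — is \emph{Hoeffding's lemma}: if $Y$ is mean-zero and takes values in an interval $[a,b]$ (here $b - a = b_i - a_i$), then $\E[e^{sY}] \leq e^{s^2(b-a)^2/8}$. I would prove this by convexity of $y \mapsto e^{sy}$: for $y \in [a,b]$,
$$
e^{sy} \leq \frac{b-y}{b-a}e^{sa} + \frac{y-a}{b-a}e^{sb},
$$
and taking expectations while using $\E[Y] = 0$ yields $\E[e^{sY}] \leq \frac{b}{b-a}e^{sa} - \frac{a}{b-a}e^{sb}$. Writing $p = -a/(b-a) \in [0,1]$ and $u = s(b-a)$, the right-hand side equals $e^{\phi(u)}$ with $\phi(u) = -pu + \ln\!\big(1 - p + p e^{u}\big)$. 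A short computation shows $\phi(0) = 0$, $\phi'(0) = 0$, and $\phi''(u) = \frac{p(1-p)e^u}{(1-p+pe^u)^2} \leq \tfrac14$ for all $u$ (the last bound because $x(1-x) \leq \tfrac14$ applied to $x = \frac{pe^u}{1-p+pe^u}$). Taylor's theorem with remainder then gives $\phi(u) \leq u^2/8$, i.e. $\E[e^{sY}] \leq e^{s^2(b-a)^2/8}$.

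Plugging this into the product bound gives
$$
\pr(S \geq \delta) \leq \exp\!\left(-s\delta + \frac{s^2}{8}\sum_{i=1}^n (b_i - a_i)^2\right).
$$
I would then optimize the exponent over $s > 0$; the minimizer is $s = 4\delta / \sum_{i=1}^n (b_i - a_i)^2$, which yields $\pr(S \geq \delta) \leq \exp\!\big(-2\delta^2 / \sum_{i=1}^n (b_i-a_i)^2\big)$. Finally, applying the same argument to $-S = \sum_i (-Y_i)$ (the variables $-Y_i$ are again independent, mean-zero, and lie in intervals of the same lengths) bounds $\pr(S \leq -\delta)$ identically, and a union bound over the two tail events produces the factor of $2$ in the statement, completing the proof.
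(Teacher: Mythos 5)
Your proof is correct and complete: it is the standard Chern\-off/exponential-moment argument combined with Hoeffding's lemma (the MGF bound $\E[e^{sY}]\leq e^{s^2(b-a)^2/8}$ proved via convexity and the bound $\phi''\leq 1/4$), followed by optimization over $s$ and a union bound for the two tails. The paper does not prove this lemma at all — it is imported verbatim from Dubhashi--Panconesi with a citation — and your argument is exactly the proof found in that reference, so there is nothing to reconcile.
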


\begin{lem}[Chernoff-Hoeffding bound~\cite{DubhashiP09}]
\label{lem:cher_bound}
Let $X_1, \ldots, X_n$ be independent random variables such that $X_i \in [0,1]$. For $X=\sum\limits_{i=1}^n X_i$ and $\mu_l \leq \E[X] \leq \mu_h$, the followings hold for any $\delta >0$.
\begin{itemize}
\item[(i)] $\pr \left( X > \mu_h + \delta \right) \leq e^{-2\delta^2/n}$.
\item[(ii)] $\pr \left( X < \mu_l - \delta \right) \leq e^{-2\delta^2 / n}$.
\end{itemize}

\end{lem}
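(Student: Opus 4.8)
The plan is to prove both tail bounds by the standard exponential-moment (Chernoff) method, reducing everything to Hoeffding's lemma on the moment generating function of a single centered, bounded random variable. First I would reduce to the case where the reference point is exactly $\E[X]$: since $\mu_h \geq \E[X]$, the event $\{X > \mu_h + \delta\}$ is contained in $\{X > \E[X] + \delta\}$, so it suffices to show $\pr(X - \E[X] > \delta) \leq e^{-2\delta^2/n}$; symmetrically, since $\mu_l \leq \E[X]$, part (ii) follows from $\pr(X - \E[X] < -\delta) \leq e^{-2\delta^2/n}$. For the upper tail, fix $s > 0$ and apply Markov's inequality to $e^{s(X - \E[X])}$, giving $\pr(X - \E[X] > \delta) \leq e^{-s\delta}\,\E[e^{s(X - \E[X])}]$. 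By independence of $X_1, \ldots, X_n$ the moment generating function factorizes as $\prod_{i=1}^n \E[e^{s(X_i - \E[X_i])}]$.

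The key ingredient is Hoeffding's lemma: if $Y$ is supported in an interval of length $\ell$ with $\E[Y] = 0$, then $\E[e^{sY}] \leq e^{s^2 \ell^2 / 8}$. I would prove this by setting $\psi(s) = \log \E[e^{sY}]$, noting $\psi(0) = 0$ and $\psi'(0) = \E[Y] = 0$, and identifying $\psi''(s)$ with the variance of $Y$ under the exponentially tilted law $d\mathbb{Q}_s \propto e^{sY}\,d\mathbb{P}$; since $Y$ still lies in an interval of length $\ell$ under $\mathbb{Q}_s$, that variance is at most $\ell^2/4$, and a second-order Taylor expansion gives $\psi(s) \leq s^2 \ell^2 / 8$. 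Applying this with $Y = X_i - \E[X_i]$, whose range has length at most $1$ because $X_i \in [0,1]$, yields $\E[e^{s(X_i - \E[X_i])}] \leq e^{s^2/8}$, hence $\E[e^{s(X - \E[X])}] \leq e^{n s^2 / 8}$ and $\pr(X - \E[X] > \delta) \leq e^{-s\delta + n s^2 / 8}$. Optimizing the quadratic exponent over $s > 0$, the minimum is attained at $s = 4\delta/n$ and equals $-2\delta^2/n$, which proves (i). Part (ii) follows by running the identical argument with each $X_i$ replaced by $-X_i$ (whose centered version again has range length at most $1$), combined with the reduction $\mu_l \leq \E[X]$.

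The only step needing genuine care — though it is entirely standard, and the statement is cited from Dubhashi--Panconesi — is the bound $\psi''(s) \leq \ell^2/4$ inside Hoeffding's lemma: one must recognize $\psi''(s)$ as a variance under the tilted measure, observe that this measure is supported in the same interval of length $\ell$, and then invoke the elementary fact that a random variable confined to an interval of length $\ell$ has variance at most $\ell^2/4$. Everything else — Markov's inequality, factorization over independent coordinates, and the choice $s = 4\delta/n$ — is routine, so the write-up can be kept short.
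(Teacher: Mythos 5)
Your proof is correct and is the standard exponential-moment argument (Markov's inequality applied to $e^{s(X-\E[X])}$, factorization by independence, Hoeffding's lemma via the tilted-measure variance bound, and optimization at $s=4\delta/n$); the paper itself states this lemma without proof, citing Dubhashi--Panconesi, where essentially this same derivation appears. The reductions from $\mu_h,\mu_l$ to $\E[X]$ and the constant $-2\delta^2/n$ both check out, so nothing further is needed.
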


\begin{lem}\rm{(\cite[Theorem~3.2]{DubhashiP09})}{\bf .}
\label{lem:depend:high_exact_statement}
Let $X_1,\ldots,X_n$ be random variables such that $a_i \leq X_i \leq b_i$ and $X=\sum\limits_{i=1}^n X_i$. Let $\cD$ be the \emph{dependent} graph, where $V(\cD)=\{X_1,\ldots,X_n\}$ and $ E(\cD)= \{(X_i,X_j): \mbox{$X_i$ and $X_j$ are dependent}\}$. Then for any $\delta >0$, 
$$ \pr(\size{X-\E[X]} \geq \delta) \leq  2e^{-2\delta^2 / \chi^*(\cD)\sum\limits_{i=1}^{n}(b_i-a_i)^2},$$
where $\chi^*(\cD)$ denotes the \emph{fractional chromatic number} of $\cD$.

\end{lem}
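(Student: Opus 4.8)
The plan is to reproduce the fractional‑cover argument of Janson~\cite{Janson04} on which Theorem~3.2 of~\cite{DubhashiP09} rests. First I would centre the variables: replacing each $X_i$ by $X_i-\E[X_i]$ does not change the width $b_i-a_i$ of its range and leaves $\cD$ a valid dependency graph (functions of independent variables are independent), so it suffices to prove $\pr(\size{X}\ge\delta)\le 2\exp\!\left(-2\delta^2/(\chi\,\Sigma)\right)$ under $\E[X_i]=0$ for all $i$, where $\chi:=\chi^*(\cD)$ and $\Sigma:=\sum_{i=1}^n(b_i-a_i)^2$. By the definition of the fractional chromatic number there are independent sets $I_1,\dots,I_m$ of $\cD$ and weights $\lambda_1,\dots,\lambda_m>0$ with $\sum_{j}\lambda_j=\chi$ and $\sum_{j:\,i\in I_j}\lambda_j=1$ for every $i\in[n]$ (a fractional partition of the vertices). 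Put $Y_j:=\sum_{i\in I_j}X_i$; then
\[
X=\sum_{i}X_i\sum_{j:\,i\in I_j}\lambda_j=\sum_{j}\lambda_j Y_j ,
\]
and, because $I_j$ is independent in the dependency graph, the variables $\{X_i:i\in I_j\}$ are mutually independent, so $Y_j$ is a sum of independent mean‑zero variables with $X_i\in[a_i,b_i]$.

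Next I would bound the moment generating function of $X$. Fix $h>0$. Using $X=\sum_j\lambda_j Y_j$ and Hölder's inequality applied to $\E[e^{hX}]=\E[\prod_j e^{h\lambda_j Y_j}]$ with the conjugate exponents $p_j:=\chi/\lambda_j$ (valid since $\lambda_j\le\chi$ and $\sum_j p_j^{-1}=\chi^{-1}\sum_j\lambda_j=1$),
\[
\E\!\left[e^{hX}\right]\le\prod_{j}\Bigl(\E\!\left[e^{h\chi Y_j}\right]\Bigr)^{\lambda_j/\chi}.
\]
By mutual independence inside $I_j$ together with Hoeffding's lemma ($\E[e^{sZ}]\le e^{s^2(b-a)^2/8}$ for a mean‑zero $Z\in[a,b]$),
\[
\E\!\left[e^{h\chi Y_j}\right]=\prod_{i\in I_j}\E\!\left[e^{h\chi X_i}\right]\le\exp\!\left(\tfrac{h^2\chi^2}{8}\sum_{i\in I_j}(b_i-a_i)^2\right).
\]
Multiplying and collecting exponents gives $\tfrac{h^2\chi}{8}\sum_j\lambda_j\sum_{i\in I_j}(b_i-a_i)^2=\tfrac{h^2\chi}{8}\sum_i(b_i-a_i)^2\sum_{j:\,i\in I_j}\lambda_j=\tfrac{h^2\chi\Sigma}{8}$, where the last equality is exactly the fractional‑partition identity. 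Hence $\E[e^{hX}]\le\exp(h^2\chi\Sigma/8)$.

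Finally I would finish with Markov's inequality and optimisation: $\pr(X\ge\delta)\le e^{-h\delta}\E[e^{hX}]\le\exp(-h\delta+h^2\chi\Sigma/8)$, and the choice $h=4\delta/(\chi\Sigma)$ yields $\pr(X\ge\delta)\le\exp\!\left(-2\delta^2/(\chi\Sigma)\right)$. Repeating the argument verbatim for $-X$ (the summands $-X_i$ are still mean‑zero with ranges $b_i-a_i$ and the dependency graph is again $\cD$) gives $\pr(X\le-\delta)\le\exp\!\left(-2\delta^2/(\chi\Sigma)\right)$, and a union bound produces the factor $2$. The step that genuinely matters is the choice of Hölder exponents $p_j=\chi/\lambda_j$: using instead Jensen's inequality for $\exp$ against the probability weights $\lambda_j/\chi$ would put $\chi^2$ (not $\chi$) in the denominator of the exponent, so keeping the dependence on $\chi^*(\cD)$ linear forces the weighted $L^p$ splitting. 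The only other point needing a word of justification is that an optimal fractional cover of $\cD$ (where $\sum_{j:\,i\in I_j}\lambda_j\ge1$) can be taken to be a fractional partition ($=1$); this is a standard fact about the fractional‑colouring LP and can simply be quoted, or obtained by discarding excess weight.
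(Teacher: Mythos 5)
This lemma is quoted in the paper directly from Dubhashi and Panconesi (Theorem~3.2), which is Janson's concentration inequality for sums with bounded dependency, so the paper contains no proof of its own to compare against. Your argument --- centering, passing to an exact fractional partition into independent sets, the weighted H\"older splitting of the moment generating function with exponents $p_j=\chi/\lambda_j$, Hoeffding's lemma within each independent set, and Chernoff optimisation at $h=4\delta/(\chi\Sigma)$ --- is precisely the standard proof of the cited theorem and is correct, including your remark that the linear (rather than quadratic) dependence on $\chi^*(\cD)$ hinges on using H\"older rather than Jensen.
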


The following lemma directly follows from Lemma~\ref{lem:depend:high_exact_statement}.
\begin{lem}
\label{lem:depend:high_prob}
Let $X_1,\ldots,X_n$ be indicator random variables such that there are at most $d$ many $X_j$'s on which an $X_i$ depends and  $X=\sum\limits_{i=1}^n X_i$. Then for any $\delta > 0$, $$\pr(\size{X-\E[X]} \geq \delta) \leq 2e^{-2\delta^2 / (d+1)n}.$$
\end{lem}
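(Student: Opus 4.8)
\textbf{Proof proposal for Lemma~\ref{lem:depend:high_prob}.}

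The plan is to derive this as a direct corollary of Lemma~\ref{lem:depend:high_exact_statement}, which already gives a Chernoff--Hoeffding-type tail bound for a sum of bounded dependent random variables in terms of the fractional chromatic number $\chi^*(\cD)$ of the dependency graph $\cD$. The only work is to specialize the two quantities $\sum_{i=1}^n (b_i-a_i)^2$ and $\chi^*(\cD)$ to the present setting.

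First, since each $X_i$ is an indicator random variable, we may take $a_i = 0$ and $b_i = 1$, so $b_i - a_i = 1$ for every $i$ and hence $\sum_{i=1}^n (b_i - a_i)^2 = n$. Second, I would bound the fractional chromatic number of $\cD$. By hypothesis each $X_i$ depends on at most $d$ other variables $X_j$, so the dependency graph $\cD$ has maximum degree at most $d$. A greedy colouring argument shows that any graph of maximum degree at most $d$ is properly $(d+1)$-colourable, so $\chi(\cD) \le d+1$; since the fractional chromatic number never exceeds the (integral) chromatic number, $\chi^*(\cD) \le \chi(\cD) \le d+1$. (Here I should note explicitly that Lemma~\ref{lem:depend:high_exact_statement} is monotone in the denominator parameters, so replacing $\chi^*(\cD)$ by the upper bound $d+1$ and $\sum (b_i-a_i)^2$ by $n$ only weakens the bound, which is what we want.)

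Substituting $\sum_{i=1}^n (b_i-a_i)^2 = n$ and $\chi^*(\cD) \le d+1$ into the conclusion of Lemma~\ref{lem:depend:high_exact_statement} yields, for every $\delta > 0$,
$$
\pr\left(\size{X - \E[X]} \ge \delta\right) \le 2 e^{-2\delta^2 / \chi^*(\cD)\sum_{i=1}^n (b_i - a_i)^2} \le 2 e^{-2\delta^2 / (d+1)n},
$$
which is exactly the claimed inequality. There is no real obstacle here; the statement is essentially a repackaging of Lemma~\ref{lem:depend:high_exact_statement} for the indicator case, and the only mildly non-trivial input is the elementary fact $\chi^*(\cD) \le \Delta(\cD) + 1$, which I would cite or prove in one line via greedy colouring.
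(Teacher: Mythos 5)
Your proposal is correct and matches the paper's intent exactly: the paper states that the lemma ``directly follows'' from Lemma~\ref{lem:depend:high_exact_statement}, and your specialization ($b_i-a_i=1$ so $\sum_i(b_i-a_i)^2=n$, and $\chi^*(\cD)\le\chi(\cD)\le\Delta(\cD)+1\le d+1$ via greedy colouring, with the bound monotone in these parameters) is precisely the omitted verification.
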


\begin{lem}[Importance sampling~\cite{BeameHRRS18}]
\label{lem:importance1}
Let $(D_1,w_1,e_1),\ldots, (D_r,w_r,e_r)$ are the given structures and each $D_i$ has an associated weight
${c}(D_i)$ satisfying  
\begin{itemize}
\item[(i)] $w_i,e_i \geq 1, \forall i \in [r]$;
\item[(ii)] $\frac{e_i}{\rho} \leq c(D_i) \leq e_i \rho$ for some $\rho >0$ and all $i \in [r]$; and
\item[(iii)] $\sum\limits_{i=1}^r {w_i\cdot c(D_i)} \leq M$.
\end{itemize}
Note that the exact values $c(D_i)$'s are not known to us. Then there exists an algorithm that finds 
 $(D'_1,w'_1,e'_1),\ldots, (D'_s,w'_s,e'_s)$ such that, with
 probability at least $1-\delta$, all of the above three conditions hold and 
$$
 \size{\sum\limits_{i=1}^t {w'_i\cdot c(D'_i)} - \sum\limits_{i=1}^r
   {w_i\cdot c(D_i)}} \leq \lambda S,
$$ 
where  $S=\sum\limits_{i=1}^r {w_i\cdot c(D_i)}$ and $\lambda, \delta
>0$. 
The time complexity of the algorithm is $\cO(r)$ 
and $s=\cO\left( \frac{\rho^4 \log M \left(\log \log M + \log \frac{1}{\delta}\right)}{\lambda^2}\right)$.
\end{lem}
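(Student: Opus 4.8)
The plan is to invoke the machinery of Section~\ref{sec:finaltrialgo-app}: assemble the lemmas stated above into the eight‑step algorithm, then bound its query cost, its accumulated multiplicative error, and its failure probability. First I would dispose of the degenerate regime $\eps \leq \frac{\sqrt{d}\log^{9/2}n}{n^{3/4}}$: here I just query $t(\{a\},\{b\},\{c\})$ for all $\binom{n}{3}$ unordered triples, which recovers $t(G)$ exactly using $\Oh(n^3)$ queries, and this is within budget because $\eps \leq \frac{\sqrt{d}\log^{9/2}n}{n^{3/4}}$ gives $\frac{1}{\eps^4}\geq\frac{n^3}{d^2\log^{18}n}$. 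So assume $\eps > \frac{\sqrt{d}\log^{9/2}n}{n^{3/4}}$, and fix $\lambda := \frac{\eps}{6\log n}$, the threshold $\tau := \max\left\{\frac{36\kappa_1^2d^2\log^4 n}{\eps^2},\frac{324\kappa_2^2d^2\log^4 n}{\eps^2}\right\} = \Theta\!\left(\frac{d^2\log^4 n}{\eps^2}\right)$, and the cap $N := \frac{\kappa_3\log^{12}n}{\eps^2}$.

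The algorithm keeps an accumulator $\psi$ (initially $0$) and a list of weighted tripartite tuples (initially empty). Step~1 calls {\sc Threshold-Approx-Estimate}$(G,\tau,\eps)$ of Lemma~\ref{lem:exact_decide}; if it certifies $t(G)\leq\tau$ it already returns a $(1\pm\eps)$‑estimate and we halt, otherwise $t(G)>\tau$. Step~2 colours $V(G)$ with $3$ colours and inserts the single tuple $(A,B,C,9/2)$ (the scaling of Lemma~\ref{theo:sparse} with $k=1$); since $t(G)>\tau$, Remark~\ref{rem:gen-spar}(i) makes this step a $(1\pm\eps_0)$ perturbation with $\eps_0=\frac{\kappa_1d\log n}{\sqrt{t(G)}}\leq\lambda$. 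Then I repeat, until the list empties: (Step~4) for each tuple $(A,B,C,w)$ use Lemma~\ref{lem:decide_exact} to test $t(A,B,C)\leq\tau$, and if so compute $t(A,B,C)$ exactly by Lemma~\ref{lem:exact}, add $w\cdot t(A,B,C)$ to $\psi$, and delete the tuple; (Step~5) let $r$ be the number of survivors, each with $t(A,B,C)>\tau$, and branch to Step~6 if $r>10N$, else to Step~7; (Step~6) coarsely estimate each survivor within a $\Oh(\log^2 n)$ factor by Lemma~\ref{lem:coarse_main}, then run the importance‑sampling routine of Lemma~\ref{lem:importance} with $\rho=64\log^2 n$, $\lambda=\frac{\eps}{6\log n}$, $\delta=n^{-10}$, which shrinks the list to $\leq N$ tuples while changing the total weighted triangle count by at most a $(1\pm\lambda)$ factor; (Step~7) split each of $A,B,C$ into three random parts and replace $(A,B,C,w)$ by the three tuples $(U_i,V_i,W_i,9w)$ (the scaling of Lemma~\ref{theo:sparse1} with $k=3$). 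When the list is empty, output $\psi$.

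For correctness I would track the number of \emph{active} triangles $\sum_i t(A_i,B_i,C_i)$ over the current list. It starts at $\leq t(G)\leq n^3$; Step~6 only retains a sub‑collection so cannot increase it, and by Lemma~\ref{theo:sparse1} (with $k=3$, each survivor having $t(A,B,C)>\tau$) together with Step~7 it at least halves on each Step~7 execution. Hence there are only $\Oh(\log n)$ rounds, after which a constant number of active triangles remain and are harvested exactly by Step~4. The only multiplicative errors are the one‑off Step~2 perturbation ($1\pm\eps_0\leq1\pm\lambda$) and, per round, either Step~6's importance sampling ($1\pm\lambda$) or Step~7's sparsification ($1\pm\eps_0\leq1\pm\lambda$, valid because every survivor has $t(A,B,C)>\tau\geq\frac{324\kappa_2^2d^2\log^4 n}{\eps^2}$, so Remark~\ref{rem:gen-spar}(ii) applies). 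Composing $\Oh(\log n)$ factors each inside $[1-\lambda,1+\lambda]$ with $\lambda=\Theta(\eps/\log n)$ keeps the product inside $[1-\eps,1+\eps]$, so $\psi$ is a $(1\pm\eps)$‑approximation of $t(G)$.

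For the query count: Step~1 costs $\Oh\!\left(\frac{\tau\log^2 n}{\eps^2}\right)$; Steps~2,3,5,7,8 make no queries; Step~4 makes $\Oh(\tau\log n)$ queries per tuple over $\Oh(N)$ tuples; Step~6 makes $\Oh(\log^4 n)$ per tuple over $\Oh(N)$ tuples; and there are $\Oh(\log n)$ rounds, so the total is $\Oh\!\left(\frac{\tau\log^2 n}{\eps^2} + \tau N\log^2 n + N\log^5 n\right)$, which with $\tau=\Theta\!\left(\frac{d^2\log^4 n}{\eps^2}\right)$ and $N=\Theta\!\left(\frac{\log^{12}n}{\eps^2}\right)$ is dominated by $\Oh\!\left(\frac{d^2\log^{18}n}{\eps^4}\right)$ many \tis queries. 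For the failure probability I would union‑bound over Step~1 ($n^{-10}$), Step~2 ($2n^{-4}$), and the $\Oh(\log n)$ rounds of Steps~6 and~7, each round failing with probability $\Oh(N/n^8)+n^{-10}$; since $\eps>\frac{\sqrt d\log^{9/2}n}{n^{3/4}}$ forces $\frac{1}{\eps^2}<n^{3/2}$, those per‑round terms are $o(n^{-4})$ and the whole sum is $\Oh(1)/n^2$. The main obstacle is this three‑way calibration: $\tau$ must be large enough that each sparsification's additive slack $\kappa d k^2\sqrt{t}\log n$ is at most a $\lambda$‑fraction of the count it acts on (Remark~\ref{rem:gen-spar}), $\lambda=\Theta(\eps/\log n)$ small enough that the $\Oh(\log n)$‑fold composition stays within $1\pm\eps$, and $\tau,N$ small enough that $\Oh(\tau N\log^2 n)$ does not overshoot $\Oh(d^2\log^{18}n/\eps^4)$ — and one must still check that repeatedly capping the tuple population at $\Oh(N)$ via Lemma~\ref{lem:importance} does not let the per‑round errors compound, which is exactly why the importance‑sampling error parameter is taken equal to the sparsification error parameter $\lambda$.
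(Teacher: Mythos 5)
Your proposal does not address the statement you were asked to prove. The statement is the Importance Sampling lemma (Lemma~\ref{lem:importance1}): given $r$ structures $(D_i,w_i,e_i)$ whose true values $c(D_i)$ are unknown but bracketed within a factor $\rho$ by the estimates $e_i$, produce a much smaller reweighted collection of size $s=\cO\bigl(\lambda^{-2}\rho^4\log M(\log\log M+\log\frac{1}{\delta})\bigr)$ whose weighted sum of true values deviates from $S=\sum_i w_i c(D_i)$ by at most $\lambda S$ with probability $1-\delta$, in time $\cO(r)$. What you have written instead is the assembly and analysis of the final eight-step triangle-estimation algorithm, i.e., a proof of Theorem~\ref{theo:main-restate}, in which this importance sampling lemma is invoked as a black box in Step~6. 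You therefore use the very statement you were supposed to prove as an ingredient, and supply no argument for it.

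A proof of the lemma itself would have to construct the subsampling procedure and control its error: for instance, use the proxy weights $w_ie_i$ to decide, for each $i$, a retention probability $p_i$ (keeping heavy items deterministically and subsampling light items), set $w_i'=w_i/p_i$ so that the weighted sum of true values is preserved in expectation, and then bound the deviation via a Chernoff/Bernstein-type inequality, where the factor $\rho^4$ enters because the sampling probabilities are computed from the $\rho$-accurate estimates $e_i$ rather than from $c(D_i)$, and the $\log M(\log\log M+\log\frac{1}{\delta})$ factor comes from bucketing the items into geometrically spaced weight classes and union-bounding over the buckets. None of this machinery appears in your write-up. (For what it is worth, the paper also does not prove this lemma --- it imports it verbatim from Beame et al.~\cite{BeameHRRS18} --- but a blind proof attempt still needs to engage with the subsampling argument rather than with the downstream algorithm that consumes it.)
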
  
\remove{
\section{Lower bound justification for stronger query oracles}
\label{sec:append-tis-lowerbound}
\paragraph*{Triangle existence query:} A triplet of vertices $u,v,w \in V(G)$ is given as input and the oracle answers whether there exists a triangle with $u,v,w$.

\begin{obs}
\label{obs:lb}
Any multiplicative approximation algorithm that estimates the number of triangles in a graph $G$ such that 
$\Delta \leq d$, requires
\begin{itemize}
\item[(a)] $\Omega\left( \frac{n}{t(G)}\right)$ queries if $d \leq 2$,
\item[(b)] $\Omega\left( \frac{n}{t(G)^{1/3}}\right)$ queries if $1 \leq t(G) \leq {d \choose 3}$ and $  3 \leq  d \leq  n $,
\item [(c)] $\Omega\left( \frac{d^2n}{t(G)} \right)$ queries if $ t(G) > {d \choose 3}  $ and $3 \leq d \leq   n $;
\end{itemize}
 where the allowable queries are edge existence, triangle existence, degree and neighbor query.
\end{obs}
\begin{proof}
\begin{figure}[h!]
  \centering
  \includegraphics[width=1.0\linewidth]{lowerbound}
  \caption{Lower bound construction for Observation~\ref{obs:lb}}
\label{fig:lb}
\end{figure}
The proof idea is motivated by~\cite{EdenLRS15}. For every $n$ and every $d$ as above, let $G_1$ be a graph on $n$ nodes having no edges and $\cG_2$ be a family of graphs on $n$ nodes. Any two graphs in $\cG_2$ differ 
only in labeling of the vertices. Note that $t(G_1)=0$ and we take $\cG_2$ such that $t(G)=\theta(t)$ for each $G \in \cG_2$ and for some $t \in \N$. Our strategy is to show that we can not distinguish whether the input is $G_1$ or some graph in $\cG_2$ unless we make \emph{sufficient} number of queries. We will design $\cG_2$ differently for each one of the cases below. 
\begin{itemize}
\item[(a)] Assume that $\lfloor \frac{t}{d} \rfloor (d+2)  < n$. Otherwise, the lower bound is trivial. Take $\cG_2$ to be a family
of graphs satisfying the following. In $\cG_2$, each graph $G$ consists of~(see Figure~\ref{fig:lb} (a))
\begin{itemize}
\item  $\lfloor \frac{t}{d}\rfloor$ many vertex disjoint components $H_1,\ldots,H_{\lfloor \frac{t}{d}\rfloor}$ such that each $H_i$ has $d+2$ vertices and $d$ many triangles sharing an edge,
\item an independent set of $n-\lfloor \frac{t}{d} \rfloor (d+2) $ vertices.
\end{itemize} 

Note that the number of vertices participating in any triangle in any $G \in \cG_2$ is at most $\lfloor \frac{t}{d} \rfloor (d+2)$. Unless we hit 
such a vertex, we can not distinguish whether the input is $G_1$ or some graph in $\cG_2$. The probability of hitting such a 
vertex in a graph selected uniformly from $\cG_2$ is at most $\frac{\lfloor \frac{t}{d} \rfloor (d+2)}{n}$. Hence, the number of 
queries required to distinguish between two input cases is at least $\frac{n}{\lfloor \frac{t}{d} \rfloor (d+2)} = \Omega\left(
\frac{n}{t}\right)=\Omega\left(
\frac{n}{t(G)}\right)$.

\item[(b)] Take $\cG_2$ to be the class of graphs where each $G \in \cG_2$ contains a clique of size 
$\lfloor t^{1/3}\rfloor$ and an independent set of size $n - \lfloor t^{1/3}\rfloor$~(see Figure~\ref{fig:lb} (b)). Observe that $G$ satisfies $t(G) =\theta(t)$ and $\Delta \leq d$ as $t(G) \leq {d \choose 3}$. Using a similar argument as in proof of (a), $\Omega\left( \frac{n}{t^{1/3}}\right)$ queries are required to decide whether the input graph is $G_1$ or some graph in $\cG_2$.
\item[(c)] Assume that ${\lfloor \frac{t}{{{d} \choose {3}}} \rfloor d} < n$. Otherwise, the claimed bound trivially holds. Take $\cG_2$ to be a class of graph where each graph $G \in \cG_2$ consists of~(see Figure~\ref{fig:lb} (c))
\begin{itemize}
\item $\lfloor \frac{t}{{{d} \choose {3}}} \rfloor$ many vertex disjoint cliques each of size $d$,
\item an independent set of size $n-\lfloor \frac{t}{{{d} \choose {3}}} \rfloor d$.
\end{itemize}
Using a similar argument as in proof of (a), one can show that the number of queries required to decide whether the input graph is $G_1$ or some graph in $\cG_2$ is at least $\frac{n}{\lfloor \frac{t}{{{d} \choose {3}}} \rfloor d}=\Omega\left( \frac{d^2n}{t}\right)=\Omega\left( \frac{d^2n}{t(G)}\right)$.
\end{itemize}
\end{proof}

}

\remove{
\remove{\section{List of  standard queries}
\label{app:query}
In the following, we give the detailed input output structure of degree query, neighbor query, edge existence query and triangle existemce query. Let $G$ be a graph whose vertex set $V(G)$ is known to us, but the edge set $E(G)$ is unknown to us. An oracle stores the graph using some data structure.
\begin{itemize}
\item[(i)]{\bf Degree query:} A vertex $v \in V(G)$ is given as input and the oracle gives the degree of $v$ in $G$.
\item[(ii)]{\bf Neighbor query:} A pair $(v ,i) \in V(G) \times [n]$ is given as input and the oracle gives the $i^{\mbox{th}}$ neighbor of $v$ if the degree of $v$ is less than $i$ and $\bot$, otherwise.
\item[(iii)]{\bf Edge existence query:} A pair of vertices $u,v \in V(G)$ is given as input and the oracle answers whether $(u,v)$ is an edge in $G$.
\item[(iv)]{\bf Triangle existence query:} A triplet of vertices $u,v,w \in V(G)$ is given as input and the oracle answers whether there exists a triangle with $u,v,w$.
\end{itemize}
}

\normalsize

\section{Missing proofs}
\label{app:proof}

\remove{After sparsification, we have to determine the values of many $t(A,B,C)$'s. If $t(A,B,C)$ is small, we use the deterministic algorithm coresponding to Lemma~\ref{lem:exact}. If $t(A,B,C)$ is large, we again sparsify the graph using Lemma~\ref{theo:sparse1}.
\begin{lem}
\label{lem:exact}
There exists a deterministic algorithm that can compute $t(A,B,C)$ exactly using $9t(A,B,C) \log n$ \tis queries, where $A,B,C \subseteq V$ are pairwise disjoint.
\end{lem}}

\begin{proof}[Proof of Lemma~\ref{lem:coarse_main}]

Note that an execution of \cest for a particular $\hat{t}$, repeats \verest for $\Gamma =2000 \log n$ times 
and gives output $\hat{t}$ if at least $\frac{\Gamma}{10}$ many \verest accepts. For a particular $\hat{t}$, 
let $X_i$ be the indicator random variable such that $X_i=1$ if and only if the  $i^{th}$ execution of \verest 
accepts. Also take $X=\sum_{i=1}^\Gamma X_i$. \cest gives output $\hat{t}$  if $X > \frac{\Gamma}{10}$.

Consider the execution of \cest for a particular $\hat{t}$. If $\hat{t}  \geq 32 t(A,B,C) \log ^3 n$, we first show that \cest accepts with probability $1-\frac{1}{n^5}$. Recall Lemma~\ref{lem:coarse}. If $\hat{t} \geq 64t(A,B,C) \log ^3 n$, $\pr(X_i =1) \leq \frac{1}{20}$ and hence $\E[X] \leq \frac{\Gamma}{20}$. By using Chernoff-Hoeffding's inequality~(See (i) of Lemma~\ref{lem:cher_bound} in Appendix~\ref{sec:prelim}), $\pr \left(X > \frac{\Gamma}{10} \right)=\pr\left( X > \frac{\Gamma}{20} + \frac{\Gamma}{20}\right) \leq \frac{1}{n^{10}}$. By using the union bound for all $\hat{t}$, the probability that \cest outputs some $\hat{t} \geq 16t(A,B,C) \log ^3 n$, is at most $\frac{3 \log n}{n^{10}}$. 

Now consider the instance when the for loop in \cest executes for a $\hat{t}$ such that $\hat{t} \leq \frac{t(A,B,C)}{ 32 \log  n}$. In this situation, $\pr(X_i=1) \geq \frac{1}{5}$. So, $\E[X] \geq \frac{\Gamma}{5}$. 
By using Chernoff-Hoeffding's inequality~(See (ii) of Lemma~\ref{lem:cher_bound} in Appendix~\ref{sec:prelim}), $\pr\left(X \leq \frac{\Gamma}{10} \right) \leq \pr \left( X < \frac{3\Gamma}{20}\right) = \pr\left(X < \frac{\Gamma}{5} -\frac{\Gamma}{20} \right) \leq \frac{1}{{n^{10}}}$.  By using the union bound for all $\hat{t}$, the probability that \cest outputs some $\hat{t} \leq \frac{t(A,B,C)}{ 16 \log  n}$, 
is at most $\frac{3 \log n}{n^{10}}$.  

Observe that, \cest gives output $\hat{t}$ that satisfies either $\hat{t}\geq 64 t(A,B,C)\log ^3 n$ or $\hat{t} \leq \frac{t(A,B,C)}{32 \log n}$ is at most $\frac{3\log n}{n^{10}} +\frac{3\log n}{n^{10}} \leq \frac{1}{n^9}$.

Putting everything together, \cest gives some $\hat{t}$ as output with probability at least $1-\frac{1}{n^9}$ satisfying $ \frac{t(A,B,C)}{64 \log n} \leq \hat{t}\leq 64  t(A,B,C) \log^3 n.$

From the description of \verest and \cest, the query complexity of \verest is $\Oh(\log ^2 n)$ and \cest calls \verest $\Oh(\log^2 n)$ times. Hence, \cest makes $\Oh(\log ^4 n)$ many queries. 
\end{proof}}



\bibliographystyle{alpha}
\bibliography{reference}

\end{document}